\newtheorem{theorem}{Theorem}
\newtheorem{lemma}{Lemma}
\newtheorem{proposition}[theorem]{Proposition}
\newcommand{\mev}{\mbox{MeV}}
\newcommand{\kev}{\mbox{keV}}
\newcommand{\psip}{\psi^{\prime}}
\newcommand{\jpsi}{J/\psi}
\newcommand{\TT}{\tau^+\tau^-}
\newcommand{\thetast}{\theta^{\ast}}
\newcommand{\phist}{\phi^{\ast}}
\newcommand{\jst}{J_{\ast}}
\newcommand{\rst}{r^{\ast}}
\newcommand{\sigmast}{\sigma^{\ast}}
\newcommand{\taums}{m_{\tau}}
\newcommand{\sigbg}{\sigma_{BG}}
\newcommand{\beq}{\begin{equation}}
\newcommand{\eeq}{\end{equation}}
\def\eref#1{(\ref{#1})}
\def\lmref#1{{Lemma \ref{#1}}}
\def\thref#1{{Theorem \ref{#1}}}
\newsavebox{\arrect}
\newsavebox{\arrhomb}
\newsavebox{\arrparall}
\newsavebox{\arrparalla}
\begin{document}

\title{Theory of second optimization for scan experiment }

\author{X.H.Mo\footnote{E-mail:moxh@ihep.ac.cn}\\
{\small Institute of High Energy Physics, Chinese Academy of
    Sciences, Beijing 100049, China } }

\date{\today}
\maketitle

\begin{abstract}
In many high energy experiments, the physics quantities are obtained by measuring the cross sections at a few energy points over an energy region. This was referred to as scan experiment. The optimal design of the scan experiment (how many energy points, what the energies are, and what is the luminosity at each energy point) is of great significance both for scientific research and from economical viewpoint. Two approaches, one has recourse to the sampling technique and the other resorts to the analytical proof, are adopted to figure out the optimized scan scheme for the relevant parameters. The final results indicate that for $n$ parameters scan experiment, $n$ energy points are necessary and sufficient for optimal determination of these $n$ parameters; each optimal position can be acquired by single parameter scan (sampling method), or by analysis of auxiliary function (analytic method); the luminosity allocation among the points can be determined analytically with respect to the relative importance between parameters. By virtue of the second optimization theory established in this paper, it is feasible to accommodate the perfectly optimal scheme for any scan experiment.
\end{abstract}

\noindent
{\bf PACS} \hskip 0.25cm: 87.55.de; 87.55.kh; 13.66.Jn \\
\noindent
{\bf Key words} \hskip 0.25cm scan experiment, sampling simulation, convex optimization

\section{Introduction}

Scan method is a useful tool for various kinds of studies in domain of high energy physics.
Firstly, it plays an important role in the discovery of new resonances, the most famous one is $\jpsi$ that leads to ``November revolution'' in particle physics~\cite{Aubert:1974js,Augustin:1974xw}. Secondly, scan experiments can provide lots of accurate information related to particles, such as accurate measurement of the $\tau$ lepton mass~\cite{tau1992,tau1996}, accurate measurement of the Z resonance parameters~\cite{Barate:1999ce,Acciarri:2000ai,Abreu:2000mh,Abreu:2000mh}, and so forth.
Thirdly, scan measurement can add a lot to understand the present theory, such as R-value measurement~\cite{Bai:1999pk,Bai:2001ct} and phase angle measurement~\cite{Wang:2003zxa}, both of which are crucial for quantum chromodynamics researches.

A scan searching experiment is always intriguing and exciting. However, the scan scope is usually unpredictable large, since no one knows where the new particle will jump up. Therefore, it is fairly reasonable to show respect to pioneers for their bravery and diligence in looking for a needle in a haystack. Anyway, the situation changes into another direction step by step. Nowadays, more and more particles are discovered, and luminosity of accelerator becomes higher and higher, the scan experiments begin to play a new role in physics study, especially for the high precision measurement. The high precision measurement will help us to understand the existing theory more profound, and also helpful for the new discovery during the progress of accuracy improvement. However, since scan experiment is usually performed at many energy points, the optimal choice of energy position and luminosity distribution at each point becomes a more and more prominent issue, which directly relates to the efficiency of data taking procedure.

Scan optimization is not a trivial affair. For example, during the statistical optimization study for $\tau$ mass scan, it is found that one energy point is enough for one parameter fit~\cite{wangyk2007}. Further more, the successive studies~\cite{wangyk2009,wangbq2012,wangbq2013} indicate that for $n$ free parameters fit, $n$ scan points are enough to give optimal results. As a matter of fact, the fewer the points, the more efficiently the accelerator works, since lots of tuning time can be dispensed with. On this extent, the optimization theory that figures out the minimal number of points is of great importance for practical data taking design of scan experiment.

The theory of second optimization for scan experiment, which is depicted in following sections, will  accommodate perfect scheme for scan experiments that aim at accurate measurements of interesting parameters. This paper begins by in Sect.~\ref{sxn:notion} providing the concept of the second optimization that is the kernel of following study. The sampling method is adopted to explore optimal scan scheme in Sect.~\ref{sxn:samplingmed}, where $\tau$ mass measurement is used as a concrete example. In Sect.~\ref{sxn:theory}, the analytical theory of second optimization is established on the basis of elementary knowledge about numerical optimization. Section~\ref{sxn:discussion} devotes to some discussions involving the equivalence between likelihood and chisquare fits, optimal effect due to systematic uncertainty, correlation problem of systematic uncertainty, multiple solution issue related to the objective function, merits of the sampling method and the analytical theory. Finally, key conclusions are summarized in Sect.~\ref{sct:sumary}.

\section{Notion of Second Optimization}\label{sxn:notion}
The chisquare form for scan experiment reads
\begin{equation}
\label{eq:chisq}
\chi^2 = \sum_{i=1}^{m} \left( \frac{N^{obs}_i - N^{th}_i} {\Delta^{obs}_i} \right)^2,
\end{equation}
where $i$ denotes the $i$-th scan point, and the total number of scan points is $m$. $N$ is the number of events that is classified into two categories: the observed number of events ($N^{obs}$) and the theoretical number of events ($N^{th}$). The relation between event number ($N$), luminosity ($L$), efficiency ($\epsilon$), and cross section ($\sigma$) is expressed as
\beq N=L \epsilon \sigma~. \label{eq:nlepsilon}\eeq
Generally speaking, for scan within large energy scope, the efficiency is energy dependent and distinctive at different scan point; for comparatively small scan scope, such as $\tau$ mass scan, $\jpsi$ and $\psip$ narrow resonances scan, the efficiency can treated as a constant, that is
\beq \epsilon_i =\epsilon, i=1,2,\cdots,m. \label{eq:epsilon}\eeq
Such an assumption has essentially no effect on general conclusions obtained in this paper, and is always assumed in the study that follows. $L_i$ denotes the luminosity at the $i$-th point, the relation between $L_i$ and total luminosity ($L$) is as follows
\beq L_i =x_i L, \mbox{~~with~~}\sum\limits^{m}_{i=1} x_i=1. \label{eq:lilx}\eeq
Here $x_i$ denotes the luminosity allocation at point $i$. $\Delta^{obs}$ is the error of the observed number of events. As to a Poisson distribution,
\beq \Delta^{obs}=\sqrt{N^{obs}}~, \label{eq:deltan}\eeq
the form of which is adopted in the following study.

The observed cross section can be measured through the observed number of events by relation \eref{eq:nlepsilon}. The theoretical cross section is usually acquired on the basis of present theoretical calculations that involve some parameters, which can be obtained by fitting experimental data. Mathematically,
\beq \sigma^{th} =\sigma (\theta), ~\theta = (\theta_1, \theta_2, \cdots, \theta_n)^T. \label{defthetavec}\eeq
Here $\theta$ is the parameter vector, there are totally $n$ parameters. $T$ indicates transpose of vector or matrix. In addition, the luminosity allocation vector is also introduced and defined as \beq x = (x_1, x_2, \cdots, x_m)^T. \label{defxvec}\eeq
For convenience, the observed cross section is denoted as $\bar{\sigma}$, that is $\bar{\sigma}=\sigma^{obs}$; the theoretical cross section is denoted as $\sigma$ or $\sigma(\theta)$, the latter is used to stress the dependence of cross section on parameters. In a word, $\chi^2$ can be recast as
\beq
\chi^2 (\theta,x) =L \epsilon \cdot \sum_{i=1}^{m} \frac{x_i}{\bar{\sigma}_i} \left[ \bar{\sigma}_i  - \sigma_i(\theta) \right]^2.
\label{chisqfm}\eeq
In above expression, parameters $\theta$ and $x$ denote the optimal problem we want to study. For certain $x$, the minimization of the $\chi^2$ leads to a set of optimal parameters, which are denoted as $\thetast$. This optimal process is the usual one in experimental data analysis, which is called the {\bf first optimization}. Obviously, errors of $\thetast$ depend on the values of $x$. Therefore, under the constraint of certain total luminosity or $\sum\limits^{m}_{i=1} x_i=1$, the optimization on $x$ is performed in order to obtain the smallest errors of $\thetast$. This optimization on $x$ is called the {\bf second optimization}.

In the following sections, the sampling method is utilized to study the second optimization firstly. The $\tau$ mass scan is taken as an example, due to the simplicity of which, the essence of the sampling method is exhibited pedagogically. Then, the analytical theory based on optimization principle is established, which settles the issue of scheme design for scan experiment thoroughly and perfectly.

A remark is in order here. For the $\tau$ mass scan, the conventional likelihood estimator is adopted, which is equivalent to the chisquare estimator for the first optimization (refer to subsection \ref{sxteqlkandsq}). As far as the second optimization is concerned, whichever form of estimator is chosen is actually irrelevant, since they are only relevant to the first optimization.

\section{Sampling Method}\label{sxn:samplingmed}
For the $\tau$ mass ($\taums$) scan, several points, say totally $N_{pt}$ points need to be taken in the vicinity of $\taums$ threshold. By virtue of analyzed data, the following likelihood
function is constructed~\cite{tau1992,tau1996,tau2,Ablikim:2014uzh}:
\begin{equation}
LF =\prod\limits_i^{N_{pt}} \frac{\mu_i^{N_i} e^{-\mu_i} }{N_i!}~ ,
\label{lklihd}
\end{equation}
where $N_i$ is the observed number of $\tau^{+}\tau^{-}$ events
obtained by $e \mu$-tagged final state (here the $e \mu$ channel means $\tau^+ \to e^+ \nu_e \bar{\nu}_{\tau}, \tau^- \to \mu^- \bar{\nu}_{\mu} \nu_{\tau}$, or
 $\tau^+ \to \mu^+ \nu_{\mu} \bar{\nu}_{\tau}, \tau^- \to e^- \bar{\nu}_e \nu_{\tau}$) at $i$-th scan point. Here $N_i$ is assumed obeying a Poisson
distribution, whose expectation $\mu_i$ is given by
\begin{equation}
 \mu_i(m_{\tau})=[\epsilon \cdot B_{e\mu} \cdot
\sigma^{obs} (m_{\tau},E^i_{cm})+\sigbg] \cdot {\cal L}_i~~.
\label{mudef}
\end{equation}
In Eq.~\eref{mudef}, ${L}_i$ is the luminosity at the $i$-th point; $\epsilon$ is the overall efficiency of $e\mu$ final state for identifying $\tau^{+}\tau^{-}$ events, which includes
trigger efficiency and event selection efficiency; $B_{e\mu}$ is the combined branching ratio for decays $\tau^+ \rightarrow e^+ {\nu}_e \overline{\nu}_{\tau}$ and $\tau^- \rightarrow \mu^-
\overline{\nu}_{\mu} \nu_{\tau}$, or the corresponding charge conjugate mode; $\sigma^{obs}$ (with $\taums$ as a parameter), which can be calculated by the improved Voloshin's formulas~\cite{Voloshin}, is the observed cross section measured at point $i$ with center-of-mass energy $E^i_{cm}$; and $\sigbg$ is the total cross section of background channels after $\tau^{+}\tau^{-}$ selection. If $\taums$ is set as a free parameter, the minimization of $LF$ in
Eq.~\eref{lklihd} yields the best estimation for $\taums$.

Besides $\taums$, $\epsilon$ and $\sigbg$ can be free parameters as well. The sampling technique is utilized to figure out the optimal scan scheme for one-($\taums$), two-($\taums$ and $\epsilon$ ), and three-($\taums$, $\epsilon$, and $\sigma_{BG}$) parameter fit step by step~\cite{wangyk2007,wangyk2009}.

\subsection{One parameter optimization}
Herein to achieve high precision of $m_{\tau}$ we want to find out:
\begin{enumerate}
\item What is optimal distribution (position) of data taking points;
\item How many energy points are needed for scan in the vicinity of threshold;
\item How much luminosity is required for certain precision expectation.
\end{enumerate}
In the following study concerned with statistical uncertainty, taken are efficiency $\epsilon=14.2\%$~\cite{tauxc}, energy spread $\Delta$ = 1.4 MeV~\cite{tauxc}, $B_{e\mu} = 0.06194$~\cite{pdg06}, and neglected are the corresponding uncertainties whose effects are generally
small~\cite{wangyk2007}. As to $\sigma_{BG}$, the previous experience~\cite{tau1} indicates that $\sigma_{BG} \approx 0.024$ pb which is fairly small comparing with the $\TT$ production
cross section (0.1 nb) near threshold. Moreover, for a high luminosity accelerator, a large data sample can be taken below the threshold to measure $\sigma_{BG}$ accurately. In actual fit as a constant, $\sigma_{BG}$ has tiny effect on the optimization of points distribution. Therefore, for one parameter optimization, $\sigma_{BG}$ is set to be zero, which means that the study is
background free.

In the following exploration, the value of $\taums$ itself is assumed to be known, which is set to be $m^0_\tau=1776.99~\mev$ according to PDG06~\cite{pdg06}, and under such an assumption, we attempt to answer three above questions. Nevertheless, when think twice about the first two questions, it is observed that they actually intertwist with each other, {\em i.e.} the optimal number of points depends on the distribution of points and {\em vice versa}. To resolve such a dilemma, we start from a simple distribution and find the optimal number of points, then based on which we finally determine the number of points.

\subsubsection{First searching}\label{sect_optmone}
As a tentative beginning, the energy interval to be studied
is divided evenly, viz.
\begin{equation}
E_i=E_0+(i-1)\times \delta E,\ \ \ \  (i=1,2,...,N_{pt})
\label{engdiv}
\end{equation}
where the initial point $E_0=3.545~\mbox{GeV}$, the final point $E_f=3.595~\mbox{GeV}$, and the fixed step $\delta E=(E_f-E_0)/N_{pt}$ with $N_{pt}$ being the number of energy points. For a given total luminosity (${L}$) it is also apportioned averagely at each point {\em i.e.} ${L}_i = {L}/N_{pt}$.

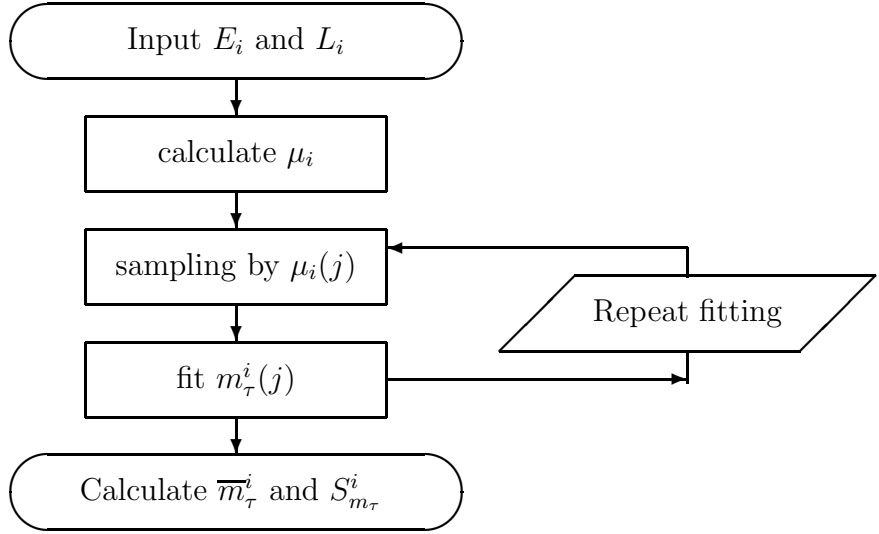
\begin{figure}[bth]
\center
\setlength{\unitlength}{0.5cm}
\begin{picture}(10,13.5)
\thicklines
\put(4.0,12.5){\oval(12.0,2)\makebox(0,0){Input $E_{i}$
    and ${L}_i$}}
\put(4.0,11.5){\vector(0,-1){1.0}}
\put(4.0,9.0){\usebox{\arrect}\makebox(0,1.){calculate $\mu_i$}}
\put(4.0,6.0){\usebox{\arrect}\makebox(0,1.){sampling by
$\mu_i(j)$}} \put(4.0,3.0){\usebox{\arrect}\makebox(0,1.){fit
$m^i_\tau(j)$}}
\put(16.0,5.25){\usebox{\arrparalla}\makebox(0,0.0){Repeat fitting}}
\put(16.0,6.2){\line(0,1){0.8}} \put(16.0,7.0){\vector(-1,0){8.0}}
\put(8.0,3.5){\vector(1,0){8.0}} \put(16.0,3.4){\line(0,1){0.8}}
\put(4.0,0.5){\oval(12.0,2)\makebox(0,0){Calculate
    $\overline{m}^i_\tau$ and $S^i_{m_\tau}$ } }
\end{picture}
\caption{\label{fig:flchart}Flow chart of sampling simulation, where $i$
($i=1,2, \cdots,N_{pt}$) indicates certain scheme and $j$
($j=1,2, \cdots,N_{samp}$) sampling times.}
\end{figure}

For each special scheme (that is for each $N_{pt}$), in order to reduce statistical fluctuation, the sampling is repeated many times (the sampling times is denoted as $N_{samp}$ ), the average value and corresponding variance of the fit out variables are worked out as
follows~\cite{Brandt} :
\begin{equation}
\overline{X}^i=\frac{1}{N_{samp}} \sum\limits_{j=1}^{N_{samp}}
X^i_{j}\ , \label{mbar}
\end{equation}
\begin{equation}
S_{X}^2(X^i)=\frac{1}{N_{samp}-1}\sum\limits_{j=1}^{N_{samp}} (X^i_{
j}-\overline{X}^i)^2 \ ~, \label{smtau}
\end{equation}
where X denotes the free fitting parameter which can be $\taums$,
$\epsilon$, and/or $\sigma_{BG}$. Here it should be noted that $i$
indicates the certain scheme, whose value can be 1 while $j$
indicates the sampling times which equals to 200 in the following
study. Without special declaration, the meaning of the average
defined by Eqs.~\eref{mbar} and \eref{smtau} will be kept in the
study that follows. The general flow chart of sampling and fitting
research is presented in Fig.~\ref{fig:flchart}.

For one-parameter optimization, $X=\taums$ and $N_{samp}=200$, using the experiment parameters given in the previous section, $\epsilon$, $\Delta$, and $B_{e\mu}$, setting ${ L}=30 \mbox{ pb}^{-1}$, and $N_{pt}$ ranging from 3 to 20, the fitted results are shown in Fig.~\ref{lmnptaradskn}(a).

\begin{figure}[htbp]
\begin{minipage}{6cm}
\includegraphics[height=5cm,width=6.cm]{./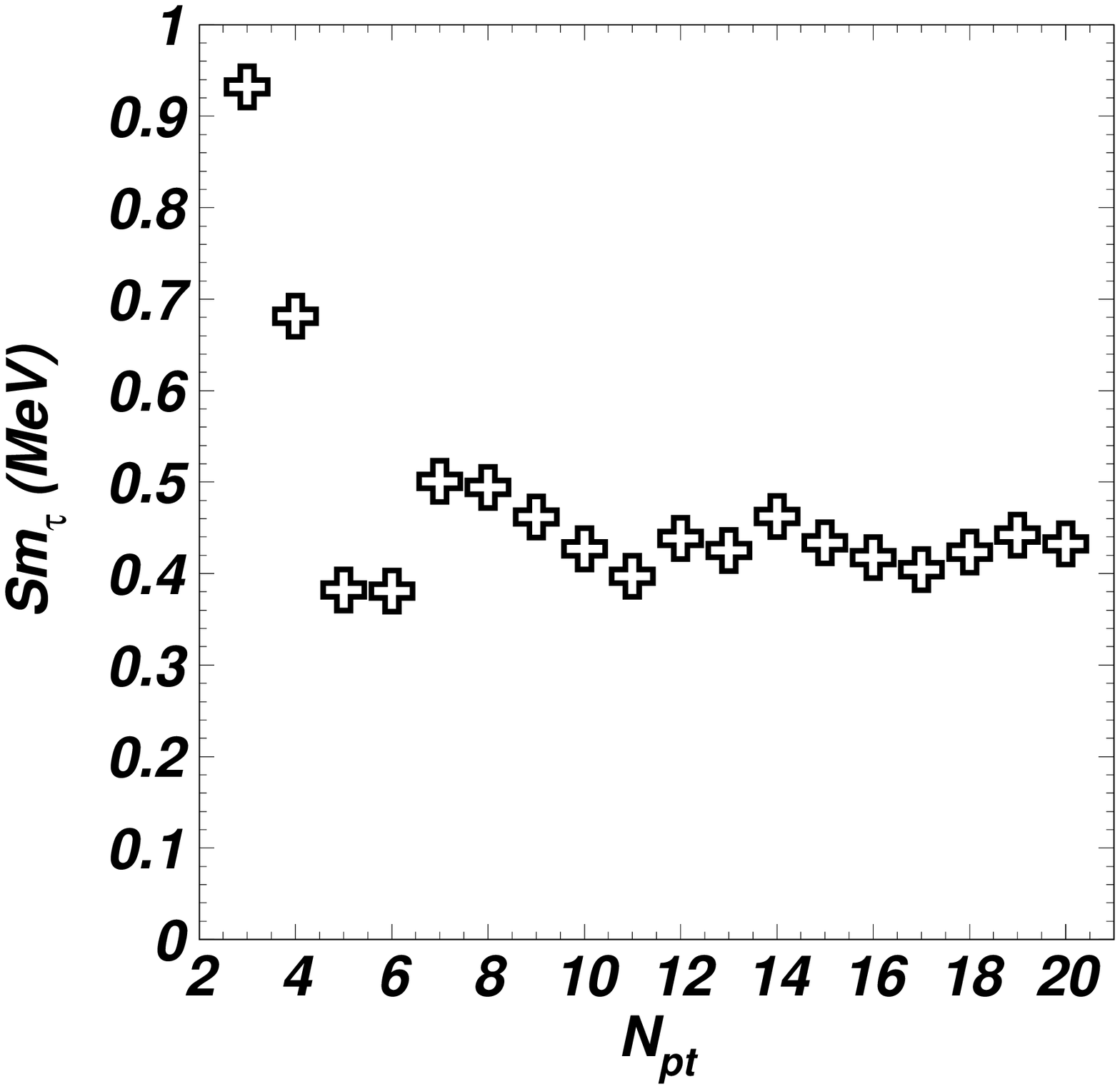}
\center (a) $S_{m_\tau}$ against $N_{pt}$
\end{minipage}
\hskip 2cm
\begin{minipage}{7cm}
\includegraphics[height=5cm,width=7.cm]{./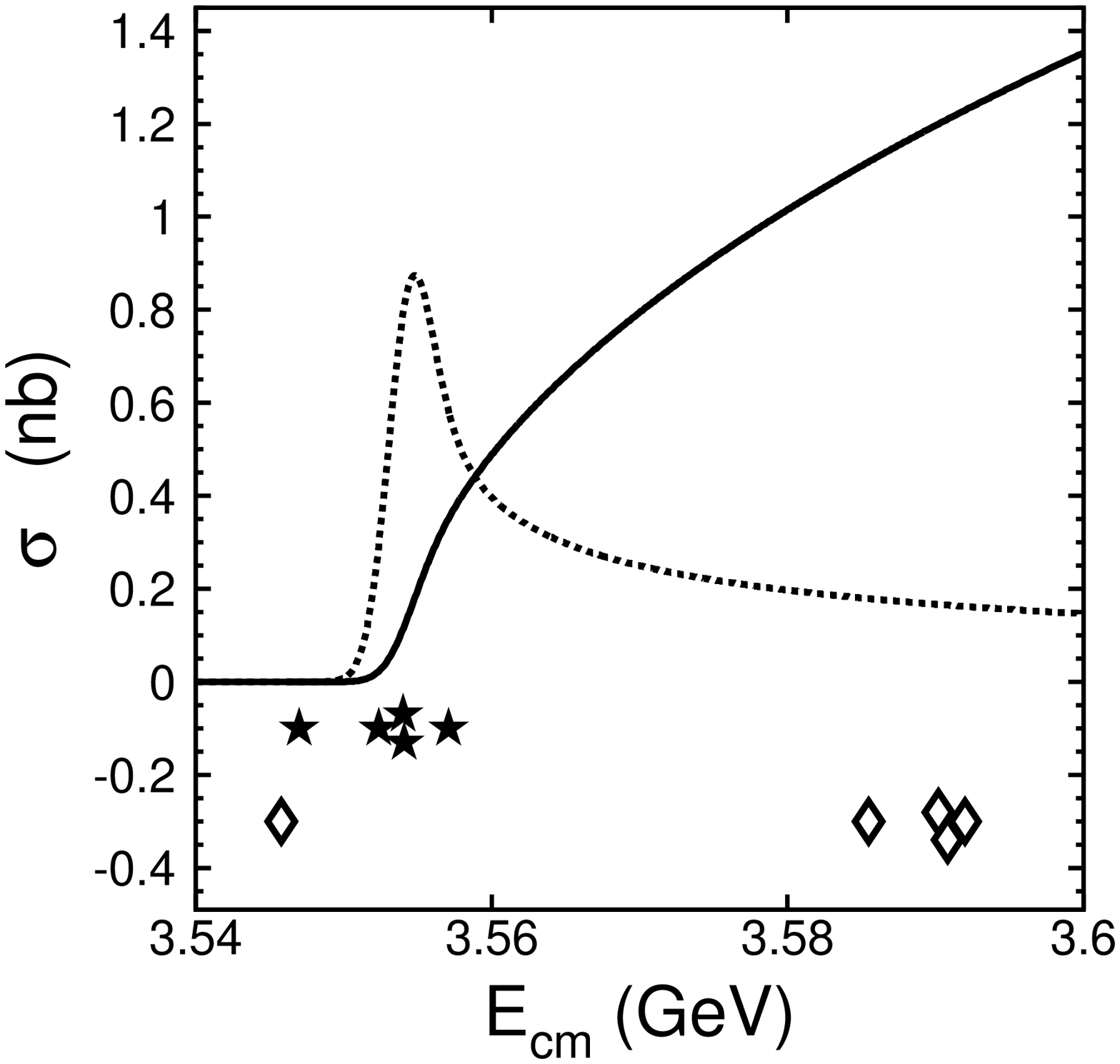}
\center (b) Two kinds of distributions
\end{minipage}
\caption{\label{lmnptaradskn}(a) shows the variation of $S_{\taums}$ against the number of points ${N}_{pt}$. (b) shows the distributions of data taking points with the smallest  and greatest $S_{\taums}$ denoted by stars and diamonds, respectively. The solid curve is the calculated observed cross section, and the dashed line the corresponding derivative of cross section to energy with a scale factor $10^{-2}$.}
\end{figure}

Here $S_{\taums}$ is the corresponding uncertainty of fitting value of $\taums$, which is adopted to assess the quality of fit, in another word, the smaller $S_{\taums}$ the better is the fit. It is prominent that too few data taking points lead to large uncertainty while too many points have no contribution for precision improvement either. As indicated in Fig.~\ref{lmnptaradskn}(a), $N_{pt}=5$ is taken as the optimized number of points for the evenly-divided-distribution scheme.

\subsubsection{Second searching}
With five points, we want to further search for the distribution
which can afford us the small fit uncertainty. As without any
theoretical or empirical guidance, various possibilities are
tried by employing the sampling technique, that is the
energy points is taken randomly in the chosen interval. For 200
times sampling, singled out are two fit results with the smallest
($S_{\taums}=0.152$ MeV, denoted by stars) and greatest
($S_{\taums}=1.516$ MeV, denoted by diamonds)
fit uncertainties; their distributions are shown in Fig.~\ref{lmnptaradskn}(b),
by virtue of which it is obvious when the points crowd near the
threshold the uncertainty is small; on the contrary, when the points
are far from the threshold the uncertainty becomes large.
More mathematically, it is found that the smallest uncertainty is
acquired when points gather at the region with the large derivative
of cross section to energy. So this result implies that the region
with large derivative is presumably the optimal position for data
taking. We try to prove this speculation next.

To hunt for the sensitive position, two regions are selected
as shown in Fig.~\ref{regdy}(a):
the region I ($E_{cm} \subset (3.553,~3.558)~\mbox{GeV}$ ) is selected
with the derivative falls to 75\% of its maxinum while
the region II ($E_{cm} \subset (3.565,~3.595)~\mbox{GeV}$ ) is selected
with the variation of derivative is comparatively smooth than that in region I.

\begin{figure}[htbp]
\begin{minipage}{6cm}
\includegraphics[height=5cm,width=6.cm]{./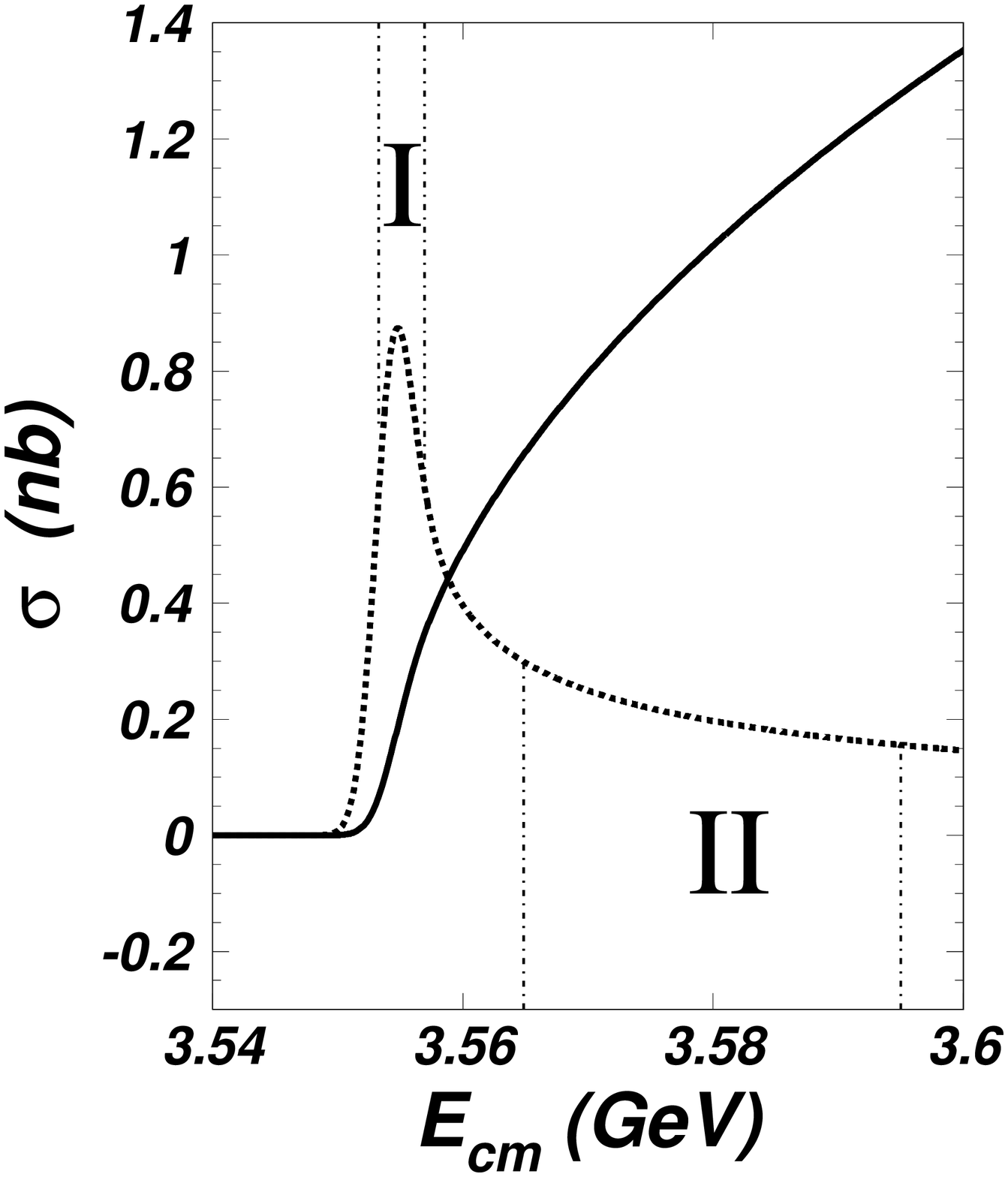}
(a) two energy regions
\end{minipage}
\hskip 2cm
\begin{minipage}{6cm}
\includegraphics[height=5cm,width=6.cm]{./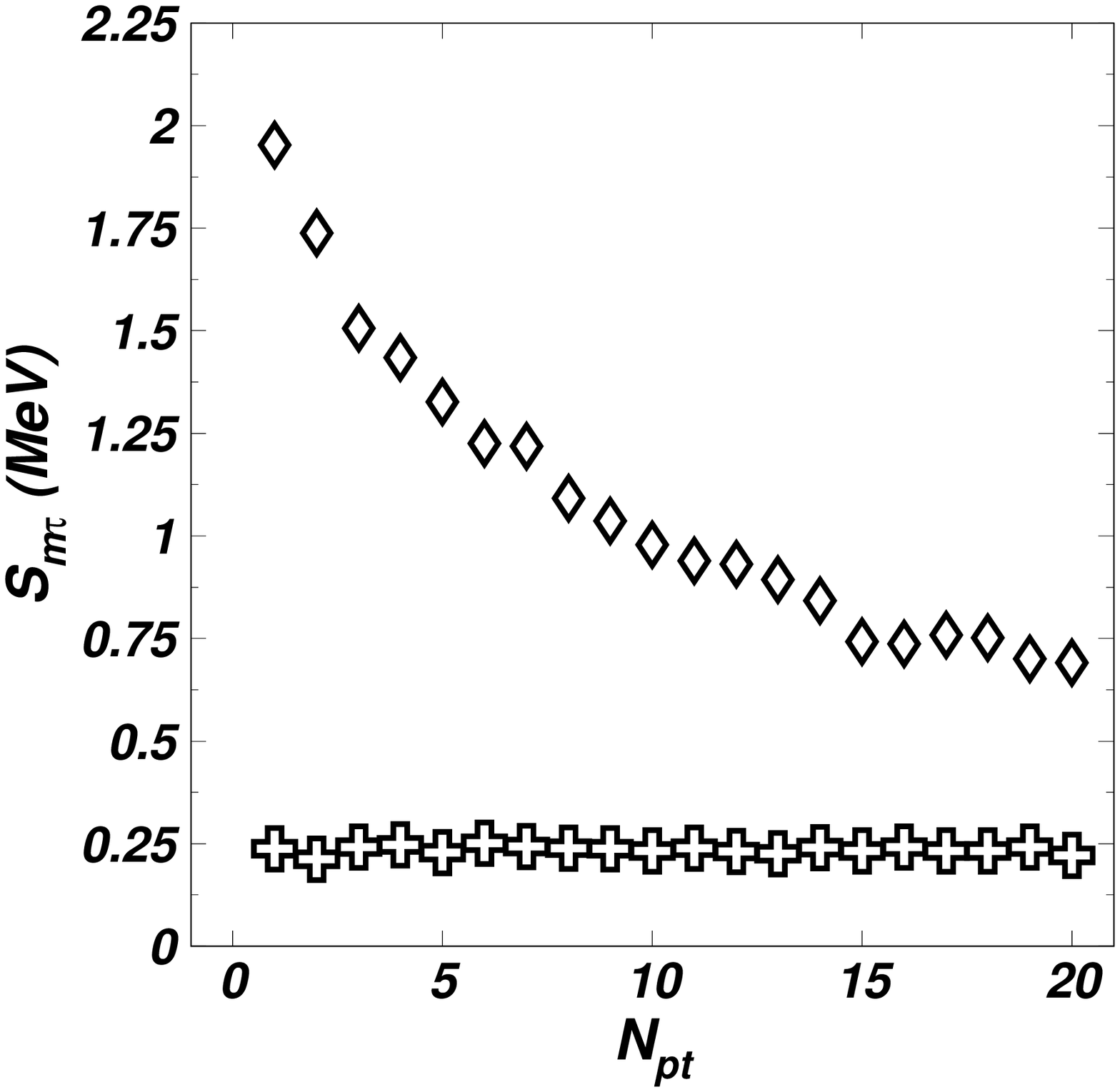}
(b) results for different scheme
\end{minipage}
\caption{\label{regdy}(a) Two subregions, denoted by $I$ and $II$,
with different derivative feature where the solid line denotes the observed
cross section and the dashed line the corresponding derivative
value with a scale factor of $10^{-2}$;
(b) the fit uncertainties for different schemes, crosses and
diamonds denote respectively the results for the first and
second schemes as depicted in the text.}
\end{figure}

To ascertain the aforementioned speculation, two schemes are designed. For the first scheme, two points are taken in the region $I$, one at 3.55398 GeV as the threshold point and
the other at 3.5548 GeV corresponding to the largest derivative point.
As in the region $II$, the number of points $N_{pt}$ increases from 1 to 20, with each point having luminosity 5 pb$^{-1}$. The fit results are displayed in Fig.~\ref{regdy}(b) by crosses. Clearly, the increase of points in the region $II$ hardly has the contribution
to accuracy improvement ($S_{\taums}=0.25$ MeV remains almost the same with the increasing number of points in region $II$). That is to say, the region $I$ is the sensitive region so far as
the fit uncertainty is concerned while the region $II$ is not. To prove this point further, for the second scheme, merely the points in the region $II$ are taken, $N_{pt}$ also increases from 1 to 20. The fit results are displayed in Fig.~\ref{regdy}(b) by diamonds. As expected, with the increasing number of points, $S_{\taums}$ decreases as well, but even with 20 points in the region
$II$ the value of $S_{\taums}=0.73$ MeV is still much larger than that with solely two points in the region $I$. Therefore it is concluded that the points within the region $I$ are more
useful for optimal data taking.

\subsubsection{Third searching}
In this subsection, the first thing needed to be known is how many points are optimal in the region with large derivative. As the procedure in subsection~\ref{sect_optmone}, the total luminosity
${L}=45$ pb$^{-1}$ is rationed averagely into $N_{pt}$ points ($N_{pt}=1,2, \cdots, 6$) within the energy region from 3.553 to 3.557~GeV. The results are shown in Fig.~\ref{smtauone}(a), according to
which the number of points has weak effect on the final uncertainty. In other words, within the large derivative region, one point suffices to give rise to small uncertainty. This is easy to understand since there is only one free parameter ($\taums$) needed to be fit in the $\TT$ production cross section, one measurement will further fix the normalization of the curve. The larger of the derivative, the more sensitive to the mass of the $\tau$ lepton.

\begin{figure}[htbp]
\begin{minipage}{5cm}
\includegraphics[height=5cm,width=5.cm]{./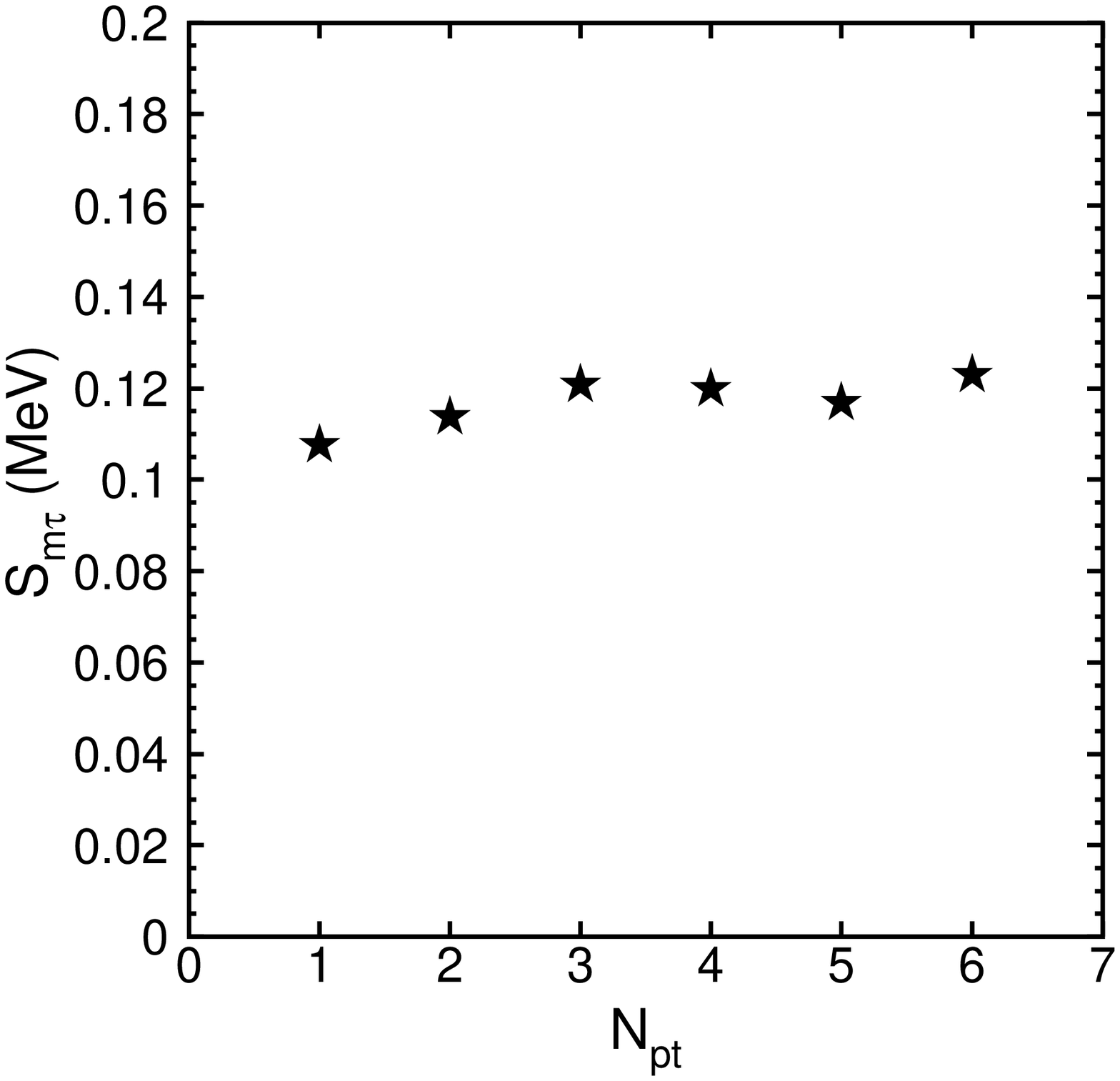}
(a) $S_{m_\tau}$ for different points
\end{minipage}
\begin{minipage}{5cm}
\includegraphics[height=5cm,width=5.cm]{./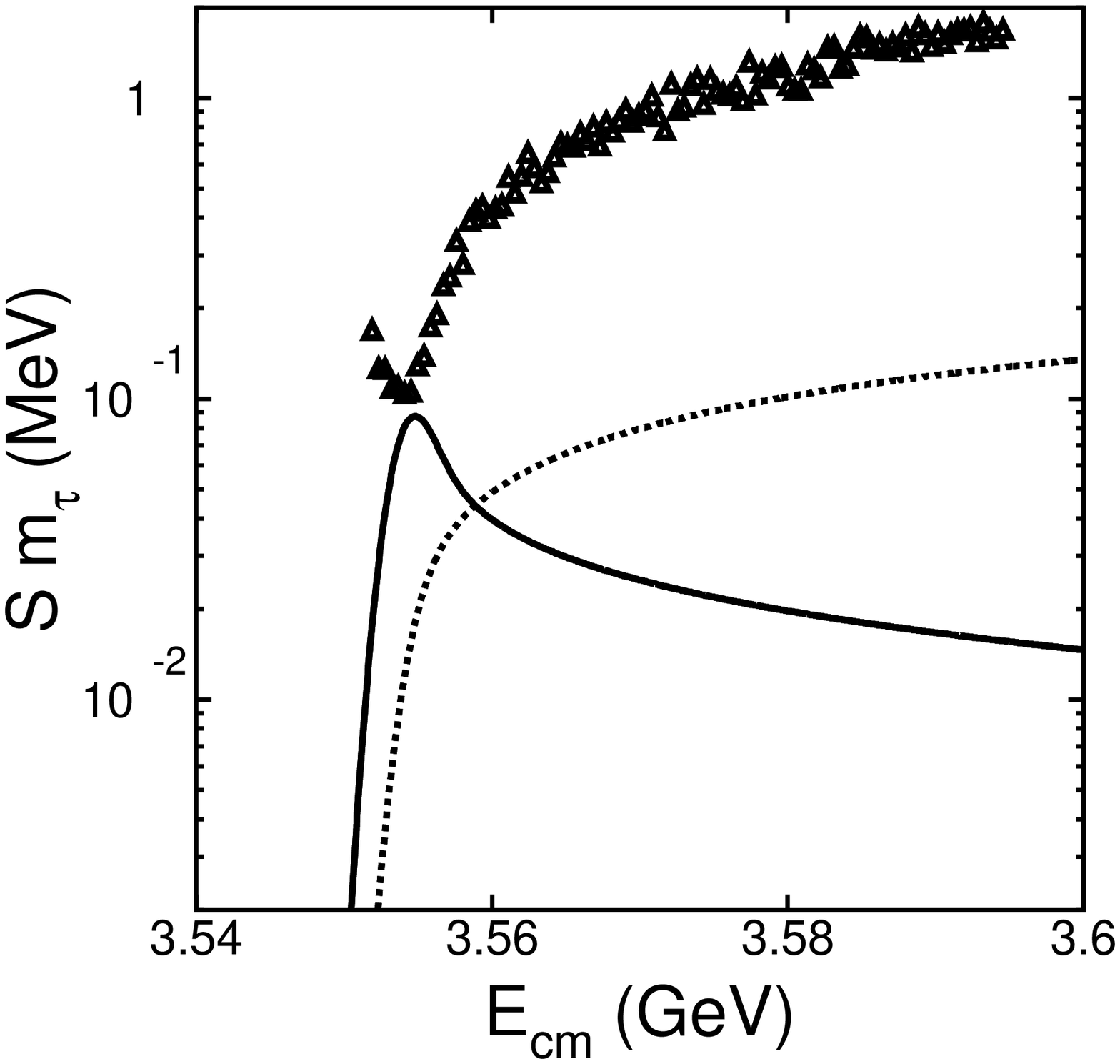}
(b) scan region from 3.551 to 3.595~GeV
\end{minipage}
\begin{minipage}{5cm}
\includegraphics[height=5cm,width=5.cm]{./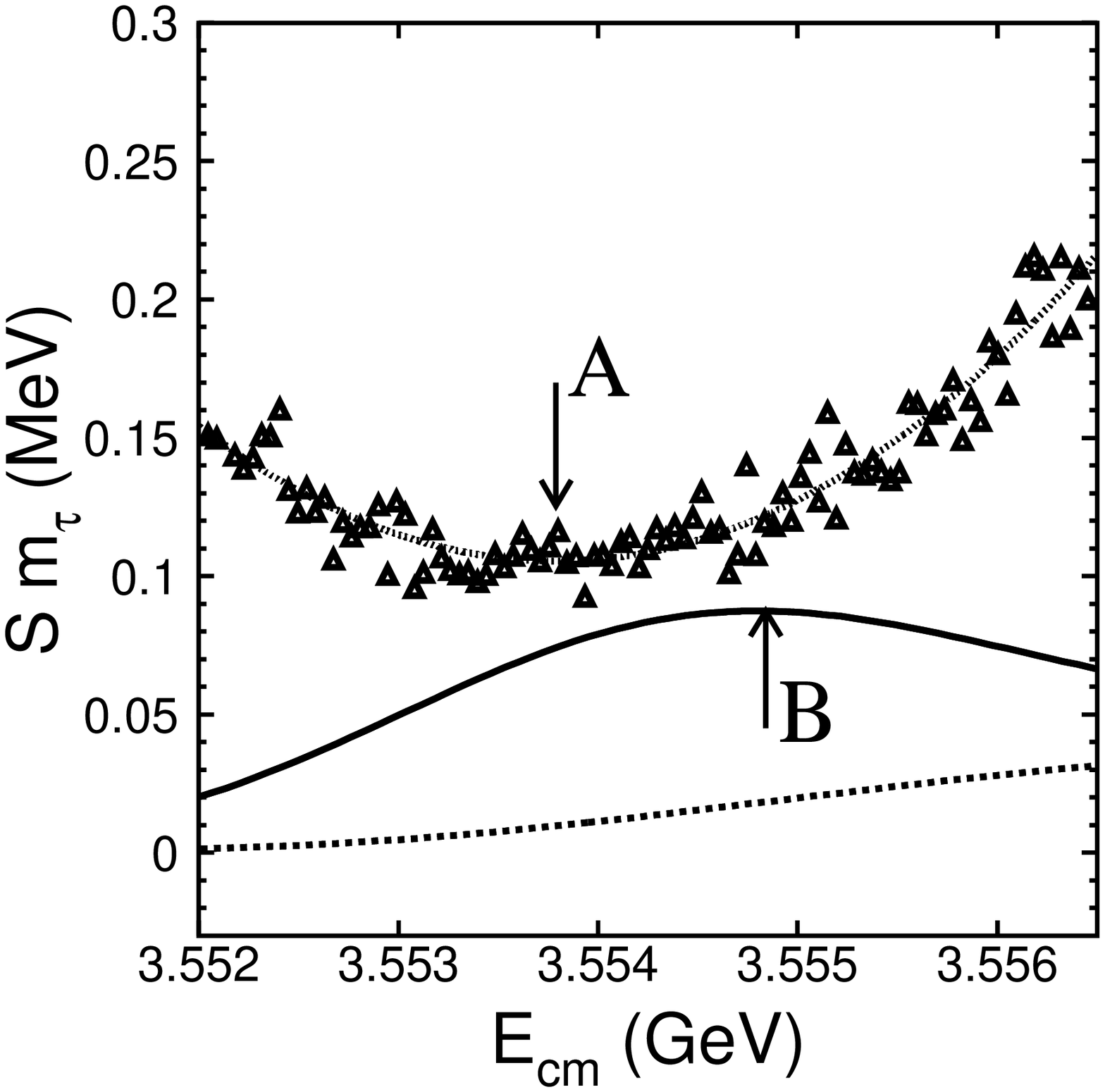}
(c) scan region from 3.552 to 3.5565~GeV
\end{minipage}
\caption{\label{smtauone}
The (a) shows the relation between $S_{m_\tau}$ and the number of points within the energy region from 3.553 to 3.555~GeV. The (b) and (c) shows the variation of $S_{\taums}$ against
energy from one point fit with ${L}=45$ pb$^{-1}$. In plot (b) the scan region is from 3.551 to 3.595~GeV while in plot (c) the scan region is from 3.55330 to 3.55694~GeV. The solid line denotes the derivative of cross section with scale factor $10^{-3}$ and the dashed line the observed
cross section with scale factor $10^{-1}$. The $A$ and $B$ denote respectively the positions with the smallest $S_{m_\tau}$ and the greatest derivative of cross section.}
\end{figure}

Since one point is enough, then an immediate question is where the optimal
point should locate? To answer it, the scan with one point with the luminosity
${L}=45$ pb$^{-1}$ is made and the results are shown in Fig.~\ref{smtauone}(b).
Just as previous study indicated, the small uncertainty is achieved near
the peak of derivative. If looking into the region from 3.5520 to 3.5565~GeV, refer to Fig.~\ref{smtauone}(c), it is found that the smallest $S_{\taums}=0.105~\mev$ is obtained
near the $\taums$ threshold ($E_{cm}=3.55398~\mev$), which has a
deviation from the position ($E_{cm}=3.55484~\mev$) with the greatest derivative of
cross section where $S_{m_\tau}=0.122~\mev$. In addition,
the study also indicates that within 2 MeV region the variation of
$S_{\taums}$ is fairly smooth (from 0.105 to 0.127~$\mev$), which is
very favorable for actual data taking.

\subsubsection{luminosity and uncertainty}

The empirical formula of the relation between the fit uncertainty
$S_{\taums}$ and the given total luminosity ${L}$ can be fitted based
on the data provided in Ref.~\cite{wangyk2007} as follows
\beq S_{\taums} [\kev] = \frac{708.05}{{\cal L}^{0.504}
\mbox{[pb$^{-1}$]}}~, \label{mtauerr} \eeq
which indicates that 49 pb$^{-1}$ is sufficient for a statistical precision better than 0.1
$\mev$.

\subsection{Multiple parameters optimization}
\subsubsection{Position determination}\label{sect_mltpzn}
As we already note the optimal number of point depends on the distribution of points and {\em vice versa}. Under one-parameter fit case, we employ the sampling technique to take energy points randomly in the chosen interval, which in principle exhausts all possibilities and ensure the optimization of final scheme. However, such a method is infeasible for multiple parameters fit due to the increasing complex of fit. For example, it is found when two energy points are too close to each other, the fit always fails. Before the establish of analytical theory, it is expedient to adopt the ``independence conjecture'', that is the optimization of one parameter is independent from the others. In actual operation, we fix the optimal positions which have been found, only variate one energy point for one parameter scan so that we can find the optimal position. When all optimal positions have been found, we try to investigate some possibilities to confirm the optimization of the figured out scheme.

\begin{figure}[htbp]
\begin{minipage}{5cm}
\includegraphics[height=5cm,width=5.cm]{./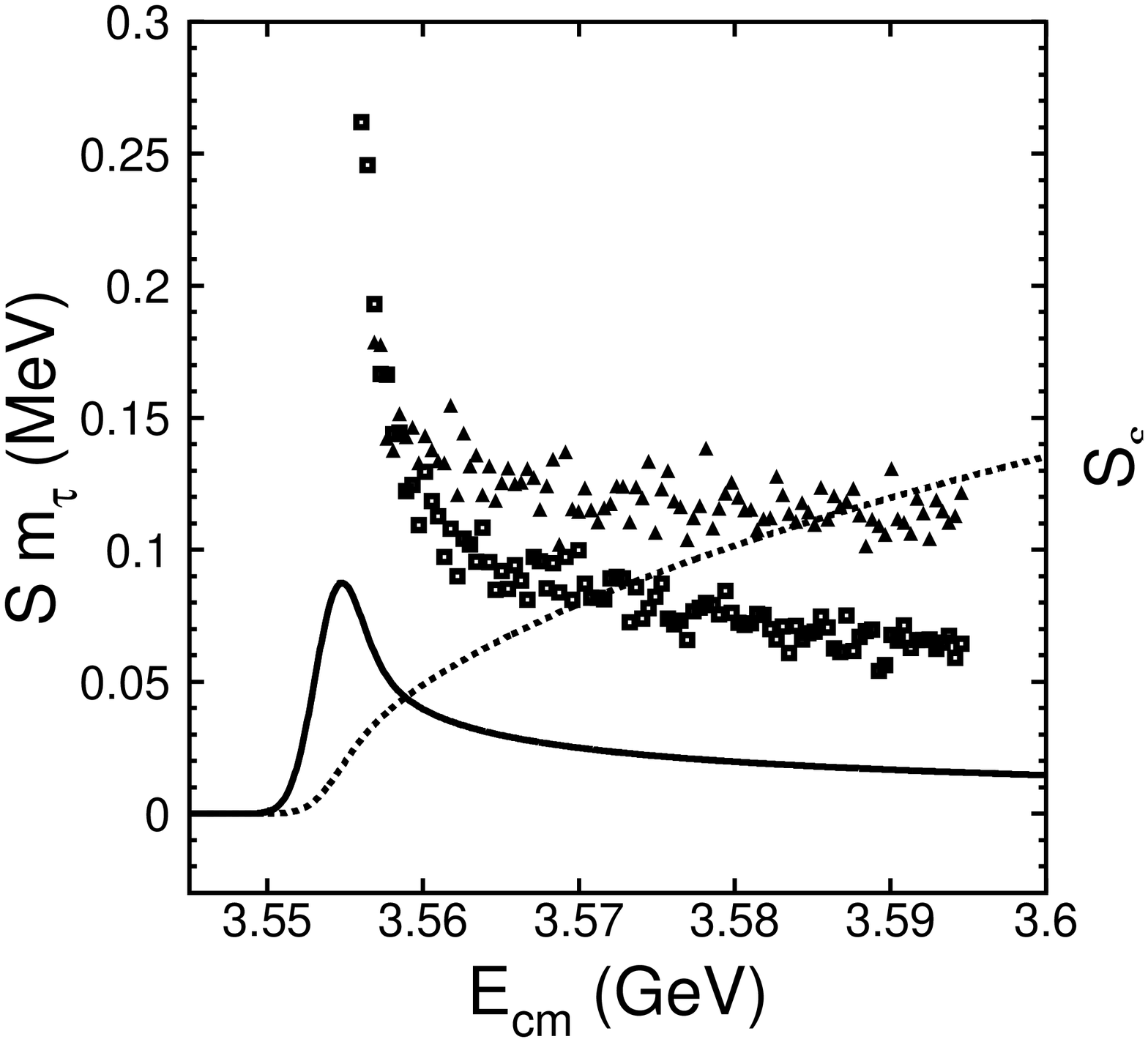}
(a) The variation of $S_{\taums}$
\end{minipage}
\begin{minipage}{5cm}
\includegraphics[height=5cm,width=5.cm]{./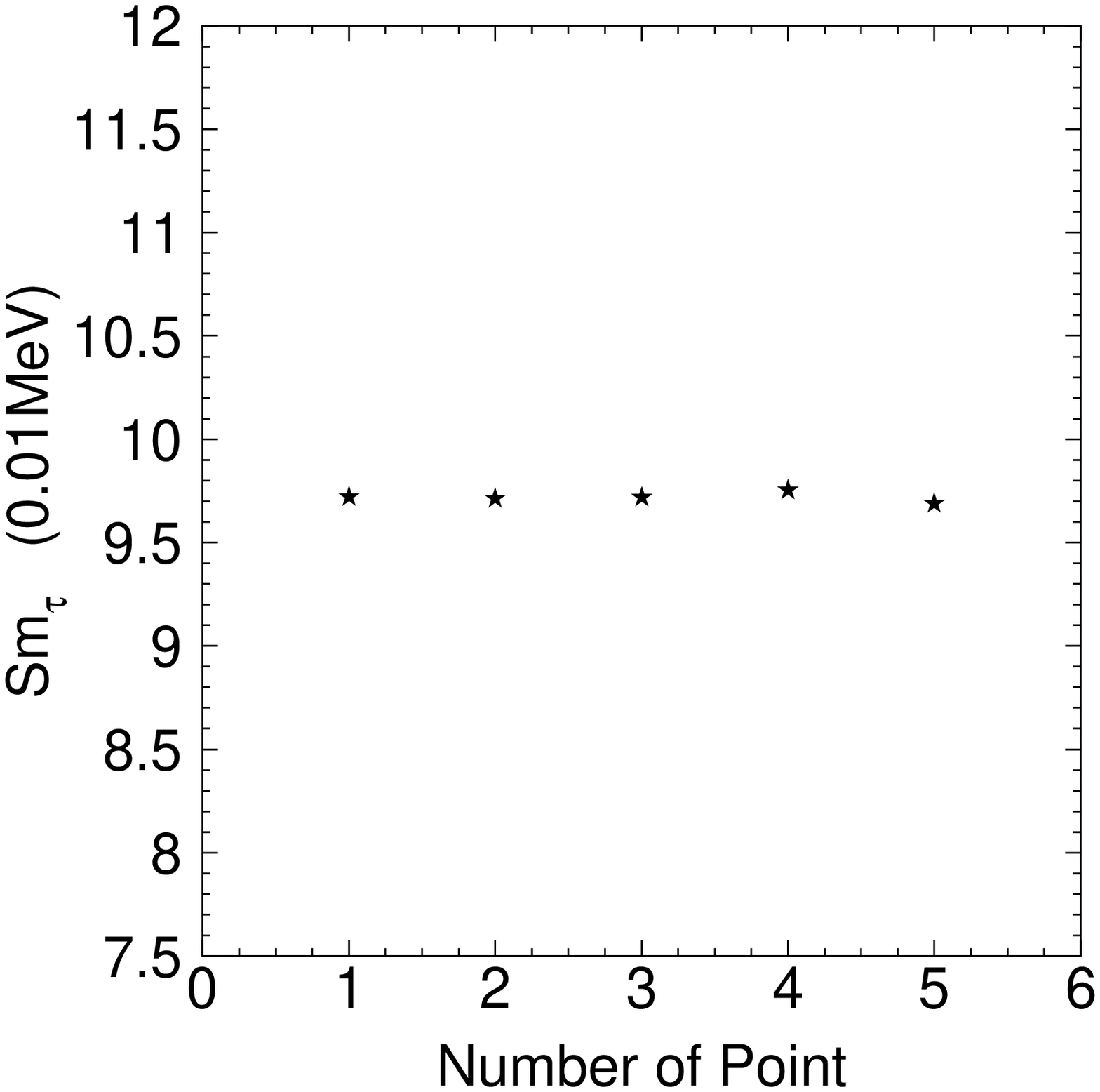}
(b)$S_{m_\tau}$ with point number
\end{minipage}
\begin{minipage}{5cm}
\includegraphics[height=5cm,width=5.cm]{./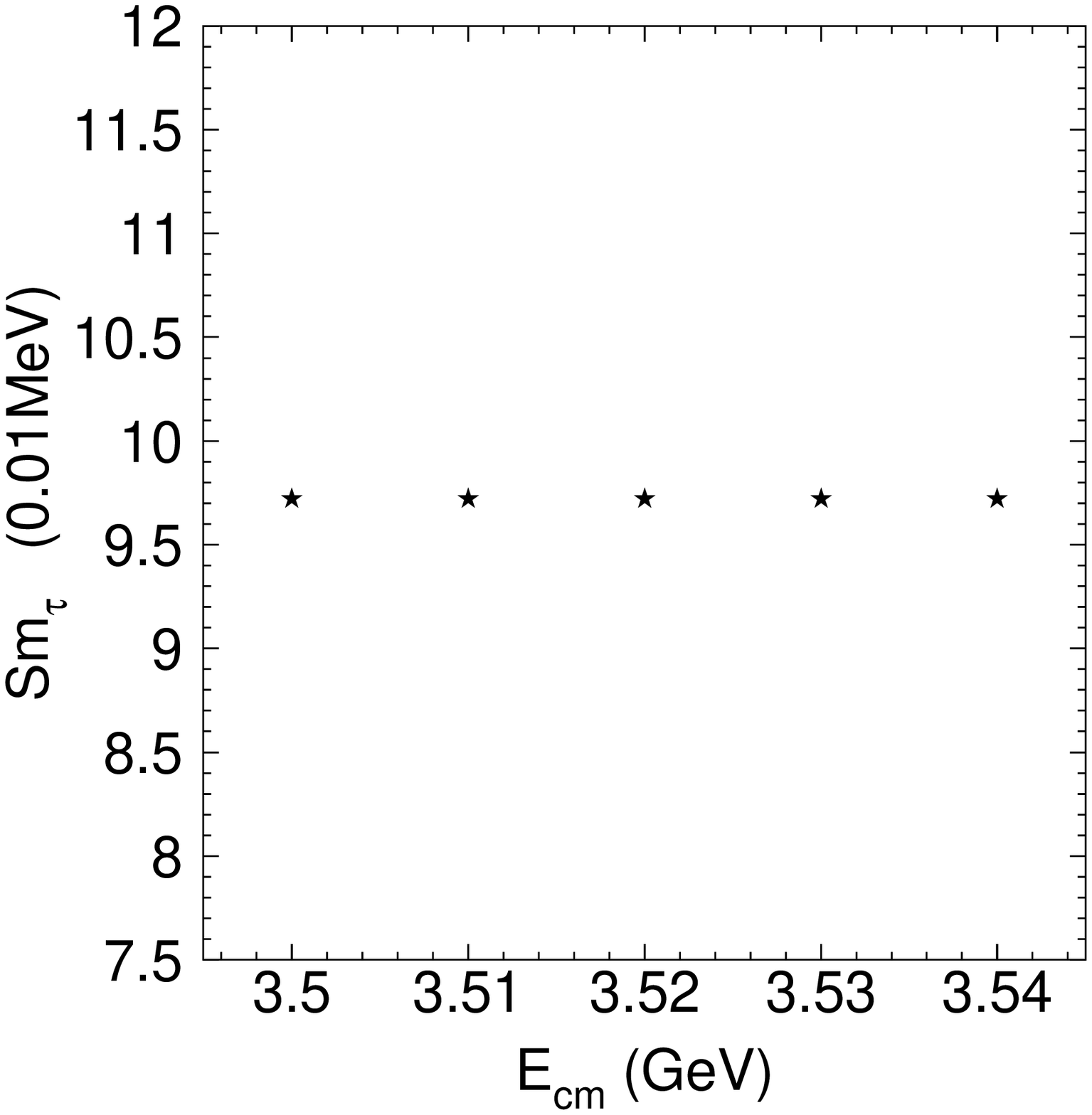}
(c) $S_{m_\tau}$ with point position
\end{minipage}
\caption{\label{smtaupp}
(a) The variation of $S_{m_\tau}$ and $S_{\epsilon}$ with the scan of the second energy point from 3.554 to 3.595~GeV. The small boxes and the small triangles represent $S_{\taums}$ and
$S_\epsilon$ respectively. The solid line denotes the derivative of cross section with a scale factor of 0.001 and the dotted line denotes the cross section with a scale factor of 0.1.
(b) The relationship between error of $\taums$ and the number of data taking points.
(c) The variation of error of $\taums$ and the location of energy.}
\end{figure}

In the light of the results of one-parameter fit, the first point is fixed at $\tau$ threshold ($E_1=3.55379~\mbox{GeV}$) to determine the parameter $\taums$. As to the new adding fit parameter
$\epsilon$, the quantity $S_\epsilon$ is used to find the optimal position for $E_2$ with the increasing energy position. Figure~\ref{smtaupp}(a) shows the distributions of
$S_\epsilon$ with the variation of second point. It is obvious that $S_\epsilon$ decreases with the decreasing of derivative. Therefore, the optimal position of the second can be
selected far from $\tau$ threshold at the high energy side, for example $E_{cm}=3.6$ ~GeV.
As a matter of fact, Figure~\ref{smtaupp}(a) also shows the uncertainty of $\taums$ remains almost the same when the upper energy point greater than 3.58 GeV, which means the upper-limit of the scan is not crucial for $\taums$ determination, that is to say, the upper-limit could be selected with large freedom, such as 3.59, 3.595, 3.6, 3.605 GeV, and so on. Anyway, when the energy is greater than 3.65 GeV the effect due to the resonance of $\psi(3686)$ will exhibit~\cite{Wang2003}. Therefore, the upper-limit of $\taums$ scan should be less than 3.65 GeV.

The conclusions listed here are easy to be understood. First, for one parameter, since
there is only one free parameter ($\taums$) needed to be fit in the $\TT$ production cross section, one measurement will fix the shape of the curve. Second, the fitted parameter will be sensitive
to the variation of curve. Mathematically, the variation of curve is reflected by its derivative. So the sensitive point for $\taums$ will be in the region with large derivative.
As shown in Fig.~\ref{regdy}, two regions are selected: the region I ($E_{cm}
\subset (3.553,~3.558)~\mbox{GeV}$ ) is selected with the derivative falls to 75\% of its maximum while the region II ($E_{cm} \subset (3.565,~3.595)~\mbox{GeV}$ ) is selected with the variation of
derivative is comparatively smooth than that in region I. In region I, the variation of derivative against the energy is fairly prominent which indicates such a region will be sensitive to the horizontal change (that is energy scale change). Therefore, region I is optimal
for $m_{\tau}$ which is determined by both the shape of the cross section curve and the energy scale. Comparatively, the variation of derivative in region II is smooth so it is insensitive to the
horizontal change but can be sensitive to the vertical change. That is to say, it could be expected that region II will be optimal for efficiency which determines the overall normalization of the curve. This is just the results displayed from the scan of $S_\epsilon$.

Based on results of the preceding section, two parameters $\taums$ and $\epsilon$ can be determined by the optimized first and second points which are located respectively at $E_1=3.55379~\mbox{GeV}$
and $E_2=3.6~\mbox{GeV}$. As to the new adding fit parameter $\sigbg$, we divide luminosity 20~$\mbox{pb}^{-1}$ into 1, 2, 3, 4 or 5 point/points within the range from 3.50 to 3.54~ GeV (the
luminosities for point 1 and 2 are $L_{1}=75~\mbox{pb}^{-1}$ and $L_{2}=25~\mbox{pb}^{-1}$, the reason for such a division refer to the next section), 
fit results are shown in Fig.~\ref{smtaupp}(b). It can be seen that the number of points
has almost no effect on the fit uncertainty of $\taums$ or in another word, one point (denoted as point 3 hereafter) is enough to determine the parameter $\sigbg$. As the second step, with the luminosity of $20~\mbox{pb}^{-1}$ for the third point, we perform the fit with $E_3=3.50, 3.51, 3.52, 3.53$, or 3.54~ GeV, respectively. The relation between $S_{\taums}$ and the energy position is shown in Fig.~\ref{smtaupp}(c) which indicates that $S_{\taums}$ is almost irrelevant to energy position, as long as it is below $\TT$ threshold. This is also understandable since the cross section below threshold is always zero. Then, any position below threshold is feasible for $\sigbg$ determination. As an example, $E_{3} =3.50$~GeV is chosen as the third point.

\subsubsection{Ratio determination}\label{sect_ratioplm}
Unlike one-parameter fit, besides finding the relation between luminosity and precision, it is also necessary to know the luminosity allocation among different points. As the first step, we begin from two parameters case. For certain total luminosity, say $L=120$ pb$^{-1}$,
distinctive allocation schemes are checked and results are displayed
in Fig.~\ref{ratiosofpar}(a). Just as expected, with the increasing of
$L_1$ (decreasing of $L_2$), $S_{\taums}$ ($S_\epsilon$) decreases
(increases) correspondingly. The abnormal increasing of $S_{\taums}$
($S_\epsilon$) at extreme region where $L_2$ ($L_1$) is almost zero,
can be explained as the correlation effect between $S_{\taums}$ and
$S_\epsilon$. By virtue of the curve from fitting the data in
Fig.~\ref{ratiosofpar}(a), the minimal value of $S_{\taums}$ is achieved with $L_{1}=75$ pb$^{-1}$ or equivalently $L_{1}:L_{2}=3:1$. Further checks indicate such a ratio is independent on the total luminosity~\cite{wangyk2009}.

\begin{figure}[htbp]
\begin{minipage}{5cm}
\includegraphics[height=5cm,width=5.cm]{./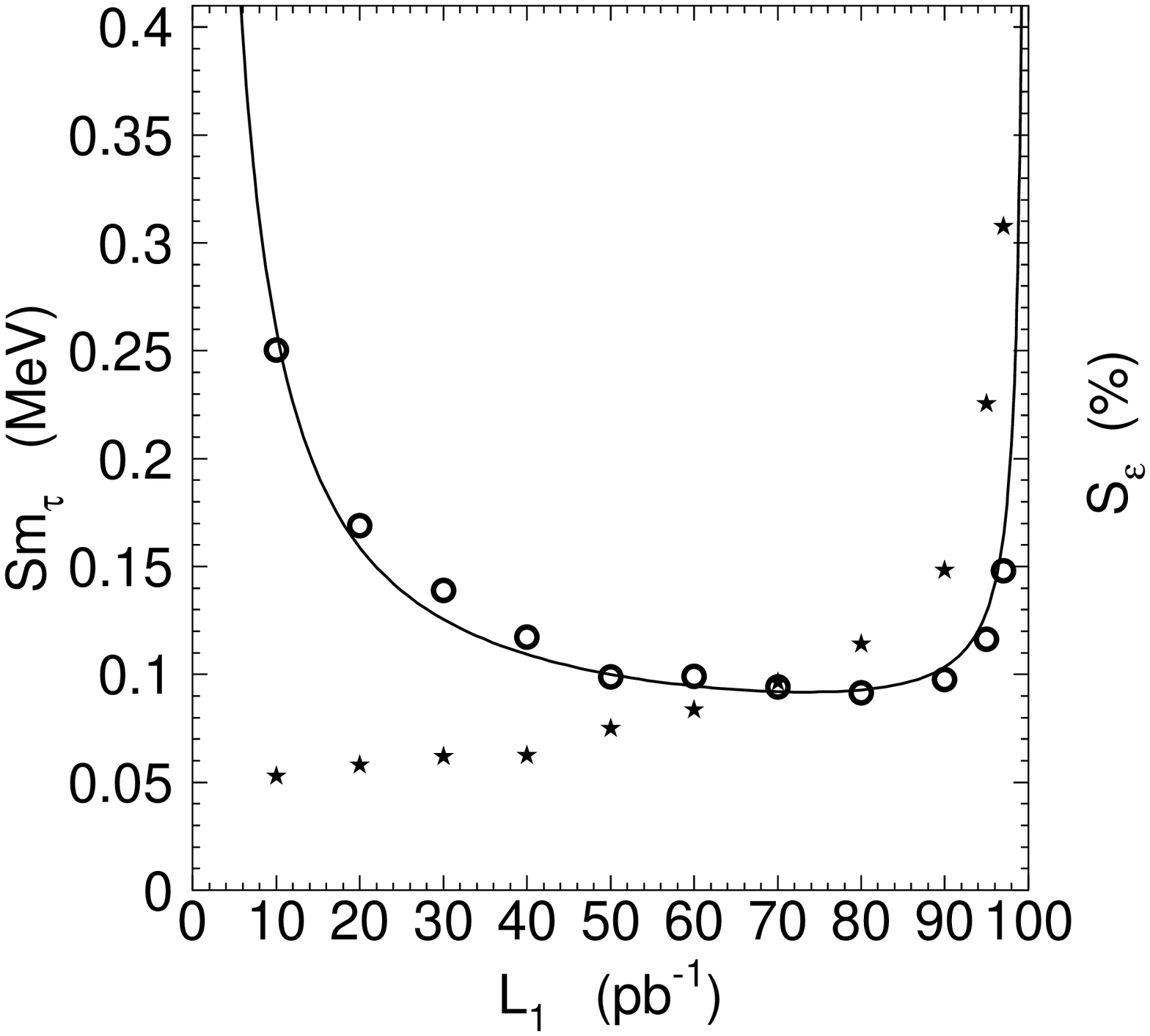}
(a) $L_2/L_1$
\end{minipage}
\begin{minipage}{5cm}
\includegraphics[height=5cm,width=5.cm]{./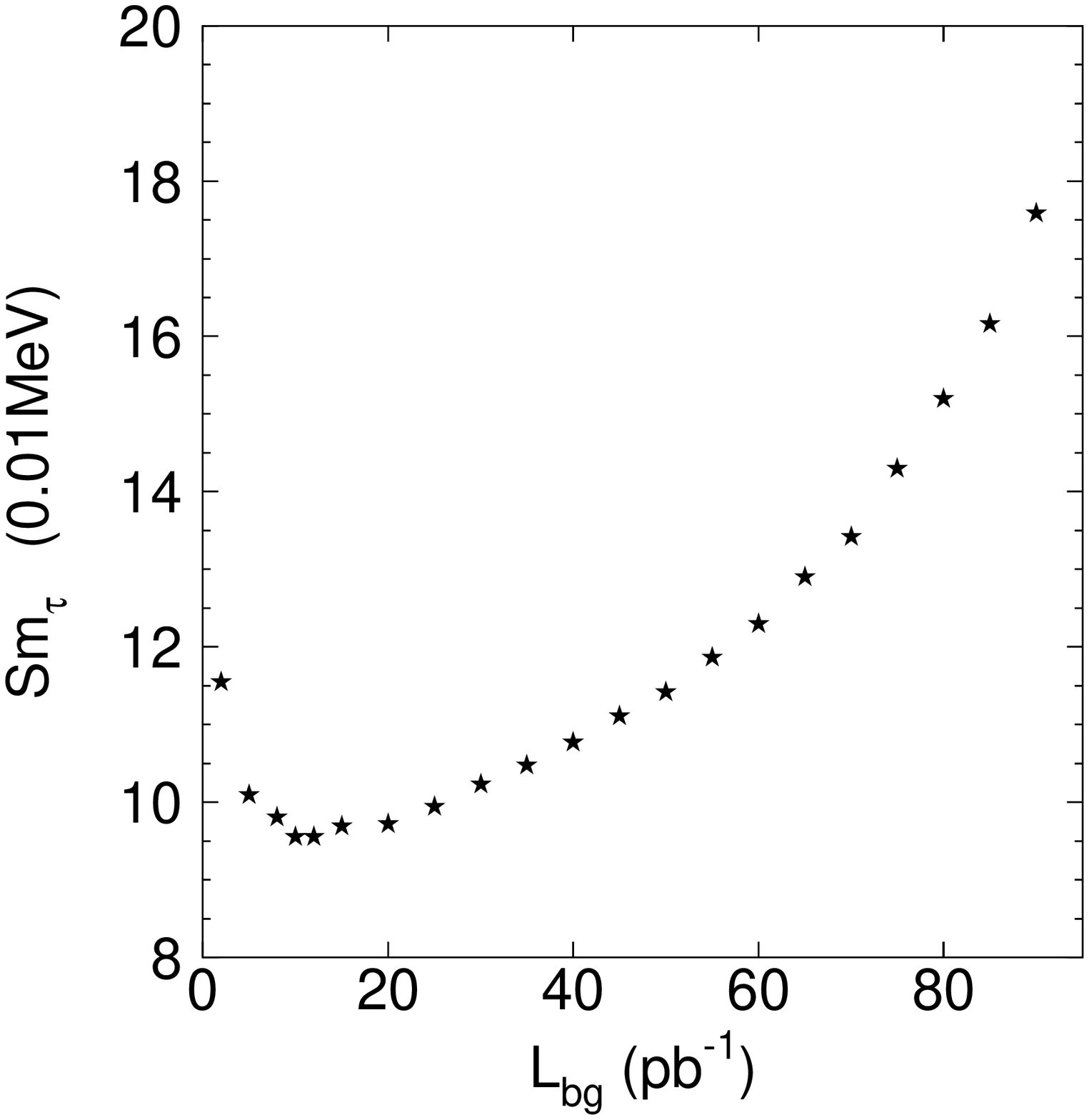}
(b)$L_3/L$
\end{minipage}
\begin{minipage}{5cm}
\includegraphics[height=5cm,width=5.cm]{./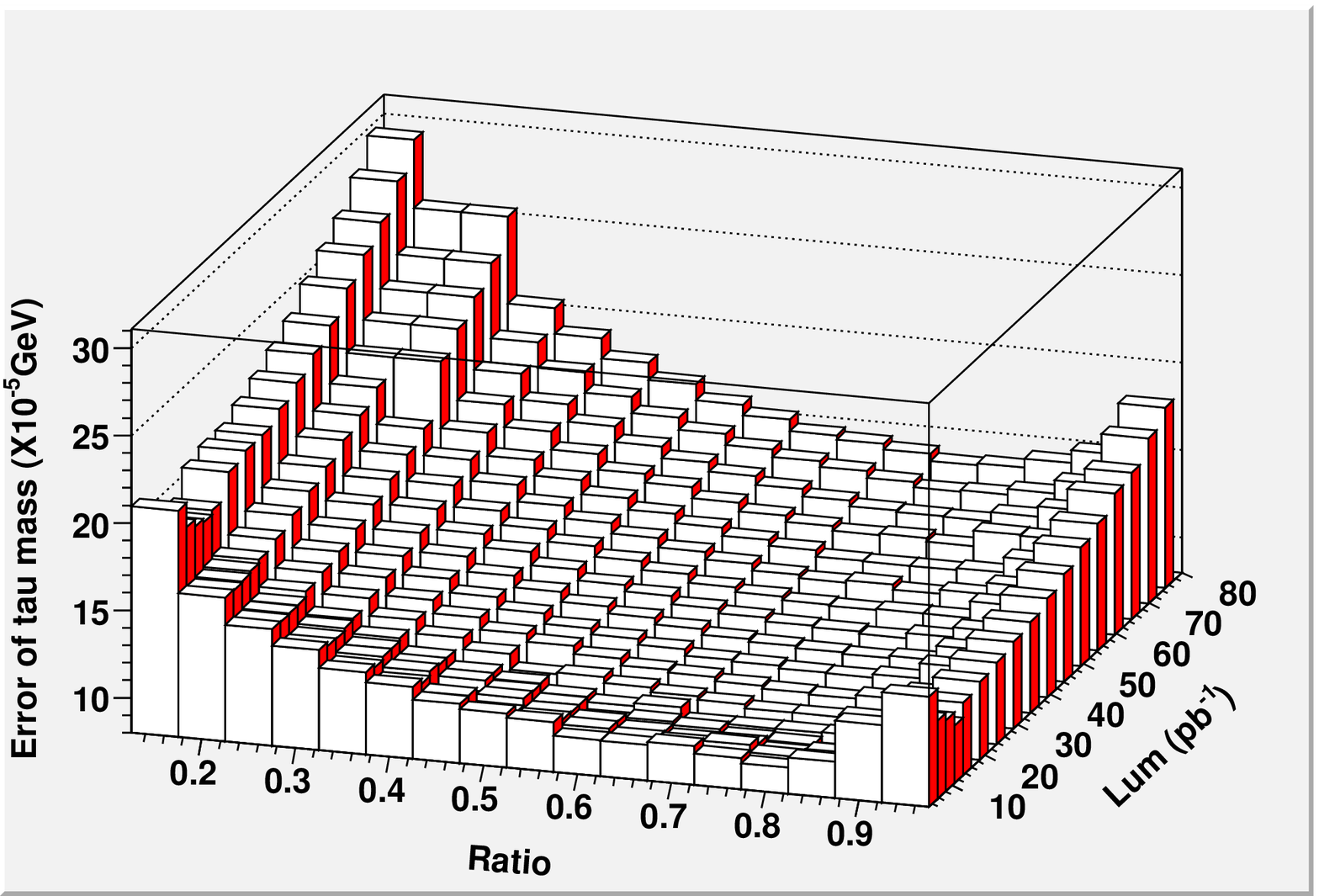}
(c) $S_{m_\tau}$ with point position
\end{minipage}
\caption{\label{ratiosofpar}
(a) The variations of $S_{\taums}$ and $S_\epsilon$ with the increasing of $L_1$.
(b) The relationship between the statistic error of $\taums$ and the luminosity at the background
point. The total luminosity is fixed as 120 $\mbox{pb}^{-1}$, the ratio of apportion luminosity at the first and the second point is 3 to 1.
(c) The variation $S_{\taums}$ with a two dimension scan of luminosity on the third point and the ratio of the luminosity allotted at the first and second points. The total luminosity is fixed at 120 pb$^{-1}$.}
\end{figure}

As the second step, we fix the total luminosity as 120~$\mbox{pb}^{-1}$ and the ratio of luminosities between the first and the second points as $3:1$, then increase the proportion of the luminosity
allotted at the third point to find the dependence of $S_{\taums}$ on the luminosity of point 3. As shown in Fig.~\ref{ratiosofpar}(b), the smallest error $S_{\taums}$ = 0.096~ MeV is obtained when the luminosity equals 12~$\mbox{pb}^{-1}$, which is about 10\% of the total.
In a word, $L_3 =10\% \cdot L$ together with $L_{1}:L_{2}=3:1$ will lead to the optimal value of $S_{\taums}$.

Some checks are performed to consolidate the obtained optimal scheme~\cite{wangyk2009}, one of which is shown in Fig.~\ref{ratiosofpar}(c), where for the fixed total luminosity, say $L = 120$ pb$^{-1}$, with $L =L_1 +L_2 +L_3$, a two dimension scan of $S_{\taums}$ is performed with respect to $L_1/(L_1+L_2)$ and $L_3$. Clearly, for the fixed $L_3$, the smallest $S_{\taums}$ is obtained at the value $L_1/(L_1+L_2)=0.75$ while for this fixed ratio, the smallest $S_{\taums}$ is obtained at the value $L_3 =12$ pb$^{-1}$, which is 10\% of the total luminosity. In fact, the smallest $S_{\taums}$ can be read directly from the three-dimension plot, with the coordinates $L_1/(L_1+L_2) \approx 0.75$  and $L_3 \approx 10\% \cdot L$.

\subsubsection{Scan scheme}\label{sect_scanskm}

\begin{figure}[htbp]
\begin{minipage}{7.5cm}
\includegraphics[height=5cm,width=5.cm]{./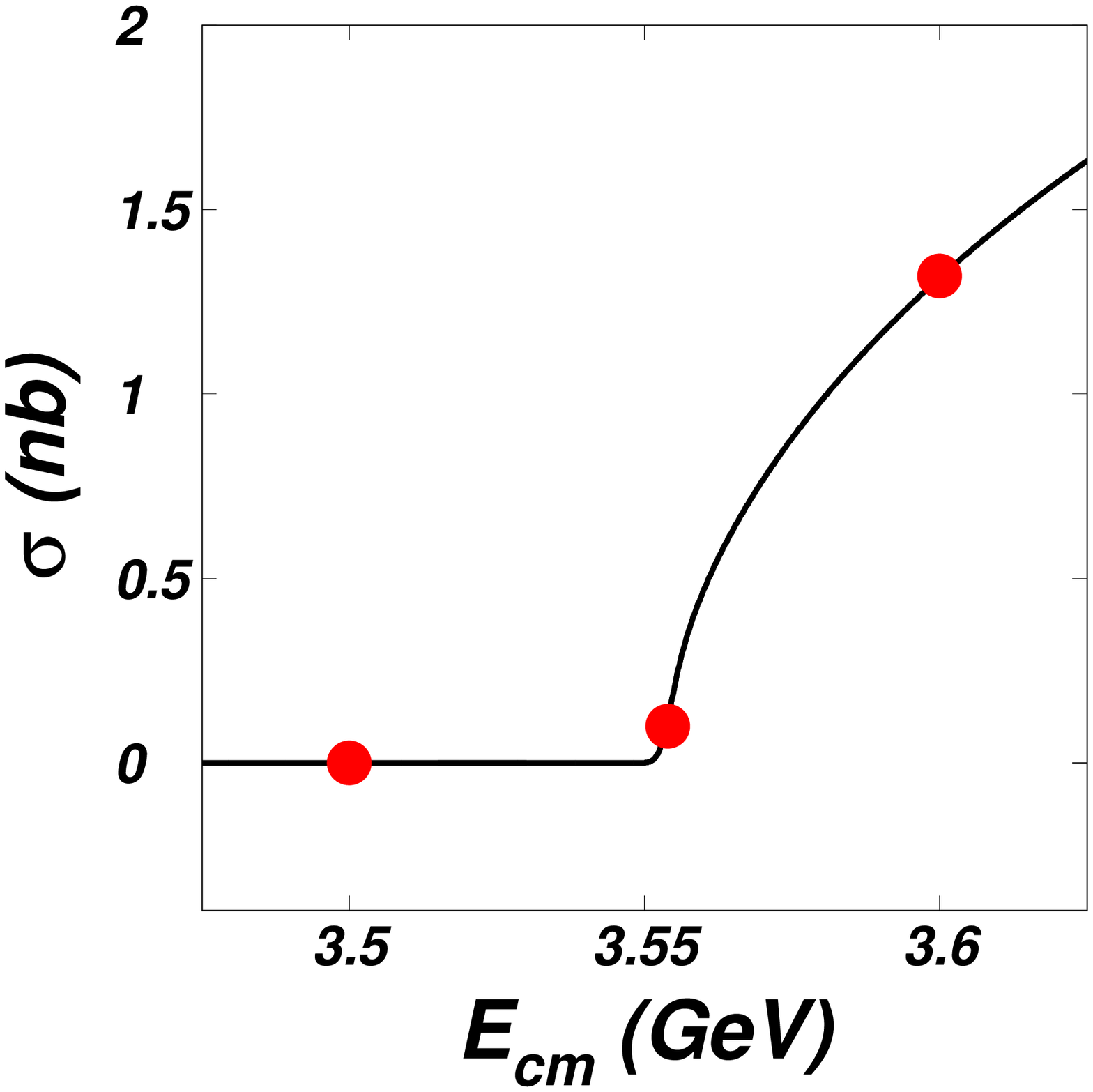}
\end{minipage}
\begin{minipage}{7.5cm}
\includegraphics[height=5cm,width=5.cm]{./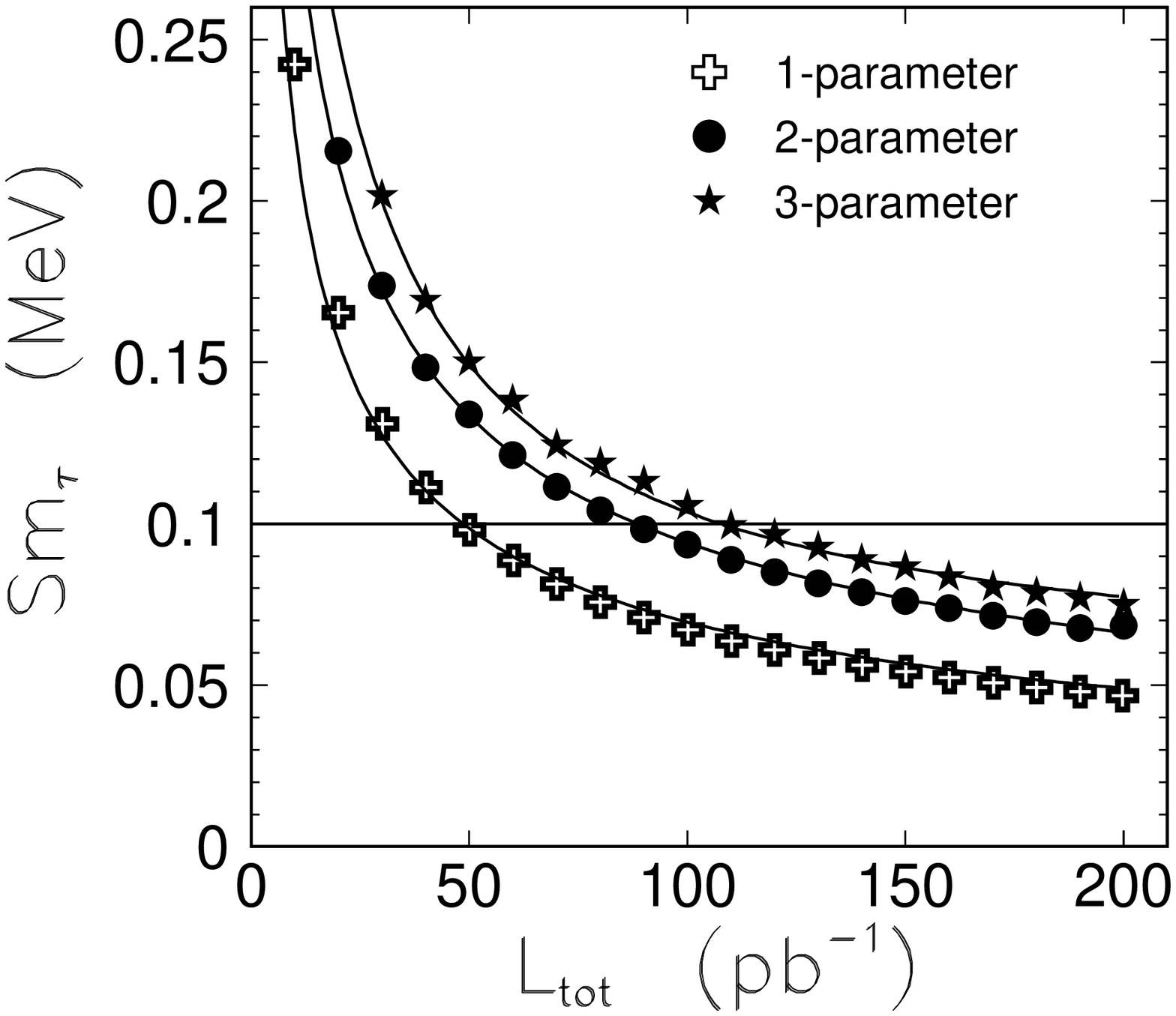}
\end{minipage}
\caption{\label{fig:resultsum}Left penal: the optimal points for $\taums$ scan. Three solid circles denote the sensitive positions for parameter 3 ($\sigbg$), 1 ($\taums$), and 2 ($\epsilon$) respectively. Right panel: the dependence of $S_{\taums}$ on $L$ for one-, two- and three-parameter fit strategies. Overlaid are fits of functions with the form $A/L^{0.504}$, with $A$ is a constant from fit.}
\end{figure}

Resorting to the sampling technique, we finally fix the optimal scan scheme for highly accurate $\taums$ measurement. For three parameters fit strategy, the positions of three data taking points are shown in the left panel of Fig.~\ref{fig:resultsum}, where three solid circles denote the sensitive positions for parameter 3 ($\sigbg$), 1 ($\taums$), and 2 ($\epsilon$) respectively. The luminosity allocation for these points are as follows: $L_{3}/L=10\%$, $L_{1}:L_{2}=3:1$, which can lead to optimal fit results for $\taums$ measurement. The relation between total luminosity ($L$) and the uncertainty of $\taums$ ($S_{\taums}$) is shown in the right panel of Fig.~\ref{fig:resultsum}, from which it can be seen that if the precision at the level of 0.1 MeV is required, around 120 pb$^{-1}$ data have to be taken. The actual experiment at BESIII just follows such a scheme~\cite{Ablikim:2014uzh,ivan2012}.

\section{Analytical Theory}\label{sxn:theory}
Now we return to Eq.~\eref{chisqfm}. Introducing
\beq\label{eq:ridef}
r_i(\theta)=\sqrt{\frac{L\epsilon x_i}{\bar{\sigma}_i}} [\bar{\sigma}_i -\sigma(\theta)] , \eeq
then the objective function $f$ is constructed as follows
\beq f(\theta) =r(\theta)^T r(\theta) = \sum\limits^{m}_{i=1} [r_i(\theta)]^2. \eeq
Obviously, $f(\theta) = \chi^2 (\theta,x)$, symbol $x$ is usually suppressed when only considered is the first optimization. The first and second derivatives of $f$ are expressed as follows
\beq \begin{array}{rcl}
\nabla f(\theta) &\equiv & 2 h (\theta)  \mbox{~~[ gradient of $f(\theta)$]~~} \\
                &=& 2 \sum\limits^{m}_{i=1} r_i(\theta) \nabla r_i(\theta) \\
                &=& 2 J(\theta)^T  r(\theta)~;
\end{array} \label{eq:fgradient}\eeq
\beq \begin{array}{rcl}
\nabla^2 f(\theta) &\equiv & G(\theta) \mbox{~~[ Hesse matrix of $f(\theta)$]~~} \\
 &=&2\{\sum\limits^{m}_{i=1} \nabla r_i(\theta)\nabla r_i(\theta)^T
                            +r_i(\theta)\nabla^2 r_i(\theta)\}\\
 &=&2\{\sum\limits^{m}_{i=1} J(\theta)^T J(\theta) + S(\theta) \}~;
\end{array} \label{eq:fhessian}\eeq
where
\beq
J(\theta)=\left(
\begin{array}{cccc}
{\displaystyle \frac{\partial r_1}{\partial \theta_1}}&
{\displaystyle \frac{\partial r_1}{\partial \theta_2}}&\cdots &
{\displaystyle \frac{\partial r_1}{\partial \theta_n}}\\
{\displaystyle \frac{\partial r_2}{\partial \theta_1}}&
{\displaystyle \frac{\partial r_2}{\partial \theta_2}}&\cdots &
{\displaystyle \frac{\partial r_2}{\partial \theta_n}}\\
\vdots&\vdots&\ddots&\vdots \\
{\displaystyle \frac{\partial r_m}{\partial \theta_1}}&
{\displaystyle \frac{\partial r_m}{\partial \theta_2}}&\cdots &
{\displaystyle \frac{\partial r_m}{\partial \theta_n}}
\end{array}
\right),
\label{jdef}\eeq
which is an $m \times n$ matrix; and $S(\theta)\equiv r_i(\theta)\nabla^2 r_i(\theta)$, which is simply the normalized residual~\cite{Hughes}. Statistically, the normalized residuals should be small,
and scattered randomly around zero in the vicinity of minimization point of $f(\theta)$; hence, on being summed, these terms yield a negligible contribution to the Hessian $G(\theta)$. So we introduce the reduced Hesse matrix that is defined as
\beq H(\theta) \equiv J(\theta)^T J(\theta).\label{defrdchesmtx}\eeq
Next we adopt Gauss-Newton Algorithm to obtain optimal parameters :\\
\noindent {\bf Gauss-Newton Algorithm}
\begin{enumerate}
\item Given initial parameter $\theta_0$, assign the precision $\omega_1$, $\omega_2$, and set $k=0$;
\item Compute  $r_k=r(\theta_k), f_k=f(\theta_k)$;
\item Compute
\[\begin{array}{rcl} h_k &=& h(\theta_k) = J(\theta_k)^T  r(\theta_k), \\
                     H_k &=& H(\theta_k) = J(\theta_k)^T  J(\theta_k);
\end{array} \]
\item Compute $\theta_{k+1} = \theta_k -H_k^{-1} h_k$;
\item Compute $r_{k+1}=r(\theta_{k+1}), f_{k+1}=f(\theta_{k+1})$;
\item Check H-criterion, if it is satisfied, output $r_{k+1}, f_{k+1}$, stop; otherwise, set $k=k+1$, $r_{k}=r_{k+1}$, $f_{k}=f_{k+1}$, and go to step 3.
\end{enumerate}
The H-criterion is the so-called {\it Himmelblau's convergence criterion} that is defined as follows: it is assumed that $\theta_k, \theta_{k+1}, f_{k}$, and $f_{k+1}$ are computed, $\omega_1$ and $\omega_2$ are given precisions, if $\|\theta_k\| \leq \omega_1$ and $|f_{k}|\leq \omega_1$, then use $\|\theta_{k+1}-\theta_k\| < \omega_2$ and $|f_{k+1}-f_{k}| < \omega_2$ as convergence criteria; if $\|\theta_k\| > \omega_1$ and $|f_{k}| > \omega_1$, then take ${\displaystyle \frac{\|\theta_{k+1}-\theta_k\|}{\|\theta_k\|} < \omega_2}$ and ${\displaystyle \frac{|f_{k+1}-f_{k}|}{|f_{k}|} < \omega_2}$ as convergence criteria.
Herein, it is worthy to remind that $k$ is the subscript for iteration index instead of that for vector component. In this paper no confusion results because which meaning is intended is always clear from the context in which it appears.

Our first task here is to prove the convergence of Gauss-Newton Algorithm. From now on, our vim  focuses on the second optimization, so the first optimization is always assumed to be feasible and solvable. Mathematically, the objective function is assumed to have fairly good analytical properties, such as continuity, differentiability, Lipschitz continuity over certain neighborhood, the positive definite of matrix, and so forth. Some subsidary mathematical materials are compiled in the appendix. Nevertheless, two lemmas that are needed as the prerequisites of convergence proof, are presented below.

\begin{lemma} Let $A$ and $B$ be $n\times n$ matrices, let $A$ be
nonsingular and $\| A^{-1} \| \leq \alpha$, let $\| B-A \| \leq \beta$, and let $\alpha \beta \le 1$, then $B$ is nonsingular and
\beq\label{eq:bivs} \| B^{-1} \| \leq \frac{\alpha}{1-\alpha \beta}.\eeq
\label{lemma1}\end{lemma}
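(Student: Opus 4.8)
The plan is to reduce the statement to the invertibility of a matrix of the form $I - C$ with $\|C\| < 1$, which is the classical Neumann series (Banach) lemma. First I would factor $B$ through $A$: writing $B = A - (A - B) = A\bigl(I - A^{-1}(A - B)\bigr)$ and setting $C \equiv A^{-1}(A - B)$, the task of inverting $B$ becomes the task of inverting $I - C$, since $B = A(I - C)$ and $A$ is nonsingular by hypothesis. This factorization is the organizing idea of the whole argument.

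Second, I would bound $\|C\|$ using submultiplicativity of the operator norm together with the two hypotheses: $\|C\| = \|A^{-1}(A - B)\| \leq \|A^{-1}\|\,\|B - A\| \leq \alpha\beta$. Provided $\alpha\beta < 1$ (the strict form of the stated condition $\alpha\beta \leq 1$, which is in fact what is needed to keep the bound \eref{eq:bivs} finite), this yields $\|C\| < 1$. It is worth flagging that with $\alpha\beta = 1$ the right-hand side of \eref{eq:bivs} is undefined, so the hypothesis should be read as the strict inequality.

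The main step, and the one I expect to carry the real content, is to show that $\|C\| < 1$ forces $I - C$ to be nonsingular with $\|(I - C)^{-1}\| \leq 1/(1 - \|C\|)$. I would establish this via the Neumann series $\sum_{k=0}^{\infty} C^k$: the partial sums are Cauchy because $\|C^k\| \leq \|C\|^k$ and $\sum \|C\|^k$ converges as a geometric series, so the series converges to some matrix $S$; multiplying out telescopes to $(I - C)S = S(I - C) = I$, which identifies $S = (I - C)^{-1}$; and the norm estimate follows from $\|S\| \leq \sum_k \|C\|^k = 1/(1 - \|C\|)$. If this Neumann-series fact is already recorded among the appendix materials, I would simply cite it and omit the convergence argument.

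Finally, I would assemble the pieces. Since $A$ and $I - C$ are both nonsingular, so is their product $B = A(I - C)$, and $B^{-1} = (I - C)^{-1} A^{-1}$. Taking norms and applying submultiplicativity once more gives $\|B^{-1}\| \leq \|(I - C)^{-1}\|\,\|A^{-1}\| \leq \frac{1}{1 - \|C\|}\,\alpha \leq \frac{\alpha}{1 - \alpha\beta}$, where the last inequality uses $\|C\| \leq \alpha\beta$ together with the monotonicity of $t \mapsto 1/(1 - t)$ on $[0, 1)$. This is exactly \eref{eq:bivs}, completing the proof. The only genuine obstacle is the Neumann-series convergence and norm bound; everything else is bookkeeping with the submultiplicative norm.
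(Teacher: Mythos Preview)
Your proposal is correct and follows essentially the same route as the paper: both factor through $C = A^{-1}(A-B) = I - A^{-1}B$, invoke the Neumann series $\sum_{k\geq 0} C^k$ (which the paper records as Proposition~\ref{apdxprpsn1} in the appendix), and then chain submultiplicativity to reach $\|B^{-1}\| \leq \alpha/(1-\alpha\beta)$. Your observation that the hypothesis must really be read as the strict inequality $\alpha\beta < 1$ is well taken; the paper is slightly loose on this point.
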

\begin{proof}  The first step is to establish equality $ B^{-1}=\sum\limits^{\infty}_{k=0} (I-A^{-1}B)^{k}A^{-1} $. \\
Notice $(I-A^{-1}B)^{k} = [A^{-1}(A-B)]^{k} $ ,
$\| A^{-1}(A-B) \| \leq \alpha \beta \le 1 $ ,\\
whence $ \sum\limits^{\infty}_{k=0} (I-A^{-1}B)^{k}= [I-(I-A^{-1}B)]^{-1}= B^{-1} A $. Multiple both sides with $A^{-1} $, get the needed result.
The second step is to prove ${\displaystyle \|\sum\limits^{\infty}_{k=0} (I-A^{-1}B)^{k}A^{-1}\|\leq \frac{\alpha}{1-\alpha \beta} }$.\\
$$\|\sum^{\infty}_{k=0} (I-A^{-1}B)^{k}A^{-1}\|\leq \|A^{-1}\| \| \sum^{\infty}_{k=0} (I-A^{-1}B)^{k}\| \leq \frac{\|A^{-1}\|}{1-\| I-A^{-1}B\|} \leq
\frac{\alpha}{1-\alpha \beta} .$$\\
\end{proof}

\begin{lemma} Let $F:R^n \to R^m$ be a continuous and differentiable function over a open convex set $D \subset R^n$, the derivative of $F$, i.e. $F^{\prime}$, is Lipschitz continuous over a neighborhood of any $\theta \in D$, then for any $\theta + \delta \in D$, we have
\beq \|F(\theta + \delta)-F(\theta)-F^{\prime}(\theta)\delta\|
\leq \frac{\gamma}{2} \|\delta\|^2, \eeq
where $\gamma$ is a Lipschitz constant.
\label{lemma2}\end{lemma}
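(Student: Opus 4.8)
The plan is to reduce the multivariate estimate to a one-dimensional integral along the segment joining $\theta$ and $\theta+\delta$, and then to bound the integrand using the Lipschitz hypothesis on $F'$. Because $D$ is convex and both $\theta$ and $\theta+\delta$ lie in $D$, the entire segment $\{\theta+t\delta : t\in[0,1]\}$ is contained in $D$, so $F$ and its derivative $F'$ are defined everywhere along it. This convexity remark is what makes the integral representation below legitimate.

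First I would introduce the auxiliary curve $g(t)=F(\theta+t\delta)$ for $t\in[0,1]$. Since $F$ is differentiable, the chain rule gives $g'(t)=F'(\theta+t\delta)\,\delta$, and the fundamental theorem of calculus, applied componentwise to the $m$ coordinate functions of $g$, yields
\[
F(\theta+\delta)-F(\theta)=\int_0^1 F'(\theta+t\delta)\,\delta\,dt.
\]
Writing the linear term as $F'(\theta)\delta=\int_0^1 F'(\theta)\,\delta\,dt$ and subtracting, I obtain the key identity
\[
F(\theta+\delta)-F(\theta)-F'(\theta)\delta=\int_0^1\bigl[F'(\theta+t\delta)-F'(\theta)\bigr]\delta\,dt.
\]

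Next I would take norms, move the norm inside the integral via the standard vector-valued estimate $\|\int v\|\le\int\|v\|$, and apply submultiplicativity of the operator norm, giving
\[
\|F(\theta+\delta)-F(\theta)-F'(\theta)\delta\|\le\int_0^1\|F'(\theta+t\delta)-F'(\theta)\|\,\|\delta\|\,dt.
\]
By Lipschitz continuity of $F'$, and since the argument difference has norm $\|(\theta+t\delta)-\theta\|=t\|\delta\|$, the integrand is bounded by $\gamma\,t\|\delta\|\cdot\|\delta\|=\gamma t\|\delta\|^2$. Integrating $\int_0^1 t\,dt=\tfrac12$ then produces exactly the claimed bound $\tfrac{\gamma}{2}\|\delta\|^2$.

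The main obstacle is a technical one rather than a conceptual one: the hypothesis only supplies Lipschitz continuity of $F'$ in a neighborhood of each point, so a priori the local Lipschitz constant could vary along the segment. To license the single constant $\gamma$ in the final bound, I would cover the compact segment $t\mapsto\theta+t\delta$, $t\in[0,1]$, by finitely many such neighborhoods and take $\gamma$ to be the largest of the corresponding local constants. A secondary point I would state carefully is the legitimacy of the vector-valued fundamental theorem of calculus and of the inequality $\|\int_0^1 v(t)\,dt\|\le\int_0^1\|v(t)\|\,dt$; both follow by applying the familiar scalar statements componentwise together with the equivalence of norms on $R^m$.
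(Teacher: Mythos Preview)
Your proposal is correct and follows essentially the same route as the paper: express the remainder as $\int_0^1 [F'(\theta+t\delta)-F'(\theta)]\,\delta\,dt$, take norms inside the integral, invoke the Lipschitz bound to get $\gamma t\|\delta\|^2$, and integrate $\int_0^1 t\,dt=\tfrac12$. Your added remarks on convexity of the segment, the compactness argument to extract a single $\gamma$ from the local Lipschitz hypothesis, and the legitimacy of the vector-valued integral inequality are points the paper's proof leaves implicit.
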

\begin{proof}
\[\begin{array}{rcl} F(\theta + \delta)-F(\theta)-F^{\prime}(\theta)\delta &=&
 {\displaystyle \int^{1}_{0} F^{\prime}(\theta+t\delta )\delta dt - F^{\prime}(\theta)\delta} \\
  &=& {\displaystyle \int^{1}_{0} [ F^{\prime}(\theta+t\delta )- F^{\prime}(\theta)] \delta dt },
\end{array}\]
therefore
\[\begin{array}{rcl} \|F(\theta + \delta)-F(\theta)-F^{\prime}(\theta)\delta\| &\leq&
 {\displaystyle \int^{1}_{0} \| F^{\prime}(\theta+t\delta )- F^{\prime}(\theta)\|\|\delta\| dt }\\
 &\leq&  {\displaystyle \int^{1}_{0} \gamma \| t\delta \| \|\delta\| dt } \\
 &=& {\displaystyle \gamma \| \delta \|^2 \int^{1}_{0} t dt =\frac{\gamma}{2} \| \delta \|^2 }.
\end{array}\]
\end{proof}

\begin{theorem}[Convergence of Gauss-Newton Algorithm] If $f:R^n \to R$ has a continuous second partial derivative over a open convex set $D \subset R^n$, $J(\theta)$ is Lipschitz continuous over $D$, i.e. $\|J(\phi)-J(\theta)\|\leq \gamma \|\phi-\theta\|$, $\forall \phi, \theta \in D$; and $\|J(\theta)\|\le \alpha$, $\forall \theta \in D$. If there exists a critical point $\thetast$ such that $J(\thetast)^T r(\thetast)=0$, then the sequence $\{ \theta_{k}\}$ generated by Gauss-Newton Algorithm converges to $\thetast$.
\label{theorem1}\end{theorem}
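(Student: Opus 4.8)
The plan is to prove \emph{local} convergence by showing that the error $e_k \equiv \theta_k - \thetast$ obeys a contractive recursion $\|e_{k+1}\| \le c\,\|e_k\|$ with $c<1$, provided $\theta_k$ lies in a small enough neighborhood of $\thetast$. First I would rewrite the Gauss-Newton step in a form that exposes $\thetast$. Since $H_k = J_k^T J_k$ is nonsingular, one has $e_k = H_k^{-1} H_k e_k = H_k^{-1} J_k^T J_k e_k$, so that
\beq
\theta_{k+1}-\thetast = e_k - H_k^{-1} J_k^T r_k = H_k^{-1} J_k^T\bigl(J_k e_k - r_k\bigr),
\eeq
where $J_k = J(\theta_k)$ and $r_k = r(\theta_k)$. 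The whole argument then reduces to estimating the three factors $\|H_k^{-1}\|$, $\|J_k\|$, and $\|J_k e_k - r_k\|$.

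Next I would control $J_k e_k - r_k$ with \lmref{lemma2}. Applying that lemma to $F=r$ (so $F'=J$) at the base point $\theta_k$ with increment $\delta = \thetast-\theta_k = -e_k$ gives $\|r(\thetast)-r_k+J_k e_k\|\le \frac{\gamma}{2}\|e_k\|^2$, that is $J_k e_k - r_k = -r(\thetast) + w$ with $\|w\|\le \frac{\gamma}{2}\|e_k\|^2$. Inserting this,
\beq
\theta_{k+1}-\thetast = -H_k^{-1} J_k^T r(\thetast) + H_k^{-1} J_k^T w .
\eeq
The first term is where the critical-point hypothesis enters: because $J(\thetast)^T r(\thetast)=0$, I can write $J_k^T r(\thetast)=[J_k-J(\thetast)]^T r(\thetast)$, and the Lipschitz continuity of $J$ bounds this by $\gamma\|e_k\|\,\|r(\thetast)\|$. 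The second term is bounded by $\alpha\,\frac{\gamma}{2}\|e_k\|^2$ using $\|J_k\|\le\alpha$.

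It remains to bound $\|H_k^{-1}\|$ uniformly near $\thetast$, and this is where \lmref{lemma1} is used. Writing $H_k-H(\thetast)=J_k^T[J_k-J(\thetast)]+[J_k-J(\thetast)]^T J(\thetast)$ and applying the Lipschitz bound together with $\|J\|\le\alpha$ gives $\|H_k-H(\thetast)\|\le 2\alpha\gamma\|e_k\|$. Taking $A=H(\thetast)$ (which I must assume positive definite, hence nonsingular, with $\|H(\thetast)^{-1}\|\le\mu$) and $B=H_k$ in \lmref{lemma1}, once $\|e_k\|$ is small enough that $2\mu\alpha\gamma\|e_k\|\le\tfrac12$, I obtain $\|H_k^{-1}\|\le 2\mu$. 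Assembling the three estimates yields
\beq
\|e_{k+1}\| \le 2\mu\gamma\|r(\thetast)\|\,\|e_k\| + \mu\alpha\gamma\,\|e_k\|^2 .
\eeq

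The main obstacle is to make the leading factor an honest contraction. The quadratic term is harmless for $\|e_k\|$ small, but the linear term forces the condition $2\mu\gamma\|r(\thetast)\|<1$: Gauss-Newton contracts only when the residual $\|r(\thetast)\|$ at the solution is small relative to the conditioning $\mu$ and the curvature $\gamma$ --- precisely the ``small normalized residual'' regime invoked when the reduced Hesse matrix $H=J^T J$ was introduced. Under this condition, and the standing assumption that $H(\thetast)$ is positive definite, I would fix a ball of radius $\rho$ about $\thetast$ on which both $2\mu\alpha\gamma\rho\le\tfrac12$ and $c\equiv 2\mu\gamma\|r(\thetast)\|+\mu\alpha\gamma\rho<1$ hold; then for $\theta_0$ in that ball the recursion keeps every $\theta_k$ inside it and gives $\|e_k\|\le c^k\|e_0\|\to 0$, so $\theta_k\to\thetast$, with the rate sharpening to quadratic in the zero-residual case. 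I would flag that the bare statement is literally true only with these two implicit hypotheses: positive-definiteness of $H(\thetast)$ and a sufficiently small residual together with a sufficiently good initial guess.
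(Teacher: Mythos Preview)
Your proposal is correct and follows essentially the same route as the paper's proof: the same decomposition of $\theta_{k+1}-\thetast$ into a ``residual'' piece $H_k^{-1}J_k^T r(\thetast)$ and a ``Taylor remainder'' piece, the same use of \lmref{lemma2} for the latter and the critical-point identity $J(\thetast)^T r(\thetast)=0$ together with Lipschitz continuity for the former, and the same appeal to \lmref{lemma1} to control $\|H_k^{-1}\|$ near $\thetast$. The paper parametrizes the constants differently --- it writes $\eta$ for what you call $\gamma\|r(\thetast)\|$ and $\lambda$ (the smallest eigenvalue of $\jst^T\jst$) for your $1/\mu$ --- but the contraction condition $c\eta/\lambda<1$ in the paper is exactly your $2\mu\gamma\|r(\thetast)\|<1$ up to the choice of constants. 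You are right to flag explicitly that positive-definiteness of $H(\thetast)$, a small-residual condition, and a local initial guess are all implicitly assumed; the paper's proof uses the same hypotheses (via $\lambda>0$ and $\eta<\lambda$) without isolating them in the theorem statement.
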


\begin{proof}  Herein simplified symbols are introduced for the following proof,
$$\theta_k=(\theta_1^k,\theta_2^k,\cdots,\theta_n^k)^T, \thetast=(\thetast_1,\thetast_2,\cdots,\thetast_n)^T;  $$
$$J=J(\theta),  J_k=J(\theta_k), \jst=J(\thetast), r_k=r(\theta_k), \rst=r(\thetast).   $$
By $\jst^T \rst=0$ and $J$ satisfies Lipschitz continuous condition, we have
\beq \|(J-\jst)^T\rst\|\leq \eta \|\theta-\thetast\|. \label{eq:jjstineq}\eeq
Let $\lambda$ be the smallest eigenvalue of $\jst^T \jst$, then there exists $\epsilon_1$ such that $\eta \le \lambda$ when $\theta \in N(\thetast,\epsilon_1 )$. Due to Lipschitz continuity of $J$, there exists $\epsilon_2$, when $\theta \in N(\thetast,\epsilon_2)$ such that
$$\|J^T J-\jst^T \jst\| < \kappa \leq \lambda - \eta .$$
Notice ${\displaystyle \|(\jst^T \jst)^{-1}\|\leq \frac{1}{\lambda} },$
by \lmref{lemma1},
\beq  \|(J^T J )^{-1}\| \leq \frac{c}{\lambda} \mbox{~~with~~}
c \in (1,\frac{\lambda}{\eta}).
\label{eq:jjins}\eeq
Take ${\displaystyle \epsilon = \min \{\epsilon_1,\epsilon_2, \frac{\lambda-c\eta}{c\alpha\gamma}\}} $, suppose after $k$ step iterations, ${\displaystyle \|(J_k^T J_k)^{-1}\| \leq \frac{c}{\lambda} },$ then the $(k+1)$-th step $\theta_{k+1}$ has definition, and
\beq \begin{array}{rcl}
\theta_{k+1} -\thetast &=& \theta_{k} -\thetast-(J_k^T J_k)^{-1} J_k^T r_k  \\
   &=&-(J_k^T J_k)^{-1} [J_k^T r_k +J_k^T J_k (\thetast-\theta_{k})]\\
   &=&-(J_k^T J_k)^{-1} [J_k^T \rst- J_k^T(\rst- r_k - J_k (\thetast-\theta_{k}) )].
\end{array} \label{eq:stepkk1}\eeq
Hence by \lmref{lemma2},
\beq  \|\rst- r_k - J_k (\thetast-\theta_{k})\| \leq \frac{\gamma}{2} \|\theta_{k}-\thetast\|^2.
\label{eq:rrkineq}\eeq
By hypothesis $\jst^T \rst=0$ and relation \eref{eq:jjstineq},
\beq \|J_k^T \rst \|=\|(J_k-\jst)^T\rst\|\leq \eta \|\theta_k-\thetast\|~. \label{eq:jkjstineq}\eeq
Synthesizing relations \eref{eq:jjins}, \eref{eq:rrkineq}, \eref{eq:jkjstineq}, and hypothesis  $\|J_k\|\le \alpha$, in the light of formula \eref{eq:stepkk1}, it is readily to get
\beq \begin{array}{rcl}
\|\theta_{k+1} -\thetast\| &\leq &\|(J_k^T J_k)^{-1}\| (\|J_k^T \rst\| + \|J_k \| \|\rst- r_k - J_k (\thetast-\theta_{k})\|)  \\
   &\leq& {\displaystyle \frac{c}{\lambda} (\eta\|\theta_{k}-\thetast\| + \frac{\alpha\gamma}{2} \|\theta_{k}-\thetast\|^2) } \\
   &\leq& {\displaystyle (\frac{c\eta}{\lambda}+ \frac{\lambda-c\eta}{2\lambda} ) \|\theta_{k}-\thetast\| < \|\theta_{k}-\thetast\|} ,
\end{array} \label{eq:stepkk1norm}\eeq
that is
$$\|\theta_{k+1} -\thetast\| = \rho \|\theta_{k}-\thetast\| \mbox{~~with~~} \rho <1 .$$
In the preceding proof, if let $k=0$, we immediately obtain the proof for the first step of induction conjecture, therefore, according to principle of mathematical induction the above proof is right for any $k$. Whence
$$\|\theta_{k} -\thetast\| = \rho^{k+1} \|\theta_{0}-\thetast\|,$$
when $k \to \infty$, $\|\theta_{k} -\thetast\|  \to 0$, that is
$$\lim\limits_{k\to \infty} \theta_k = \thetast,$$
which indicates that the sequence $\{ \theta_{k}\}$ generated by Gauss-Newton Algorithm converges to $\thetast$.
\end{proof}

\begin{lemma}[Affine Invariance of Step] For Gauss-Newton algorithm, iteration step is independent on affine transformation.
\label{lemma3}\end{lemma}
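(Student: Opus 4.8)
The plan is to write down an arbitrary affine reparametrization of the search space and to show that the Gauss-Newton step computed in the new coordinates, pushed back to the old coordinates, coincides with the step computed directly. Concretely, let $A$ be a nonsingular $n\times n$ matrix and $b\in R^n$, and introduce new parameters $\phi$ through $\theta = A\phi + b$. Define the transformed residual $\tilde r(\phi) = r(A\phi+b)$, so that minimizing $\tilde r^T\tilde r$ over $\phi$ is literally the same least-squares problem as minimizing $r^T r$ over $\theta$. The assertion that ``the iteration step is independent of the affine transformation'' then means precisely that, whenever $\theta_k = A\phi_k + b$, the Gauss-Newton updates satisfy $A\phi_{k+1}+b=\theta_{k+1}$, equivalently $A\,\Delta\phi = \Delta\theta$ for a single step.

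First I would obtain the transformed Jacobian by the chain rule. Since $\partial\theta/\partial\phi = A$, differentiating $\tilde r(\phi)=r(A\phi+b)$ gives $\tilde J(\phi)=J(\theta)A$, where $J(\theta)$ is the Jacobian of \eref{jdef} evaluated at $\theta=A\phi+b$. From this the two building blocks of the step transform as $\tilde J^T\tilde J = A^T J^T J A$ and $\tilde J^T\tilde r = A^T J^T r$, using $\tilde r = r$ at corresponding points.

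Next I would substitute these into the Gauss-Newton step $\Delta\phi = -(\tilde J^T\tilde J)^{-1}\tilde J^T\tilde r$ and use that $A$ is invertible to factor the inverse, $(A^T J^T J A)^{-1}=A^{-1}(J^TJ)^{-1}(A^T)^{-1}$. The factor $(A^T)^{-1}$ then cancels the leading $A^T$ of $\tilde J^T\tilde r$, leaving $\Delta\phi = -A^{-1}(J^TJ)^{-1}J^T r$. Multiplying on the left by $A$ yields $A\,\Delta\phi = -(J^TJ)^{-1}J^T r = -H^{-1}h = \Delta\theta$, which is exactly the step generated directly in the $\theta$ coordinates. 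A one-line induction on $k$ then shows the whole iterate sequences correspond under the affine map, which is the content of the lemma.

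The computation is short, and the only place to be careful is the cancellation of the transpose factors $A^T$ and $(A^T)^{-1}$; there is no genuine obstacle, since nonsingularity of $A$ is all that is needed for the inverses to exist and for the $A$-factors to cancel cleanly. The point worth stressing is that this invariance is special to the Gauss-Newton (Newton-type) step: the bare gradient direction $\tilde J^T\tilde r = A^T J^T r$ does \emph{not} equal $A^{-1}$ times the old gradient, so a pure steepest-descent step would fail to be affine invariant — it is precisely the curvature factor $(\tilde J^T\tilde J)^{-1}$ that restores the invariance.
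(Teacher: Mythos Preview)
Your proof is correct and follows essentially the same route as the paper: both compute how the Jacobian (hence $J^TJ$ and $J^Tr$) transforms under the change of variables via the chain rule, then cancel the $A^T$ factors to obtain $A\,\Delta\phi=\Delta\theta$. The only cosmetic differences are that you include a translation $b$ (genuinely affine, whereas the paper writes only the linear map $\theta=U\phi$) and work directly with the Gauss--Newton quantities, while the paper first phrases the cancellation for the full Newton step with $\nabla f$ and $\nabla^2 f$ before specializing to $J^Tr$ and $J^TJ$.
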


\begin{proof}  Let $U$ be an $n\times n$ nonsingular matrix, define $\bar{f}(\phi)=f(U\phi)$, then we have
\beq \begin{array}{rcl}
\nabla \bar{f}(\phi)&=& U^T \nabla f(\theta)~, \\
\nabla^2 \bar{f}(\phi)&=& U^T \nabla^2 f(\theta) U~, \\
\theta &=& U \phi~.
\end{array} \eeq
Then the step of Gauss-Newton algorithm can be expressed as
\beq \begin{array}{rcl}
\delta \phi &=& - [U^T \nabla^2 f(\theta) U]^{-1} [U^T  \nabla f(\theta)]~ , \\
            &=& - U^{-1}[\nabla^2 f(\theta) \cdot \nabla f(\theta) ]~ , \\
            &=& U^{-1} \delta \theta~,
\end{array} \eeq
that is the step is independent on affine transformation. Especially for the least square model, the corresponding transformation has the following form
\beq \begin{array}{rcl}
\nabla [J^T r] (\phi)&=& U^T [J^T r] (\theta)~, \\
\nabla^2 [J^T J] (\phi)&=& U^T [J^T J] (\theta) U~, \\
\phi &=& U^{-1} \theta~,
\end{array} \eeq
by virtue of which all requirements in \thref{theorem1} are satisfied for the transformed variables and the proof of \thref{theorem1} remains valid, so
\beq \lim\limits_{k\to \infty} \phi_k = \phist, \mbox{~~with~~~} \phist = U^{-1} \thetast~ . \eeq
\end{proof}

\noindent {\bf Remark} In the light of \thref{theorem1}, Gauss-Newton algorithm is merely related to the reduced Hesse matrix $H(\theta)$, the corresponding error estimation is also merely related to $H(\theta)$, therefore from the conventional definition of covariance matrix (denoted by $V$), it reads
\beq  V^{-1}(\theta,x) \equiv H(\theta,x)~. \label{defcvmtx}\eeq
The index $x$ is recovered henceforth. By formulas \eref{eq:ridef}, \eref{jdef}, and \eref{defrdchesmtx}, the Hessian element is expressed as
\beq  H_{ij}(\theta,x)=L \epsilon \cdot \sum\limits^{m}_{l=1} \frac{x_l}{\bar{\sigma}_l}
\frac{\partial \sigma_l}{\partial \theta_i} \frac{\partial \sigma_l}{\partial \theta_j}~. \label{defhij}\eeq
Define $H(\theta,x)=L \epsilon \tilde{H}(\theta,x)$, recast $\tilde{H}(\theta,x)$ as
\beq \tilde{H}(\theta,x)=A Y A^T, \label{defhtilde}\eeq
where
\beq A_{ij}= \frac{\partial \sigma_j}{\partial \theta_i}~,~
     A^T_{ij}= \frac{\partial \sigma_i}{\partial \theta_j}~,~
     Y_{ij}=\frac{x_i}{\bar{\sigma}_i} \cdot \delta_{ij}~. \label{expnayat}\eeq
Since there is only a constant ($L \epsilon$) difference between $H$ and $\tilde{H}$, the latter is often adopted in the following discussions. However, the conclusions are simultaneously applicable for the former.

By virtue of \lmref{lemma3}, affine transformation will not change optimization result, so it is useful to investigate the property of $\tilde{H}$ under special affine coordinate, which will disclose some peculiar features of optimization.

\begin{theorem}[Independence of Optimal Parameters] For each parameter, there exists an affine transformation under which a parameter is relatively independent of the others.
\label{theorem2}\end{theorem}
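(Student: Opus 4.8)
The plan is to render "relatively independent" as the vanishing of off-diagonal entries of the reduced Hessian, and then to exhibit an affine reparametrization that produces such vanishing. First I would record how $\tilde{H}$ responds to an affine change of parameters. Writing $\theta = U\phi$ with $U$ a nonsingular $n\times n$ matrix, the chain rule gives $\partial\sigma_j/\partial\phi_i = \sum_k (U^T)_{ik}\,\partial\sigma_j/\partial\theta_k$, so the sensitivity matrix of \eref{expnayat} transforms as $A \mapsto U^T A$ while the diagonal weight $Y$ is untouched. Substituting into \eref{defhtilde} yields the congruence law $\tilde{H} \mapsto U^T \tilde{H} U$, which is exactly the transformation of $\nabla^2 f$ already used in \lmref{lemma3}. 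Since $V^{-1}=H$ by \eref{defcvmtx}, killing an off-diagonal entry of the transformed Hessian is precisely what makes the corresponding parameter uncorrelated with, hence relatively independent of, the others.

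Next I would invoke the structural properties of $\tilde{H}$. The weight matrix $Y$ is positive definite because every $x_i>0$ and every $\bar{\sigma}_i>0$, and, provided the sensitivity vectors (the rows of $A$) are linearly independent, i.e. the parameters are identifiable, the matrix $\tilde{H}=AYA^T$ is real symmetric and positive definite. For such a matrix standard linear algebra supplies a nonsingular $U$ with $U^T\tilde{H}U$ diagonal: one may take $U$ to be the orthogonal eigenvector matrix from the spectral theorem, or equivalently perform Gram--Schmidt on the rows of $A$ with respect to the $Y$-weighted inner product $\langle u,v\rangle_Y = u^T Y v$. In the resulting coordinates all off-diagonal entries of the Hessian, and therefore of the covariance $V=H^{-1}$, vanish simultaneously, so every transformed parameter is independent of every other. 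If instead one wishes only to single out one parameter $\theta_k$, it suffices to project its sensitivity vector onto the $Y$-orthogonal complement of the span of the remaining sensitivity vectors, a single triangular step that leaves the others fixed while zeroing the $k$-th row and column off the diagonal; this is the version that directly justifies the independence conjecture used in the sampling study.

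The main obstacle is not the linear algebra but the two hypotheses that make it legitimate. I expect the delicate point to be justifying that $A$ has full rank, i.e. that the parameters are genuinely identifiable at the optimum; without this $\tilde{H}$ is only positive semidefinite and the decoupling can fail along the degenerate directions. The second point needing care is the bookkeeping of transposes and inverses in passing from $\theta=U\phi$ to the induced transformation of $A$ and of $\tilde{H}$: getting the congruence $U^T\tilde{H}U$ (rather than $U^{-1}\tilde{H}U$ or $U\tilde{H}U^T$) correct is what guarantees, via \lmref{lemma3} and \thref{theorem1}, that the optimum itself is unmoved while the parameters are disentangled, so that the affine transformation produces a bona fide decoupling of the estimation problem and not merely a cosmetic relabeling.
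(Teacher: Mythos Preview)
Your linear-algebra core is sound, but the decomposition you lead with differs from the paper's in a way that matters for what the theorem is actually used for. The paper does not diagonalize $\tilde{H}$ via the spectral theorem; it uses the Cholesky form $\tilde{H}=\hat{L}D\hat{L}^T$ with $\hat{L}$ a \emph{unit} lower triangular matrix, sets $\Gamma=\hat{L}^{-1}$, and then passes to $\phi=(\Gamma^T)^{-1}\theta=\hat{L}^T\theta$. Because $\hat{L}^T$ is unit upper triangular, its last row is $(0,\ldots,0,1)$, so $\phi_n=\theta_n$ exactly. Since the labeling of parameters is arbitrary, this shows that for any chosen $\theta_i$ one can reparametrize so that $\theta_i$ itself survives unchanged as one of the new coordinates while the transformed Hessian is diagonal. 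That identity $\phi_n=\theta_n$ is the whole point of the theorem as the paper applies it (see the Remark following the proof and the ``independence conjecture'' in Sect.~\ref{sect_mltpzn}): it says one may scan for the optimal position of the \emph{original} parameter as if the others were absent.

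Your orthogonal eigenvector choice of $U$ certainly diagonalizes $U^T\tilde{H}U$, but in general no column of an orthogonal $U$ is a coordinate vector, so none of the decoupled $\phi_i$ coincides with any original $\theta_k$; you have proved that some linear combinations are independent, not that a given $\theta_k$ is. Your Gram--Schmidt alternative on the rows of $A$ in the $Y$-inner product is in fact the right repair: it produces a triangular change of basis, and if you arrange the distinguished parameter last and orthogonalize the earlier rows against it, the resulting transformation is unit triangular and recovers exactly the paper's Cholesky argument. Your final ``single triangular step'' is also close, but note the direction: to keep $\theta_k$ intact you must modify the \emph{other} coordinates (subtracting their $Y$-projections onto the $k$-th sensitivity row), not project the $k$-th row and leave the others fixed as you wrote. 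Finally, the paper simply assumes $\tilde{H}$ is symmetric positive definite and does not pause over the identifiability (full rank of $A$) issue you flag; your caution there is well placed but goes beyond what the paper supplies.
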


\begin{proof}
Since $\tilde{H}$ is a symmetric positive definite matrix, according to Cholesky decomposition theorem, there exists an affine transformation, such that
\beq \tilde{H}=\hat{L} D \hat{L}^T~, \eeq
where $D$ is a diagonal matrix, $\hat{L}$ is a unit lower triangle matrix, the inverse of which, $\Gamma \equiv \hat{L}^{-1}$, is also a unit lower triangle matrix. Whence
\beq D= \Gamma \tilde{H} \Gamma^T~. \eeq
By \lmref{lemma3}, this affine transformation is equivalent to introduce a new variable $\phi$, with the relation
\beq \phi = (\Gamma^T)^{-1} \theta~. \eeq
Since $D$ is diagonal, this indicates that every element of $\phi$, {\em i.e.} $\phi_i, (i=1,2, \cdots, n)$, is independent of the other elements. Note that $\Gamma^T$ is a unit upper triangle matrix, its inverse is also a unit upper triangle matrix that has the form
\[
(\Gamma^T)^{-1}=\left(
\begin{array}{cccc}
1 & t_{21} &\cdots & t_{n1} \\
  & 1      &\cdots & t_{n1} \\
  &        &\ddots &\vdots \\
  &        &       & 1
\end{array}
\right),
\]
by which we have
$$\phi_n=\theta_n~.$$
Since the order of parameter, that is the index of parameter, has not absolutely assigned meaning, the above statements indicate that for any parameter $\theta_i~(i=1,2, \cdots, n)$ there exists an affine transformation such that $\theta_i$ is independent of the others.
\end{proof}

\noindent {\bf Remark} Since we can always have $\phi_i=\theta_i$, this theorem indicates that from the point view of optimization process, we can directly utilize original parameters instead of actually performing an affine transformation. In addition, the ``independence conjecture'' adopted in Sect.~\ref{sect_mltpzn} now becomes the bona fide legitimate theorem.

\begin{theorem}[Uniqueness of Minimum] As far as all taken scan points are concerned, there is only one point which leads to the smallest error for certain parameter.
\label{theorem3}\end{theorem}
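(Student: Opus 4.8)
The plan is to exploit \thref{theorem2} to collapse the $n$-parameter error of any chosen parameter into an effective one-parameter quantity, and then to recognize that this quantity is the reciprocal of a luminosity-weighted information functional which is \emph{linear} in the allocation $x$. Maximizing a linear functional over the simplex $\sum_i x_i=1$, $x_i\ge 0$ forces the optimum onto a single vertex, i.e. onto one scan point, so that the uniqueness asserted here reduces to the uniqueness of the maximizer of a scalar auxiliary (sensitivity) function of the energy.

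Concretely, I would first pass to the Cholesky/affine frame of \thref{theorem2}, in which the reduced Hessian is diagonal, $D=\Gamma\tilde{H}\Gamma^{T}$ with $\Gamma=\hat{L}^{-1}$. Inserting the factorization $\tilde{H}=AYA^{T}$ of \eref{defhtilde}--\eref{expnayat} and writing $B\equiv\Gamma A$ gives $D=BYB^{T}$, hence for the decoupled parameter the diagonal entry and its error are
\beq
D_{ii}=\sum_{l=1}^{m}\frac{x_l}{\bar{\sigma}_l}\,B_{il}^{2},\qquad
V_{ii}=\frac{1}{L\epsilon\,D_{ii}} .
\eeq
Because the parameter ordering in \thref{theorem2} carries no intrinsic meaning, this holds for every parameter after the appropriate relabeling, so minimizing the error of a given parameter is exactly maximizing the scalar $D_{ii}$.

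The next step is to read off $D_{ii}=\sum_l x_l\,\varphi_i(E_l)$ with the auxiliary function
\beq
\varphi_i(E)=\frac{1}{\bar{\sigma}(E)}\Big[\sum_{j}\Gamma_{ij}\,\frac{\partial\sigma(E)}{\partial\theta_j}\Big]^{2},
\eeq
which for a single free parameter degenerates to $(\partial\sigma/\partial\theta)^{2}/\bar{\sigma}$, precisely the sensitivity already observed to control $S_{\taums}$ in the sampling study. Since $D_{ii}$ is linear in $x$ and $x$ ranges over the compact simplex, its maximum sits at a vertex $x_l=\delta_{l,l^{\star}}$ with $l^{\star}=\arg\max_{l}\varphi_i(E_l)$; equivalently, over a continuous energy interval the smallest error is delivered by the single point $E^{\star}=\arg\max_{E}\varphi_i(E)$. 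The theorem therefore follows once this maximizer is shown to be unique.

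The hard part will be precisely that uniqueness claim, i.e. the strict unimodality of $\varphi_i$ on the scan interval. I would argue that $\varphi_i$ is continuous from the assumed smoothness of $\sigma$ and its parameter derivatives, that it is small at the interval ends so the maximum is interior, and that it has a single stationary point at which $\varphi_i''<0$. The cleanest route is to lean on the physical structure of a threshold cross section, whose derivative $\partial\sigma/\partial\theta$ exhibits a single peak while $\bar{\sigma}$ varies monotonically, so that $\varphi_i$ inherits a single hump; a genuinely multimodal cross section would instead yield several equally good points, which is exactly the degenerate multiple-solution situation flagged for Sect.~\ref{sxn:discussion}. I therefore expect the proof to state the unimodality of $\varphi_i$ as the operative hypothesis and then close with the elementary extremum argument above.
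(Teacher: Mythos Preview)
Your overall strategy coincides with the paper's: invoke \thref{theorem2} to isolate one parameter, express the relevant error as the reciprocal of a luminosity-weighted sum of a pointwise sensitivity function, observe that this convex combination is bounded above by its largest term, and conclude that all weight should sit on the single energy where that function peaks --- with unimodality taken as a hypothesis rather than proved. The paper's argument is exactly this, only terser: it writes $g_{\min}\le\tilde H=\sum_l x_l\,g^{l}_{11}\le g_{\max}$ and reads off the conclusion.

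The one substantive difference is your auxiliary function. The paper does \emph{not} carry the Cholesky factor $\Gamma$ into the sensitivity; having invoked \thref{theorem2} it treats the problem as literally one-dimensional and uses the bare
\[
g^{l}_{11}(\theta_1)=\frac{1}{\bar{\sigma}_l}\left(\frac{d\sigma_l}{d\theta_1}\right)^{2},
\]
so that $\tilde H=\sum_l x_l\,g^{l}_{11}$ is manifestly linear in $x$. Your $\varphi_i(E)=\bar\sigma^{-1}\bigl[\sum_j\Gamma_{ij}\,\partial\sigma/\partial\theta_j\bigr]^{2}$ is more faithful to the full $n$-parameter covariance, but it carries a hidden cost: $\Gamma=\hat L^{-1}$ comes from the Cholesky factorization of $\tilde H(x)$ and therefore depends on $x$. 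Hence $D_{ii}=\sum_l x_l\,\varphi_i(E_l)$ is \emph{not} linear on the simplex as you claim, and the vertex argument does not close as written. The paper sidesteps this by its cruder one-dimensional reduction, which is precisely what the remark after \thref{theorem2} (``we can directly utilize original parameters'') is meant to license. If you want to retain your more explicit frame, you would have to freeze $\Gamma$ at a reference allocation and argue perturbatively, or simply collapse to the paper's bare $g^{l}_{11}$; either way your $\varphi_i$ then reduces to the paper's $g$ and the two proofs coincide.
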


\begin{proof}
Firstly introduce an {\bf auxiliary function $g$} with definition
\beq  g^{l}_{ij}(\theta)=  \frac{1}{\bar{\sigma}_l}
\frac{\partial \sigma_l}{\partial \theta_i} \frac{\partial \sigma_l}{\partial \theta_j}~. \label{defgfkn}\eeq
Here by \thref{theorem2}, without loss of generality, only consider parameter $\theta_1$ and keep the other parameters invariant, then $g$ is simplified as
\beq g^{l}_{11}(\theta_1)=  \frac{1}{\bar{\sigma}_l}
\left(\frac{d \sigma_l}{d \theta_1} \right)^2~,  \eeq
and $\tilde{H}$ becomes
\beq \tilde{H}(\theta_1,x) = \sum\limits^{m}_{l=1}  x_{l} g^{l}_{11}(\theta_1)~. \eeq
Now denote $g_{min}=\min\{g^{l}_{11}, l=1,2, \cdots, m\}$ and $g_{max}=\max\{g^{l}_{11}, l=1,2, \cdots, m\}$, notice $\sum\limits^{m}_{l=1} x_{l} =1$, it is easy to see
\beq g_{min} \leq \left(\tilde{H}= \sum\limits^{m}_{l=1} x_{l} g^{l}_{11} \right) \leq g_{max}~. \eeq
Notice $V \propto \tilde{H}^{-1}$, from above relation, when $\tilde{H}$ reaches the maximizer
$V$ reaches the minimizer. Therefore, if $g^{l}_{11}(\theta_1)$ exists only one maximum value on all scan points (for $l=1,2, \cdots, m$), then when all luminosity is congregated at the point where $g=g_{max}$, the minimum error can be obtained for parameter $\theta_1$.
\end{proof}

\noindent {\bf Remark} Generally speaking, for scan experiments, the cross section $\sigma_l$ is a function of center-of-mass energy $E$ (denoted as $E_{cm}$ in Sect.~\ref{sxn:samplingmed}), each index $l$ actually corresponds to an energy $E_l$, that is $\sigma_l=\sigma_{E_l}$. However, the $E_l$ is a continuous variable and can be denoted directly as $E$, so the cross section reads $\sigma (\theta;E)$. Correspondingly $g$ is a function of $E$, that is
\beq g(\theta ; E)=g^{E}_{11}(\theta)= \frac{1}{\sigma (\theta;E)} \cdot
\left[ \frac{d \sigma (\theta;E)}{d \theta} \right]^2~,  \eeq
where $\bar{\sigma}_l \approx \sigma_{E_l}=\sigma(\theta;E)$ is adopted. So the extremum of $g(\theta;E)$ can be acquired by setting $dg(\theta;E)/dE=0$.

\begin{theorem}[Number Consistency between parameters and scan points] Solely $n$ points are needed for an optimal scan scheme in which $n$ parameters are to be determined. The energy value of each scan point corresponds to the position where an auxiliary function of certain parameter reaches its maximizer.
\label{theorem4}\end{theorem}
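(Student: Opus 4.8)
The plan is to prove both halves of the ``necessary and sufficient'' assertion by marrying the algebraic structure of the reduced Hessian $\tilde{H}=AYA^{T}$ to the two results just established. The count ``at least $n$ points'' will follow from a rank argument on $\tilde{H}$, whereas the count ``at most $n$ points'' together with the positioning statement will follow by decoupling the parameters through \thref{theorem2} and then applying the single-parameter optimum of \thref{theorem3} to each decoupled parameter in turn.

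First I would settle necessity. Recall that $A$ is the $n\times m$ matrix with $A_{ij}=\partial\sigma_j/\partial\theta_i$ and that $Y$ is the $m\times m$ positive diagonal matrix with $Y_{ii}=x_i/\bar{\sigma}_i$. Then $\operatorname{rank}\tilde{H}=\operatorname{rank}(AYA^{T})\le\operatorname{rank}A\le\min(m,n)$. If $m<n$ this forces $\operatorname{rank}\tilde{H}<n$, so $\tilde{H}$ is singular and the covariance $V=\tilde{H}^{-1}$ fails to exist; the $n$ parameters then cannot all be determined. Hence $m\ge n$, so at least $n$ scan points are indispensable.

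For sufficiency and the location of the points I would invoke \thref{theorem2} to pass to the affine coordinates $\phi$ in which $\tilde{H}$ is diagonal. There the parameters are mutually independent and each variance obeys $V_{ii}\propto 1/\tilde{H}_{ii}$ with $\tilde{H}_{ii}=\sum_{l=1}^{m}x_l\,g^{l}_{ii}$. Applying the reasoning of \thref{theorem3} to the now-isolated parameter $\phi_i$, the quantity $\tilde{H}_{ii}$ is maximized — so $V_{ii}$ is minimized — when the luminosity assigned to that parameter is concentrated at the energy $E_i$ maximizing the auxiliary function $g_{ii}(E)$, located by $dg_{ii}/dE=0$. Letting $i$ run over $1,\dots,n$ produces exactly $n$ candidate energies, one per parameter, which are the asserted maximizer positions.

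The delicate point — and the step I expect to be the main obstacle — is that all parameters share a single luminosity budget $\sum_i x_i=1$, so the $n$ individual optima compete and cannot all be attained simultaneously. Here I would lean on the diagonality secured by \thref{theorem2}: with the $k$-th point placed at the maximizer of the $k$-th decoupled parameter, the $n\times n$ matrix $A$ is nonsingular, hence $\tilde{H}$ is positive definite and $V$ is well defined, while the split of the shared luminosity among these $n$ points merely trades the diagonal variances $V_{ii}$ against one another according to the relative importance of the parameters. To exclude further points I would note that each diagonal entry $\tilde{H}_{ii}=\sum_{l}x_l\,g^{l}_{ii}$ is a convex combination of the values $g^{l}_{ii}$ and is therefore bounded above by $\max_l g^{l}_{ii}$, a ceiling saturated only by concentrating weight at the maximizer; appending extra points neither raises this ceiling nor relaxes $\sum_l x_l=1$, so no scheme with more than $n$ points can improve upon the $n$-point scheme. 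This closes the consistency between the number of parameters $n$ and the number of scan points.
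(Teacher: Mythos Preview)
Your necessity argument via the rank of $\tilde{H}=AYA^{T}$ is clean and correct, and in fact sharper than anything the paper writes out explicitly. Your use of \thref{theorem2} and \thref{theorem3} to locate the $n$ candidate energies is also in the right spirit, though note that the paper diagonalizes by an \emph{orthogonal} transformation $U^{T}\tilde{H}U=D$ rather than the Cholesky factorization of \thref{theorem2}; the reason is that it later needs trace invariance to tie $\sum_i D_i^{-1}$ back to $\sum_i V_{ii}$, which a Cholesky change of basis does not give.

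The substantive gap is in your ``at most $n$'' step. You argue that each diagonal entry $\tilde{H}_{ii}=\sum_l x_l\,g^l_{ii}$ is a convex combination bounded above by $\max_l g^l_{ii}$, and that adding extra points cannot raise this ceiling. But that ceiling is attained for parameter $i$ only when \emph{all} the luminosity sits at the maximizer of $g_{ii}$, and the ceilings for different $i$ are mutually incompatible under $\sum_l x_l=1$. So the convex-combination bound does not by itself exclude the possibility that some scheme with an extra point $j\notin\{1,\dots,n\}$ beats every $n$-point scheme on at least one variance. What has to be shown---and what the paper actually proves---is a \emph{redistribution} statement: given luminosity $y$ placed at any non-optimal point $j$, one can split it as $y=\sum_{l=1}^n y_l$ over the $n$ optimal points so that $\sum_{l=1}^n y_l\,g^l_i \ge y\,g^j_i$ holds \emph{simultaneously for every} $i$. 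In matrix form this is $GY\ge Y_g$; the paper establishes it (under the mild condition $\sum_i 1/\rho_i^i\le 1$ with $\rho_i^l=g_i^l/g_i^j$) by interpolating between the diagonal and the full matrix and using $\rho_i^i>1$. That simultaneous inequality is the missing ingredient in your argument; without it, ``trading variances against one another'' is a description, not a proof that extra points are superfluous.
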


\begin{proof}
Firstly, we prove the following fact. Considering the summation
\beq T_m= \sum\limits^{m}_{t=1} z_{j_t} g^{j_t}_{i}~,  \eeq
where $i$ denotes a certain parameter and $j_t$ an energy point. Without loss of generality, assume that $j_1 \le j_2 \le \cdots \le j_m$, and denote $g_{min}=\min\{g^{j_t}_{i}, t=1,2, \cdots, m\}$ and $g_{max}=\max\{g^{j_t}_{i}, t=1,2, \cdots, m\}$, then
\beq z g_{min} \le T_m\le z g_{max}~,  \eeq
with $\sum\limits^{m}_{t=1} z_{j_t} =z$. Notice the continuous dependence of $g^{j_t}_{i}$ on $j_t$ (as explained in the remark of \thref{theorem3}, $j_t=E_{j_t}$ and $E_{j_t}$ is a continuous variable, so $g^{j_t}_{i}$ is a continuous function of $E_{j_t}$), therefore it is always possible to find a point $j^{\prime}$ with $j_1 \leq j^{\prime} \leq j_m$ so that
\beq z g^{j^{\prime}} = T_m~.  \eeq
As shown in \thref{theorem3}, $g^{j_t}_{i}$ is related to uncertainty of parameter. Therefore, above fact indicates that the uncertainty effects due to $m$ points can be realized solely at one point. Next considering $\tilde{H}$.

Since $\tilde{H}$ is a symmetric matrix, there exists an orthogonal matrix $U$ such that
\beq U^T \tilde{H} U = D~, \eeq
where $D$ is a diagonal matrix. It is assumed that for $n$ parameters in $D$-space, $n$ optimal points have been figured out according to \thref{theorem3}. Although these optimal positions of points may be distinct from those of $\tilde{H}$, by virtue of \lmref{lemma3}, two kinds of positions are of one-to-one correspondence. We investigate any one of diagonal elements of $D$, say $D_{i}$, which has the form
\beq D_{i}= \sum\limits^{n}_{l=1} x_{l} g^{l}_{i}+ y  g^{j}_{i}~, \eeq
where $\sum\limits^{n}_{l=1} x_{l}+y =1$, and $j\neq l, (l=1,2, \cdots, n)$; $l$ denotes optimal point, so $y$ indicates the luminosity allocated outside the optimal points; $g^{l}_{i}$ is defined as
\beq g^{l}_{i}=  \frac{1}{\bar{\sigma}_l}
\left(\frac{\partial \sigma_l}{\partial \phi_i} \right)^2. \eeq
Then we try to find out an allocation $\sum\limits^{n}_{l=1} y_l=y$, such that for each parameter
\beq \sum\limits^{n}_{l=1} y_l g^{l}_{i} \geq y g^{j}_{i}~; \eeq
or for all diagonal elements of $D$
\beq G Y \geq Y_g~, \label{eq:gmatrix} \eeq
with
\[
G =\left(
\begin{array}{cccc}
 g_1^1  & g_1^2  &\cdots & g_1^n  \\
 g_2^1  & g_2^2  &\cdots & g_2^n  \\
 \vdots & \vdots &\ddots &\vdots \\
 g_n^1  & g_n^2  &\cdots & g_n^n
\end{array}
\right)~,
\]
$Y= (y_1,y_2, \cdots, y_n)^T~,$ and $Y_g= (y g^{j}_{1}, y g^{j}_{2}, \cdots, y g^{j}_{n})^T~.$

If Eq.~\eref{eq:gmatrix} is satisfied, this indicates that the allocation of luminosity at optimal points leads to increase of $D_{i}$. Notice that $V_{i} \propto D_{i}^{-1}$, therefore, this allocation of luminosity leads to decrease of error. Moreover, note
\beq D^{-1} = U^{-1} \tilde{H}^{-1} (U^T)^{-1} , \eeq
since orthogonal transformation keeps the trace of matrix invariant, while the trace of $\tilde{H}^{-1}$ is proportional to the sum of squared errors, the previous statements indicate that the luminosity allocated at $n$ optimal points can guarantee the optimality of results for determining $n$ parameters. Next we prove Eq.~\eref{eq:gmatrix}.

Firstly, introduce new variables
$$\rho_i^l = \frac{g_i^l}{g_i^j}~,~~
z_i = \frac{y_i}{y}~,~~
I=(\underbrace{1,1,\cdots, 1}_{n})^T~.$$
Since for the parameter $i$, the optimal point is at the position $i$, $g_i^i > g_i^k ~(k=l ~\text{or} ~ j )$ and $\rho_i^i >  1, \rho_i^i >  \rho_i^l > 0~.$ The new matrix $P(t)$ is defined as
\[
P(t) =\left(
\begin{array}{cccc}
 \rho_1^1   & \rho_1^2 t &\cdots & \rho_1^n t \\
 \rho_2^1 t & \rho_2^2   &\cdots & \rho_2^n t \\
 \vdots & \vdots &\ddots &\vdots \\
 \rho_n^1 t & \rho_n^2 t &\cdots & \rho_n^n
\end{array}
\right)~,
\]
or
$$[P(t)]_{pq}=\rho_p^q [\delta_{pq}+(1-\delta_{pq})t]~. $$
Here $t\in [0,1]$. Then the equivalent equation of Eq.~\eref{eq:gmatrix} is
\beq P(1) Z \geq I~, \label{eq:pmatrix} \eeq
with $Z= (z_1,z_2, \cdots, z_n)^T~.$ Considering the case when $t=0$ and $P(0) Z = I~$, it immediately gets $ \rho_i^i z_i =1 $ or  $  z_i =1/\rho_i^i < 1 $ (since $\rho_i^i > 1$). For most of cases~\cite{wangyk2007,wangbq2012} $g_i^i \gg 1$, and $g_i^j$ should be far from $ g_i^i$, which means $\rho_i^i \gg 1$. Under such a case, we consider the condition
\beq \sum\limits^{n}_{i=1} \frac{1}{\rho_i^i } \leq 1~, \label{eq:condition} \eeq
which is equivalent to $\sum\limits^{n}_{i=1} z_i \leq 1 $. For $\sum\limits^{n}_{i=1} z_i < 1 $ case, it always can increase some $z_i$ to make $\sum\limits^{n}_{i=1} z_i = 1 $. Therefore, it always has $P(0) Z \geq I$ for $\sum\limits^{n}_{i=1} z_i = 1 $. Notice $P(t)Z$ is an increase function on $t$, so
$$P(1) Z \geq P(0) Z \geq I~.$$
This finish the required proof.
\end{proof}

\noindent {\bf Remark} Firstly, the application of Cramer's theorem immediately yields the solution of equation $G Y = Y_g$. But such a solution can not guarantee $Y \ge 0$. Secondly, conclusions on positive linear system can guarantee a positive solution~\cite{Kaykobad1985a,Kaykobad1985b}. However, the constraint $\sum\limits^{n}_{l=1} y_l=y$ makes these conclusions not feasible for the present problem. Thirdly, the condition of Eq.~\eref{eq:condition} greatly simplifies the proof of \thref{theorem4}. Nevertheless, as from above proof, \thref{theorem4} can be actually applied for some cases when $\sum\limits^{n}_{i=1} 1/\rho_i^i > 1$. The applicable degree of inequality depends on the variation character of $\rho_i^k ~(k=l ~\text{or} ~ j)$. Anyway, further mathematical study of this problem is beyond the interest of this paper.

\begin{theorem}[Luminosity Distribution Principle] For multi-parameter scan scheme, the luminosity allocation among points is relevant to the relative importance between parameters, cross section and its derivative to parameter at optimal point.
\label{theorem5}\end{theorem}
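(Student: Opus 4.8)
The plan is to recast the luminosity allocation as a constrained minimization of the parameter errors and to solve it in closed form by Lagrange multipliers. First I would work in the decoupled frame guaranteed by \thref{theorem2}: the orthogonal (Cholesky) diagonalization sends $\tilde{H}$ to a diagonal matrix $D$ with one entry $D_i$ per parameter, and by \lmref{lemma3} this affine change of variables leaves the optimization outcome intact. Invoking \thref{theorem4}, I would retain only the $n$ optimal points, point $i$ being the maximizer of the auxiliary function for parameter $i$; every other point carries zero luminosity. In this frame $D_i=\sum_{l} x_l\, g_i^l$, and since point $i$ is precisely where $g_i^{\,i}$ dominates, the leading behaviour is $D_i\approx x_i\, g_i^{\,i}$ with $g_i^{\,i}=\frac{1}{\bar\sigma_i}\big(\partial\sigma_i/\partial\theta_i\big)^2$ carrying the cross section and its derivative to the parameter at the optimal point.

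Next I would build the objective. From $V\propto\tilde{H}^{-1}$ and the trace argument already used in \thref{theorem4}, the sum of squared errors is $\mathrm{tr}(D^{-1})=\sum_i 1/D_i$; since the parameters need not be equally interesting, I introduce weights $w_i\ge 0$ quantifying their relative importance and minimize $\Phi(x)=\sum_{i=1}^n w_i/D_i\approx\sum_i w_i/(x_i g_i^{\,i})$ under the constraint $\sum_i x_i=1$. Forming $\mathcal{L}=\sum_i w_i/(x_i g_i^{\,i})+\lambda(\sum_i x_i-1)$ and imposing $\partial\mathcal{L}/\partial x_i=0$ gives $x_i^2 g_i^{\,i}/w_i=1/\lambda$, so $x_i\propto\sqrt{w_i/g_i^{\,i}}$; fixing $\lambda$ by the constraint yields the explicit rule
\beq x_i=\frac{\sqrt{w_i/g_i^{\,i}}}{\sum_{j=1}^{n}\sqrt{w_j/g_j^{\,j}}}~. \eeq
This formula exhibits exactly the three dependences asserted by the theorem: through $w_i$ on the relative importance between parameters, and through $g_i^{\,i}$ on both the cross section $\bar\sigma_i$ and its derivative $\partial\sigma_i/\partial\theta_i$ at the optimal point. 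Because each term $w_i/(x_i g_i^{\,i})$ is convex on $x_i>0$ and the simplex is convex, the stationary point is the unique global minimizer, so no second-order check beyond convexity is needed.

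I expect the principal obstacle to be rigorously justifying the separable reduction $D_i\approx x_i g_i^{\,i}$. In the exact diagonal frame $D_i=\sum_l x_l g_i^l$ still receives off-diagonal contributions $\sum_{l\neq i} x_l g_i^l$ from luminosity placed at the other points, and the clean closed form above is exact only in the sharply-peaked regime where $g_i^{\,i}$ dominates $g_i^l$ for $l\neq i$ --- precisely the regime $\rho_i^{\,i}\gg 1$ controlled by the ratios $\rho_i^l=g_i^l/g_i^j$ in \thref{theorem4}. A careful argument would either carry the cross terms and show they shift the allocation only at higher order, or restrict to that regime; outside it one falls back on minimizing the full $\mathrm{tr}(D^{-1})$ numerically. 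As a consistency check, the rule should reproduce the empirically determined shares $L_1:L_2=3:1$ and $L_3\approx10\%\,L$ obtained by sampling in Sect.~\ref{sxn:samplingmed}.
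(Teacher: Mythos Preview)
Your Lagrange-multiplier strategy with weights $w_i$ is the right idea, but the route through the diagonal frame introduces a genuine gap. The Cholesky factor $\Gamma$ in \thref{theorem2} depends on $\tilde H=AYA^T$, hence on $x$; you cannot fix a diagonalizing frame first and then vary $x$ inside it, because the frame moves as $x$ moves. This forces you into the approximation $D_i\approx x_i g_i^{\,i}$, which is not only unnecessary but demonstrably wrong in the case you propose as a check: with $w_1=1$, $w_2=w_3=0$ (the $\tau$-mass setting of the Remark after \thref{theorem5}) your rule $x_i\propto\sqrt{w_i/g_i^{\,i}}$ puts all luminosity at a single point and gives $x_2=x_3=0$, whereas the sampling of Sect.~\ref{sxn:samplingmed} and the paper's formula both require nonzero shares at all three points.

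The paper avoids the difficulty by not diagonalizing at all. Once \thref{theorem4} has reduced the scheme to $m=n$ points, the derivative matrix $A$ of Eq.~\eref{expnayat} is square and \emph{does not depend on $x$}; only $Z$ does. Hence $V=(A^{T})^{-1}Z^{-1}A^{-1}$ is exact, and writing $\alpha_{ij}=(A^{-1})_{ij}$ one gets the diagonal entries in closed, separable form
\beq v_{ii}=\frac{1}{L\epsilon}\sum_{l=1}^{n}\frac{\alpha_{li}^{2}\,\sigma^{\ast}_{l}}{x_l}~, \eeq
a sum of terms each depending on a single $x_l$. Minimizing $\sum_i w_i v_{ii}$ under $\sum_l x_l=1$ by Lagrange multipliers then yields directly
\beq x_p^{2}=\frac{\sigma^{\ast}_{p}}{L\epsilon\lambda}\sum_{i=1}^{n} w_i\,\alpha_{pi}^{2}~, \eeq
with the Hessian diagonal and positive, so this is the global minimizer. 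The off-diagonal inverse-matrix elements $\alpha_{pi}$ are exactly what carry luminosity to points $p\neq i$ even when only $w_i\neq 0$, and they are what your dominant-term approximation throws away. Replace your diagonal-frame reduction by this direct inversion of the square $A$ and the rest of your argument goes through cleanly and exactly.
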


\begin{proof}  We begin with definition \eref{defcvmtx}, {\em i.e.} $V^{-1}= H$, by formulas \eref{defhij} and \eref{defhtilde}, $V^{-1}$ can be recast as
\beq V^{-1} =A Z A^T~, \label{defvinvs}\eeq
where $A$ is given in Eq. \eref{expnayat},
\beq Z_{ij}=\frac{L \epsilon x_i}{\bar{\sigma}_i} \cdot \delta_{ij}~. \label{expnz}\eeq
We know $\bar{\sigma}_i$ is observed cross section. After the first optimization, we obtained optimal parameters and have relation $\sigmast_i\equiv \sigma_i (\thetast) \approx \bar{\sigma}_i$. In the following theoretical analysis, we replace $\bar{\sigma}_i$ with $\sigmast_i$, then
\beq V =(A^T)^{-1} Z^{-1} A^{-1}~, \label{defv}\eeq
where
\beq Z^{-1}_{ij}=\frac{\sigmast_i} {L \epsilon x_i}\cdot \delta_{ij}~. \label{expnzinvs}\eeq
The element of $A^{-1}$ is represented by $\alpha$, {\em i.e.} $\alpha_{ij} \equiv A^{-1}_{ij}$. The diagonal elements of $V$ is the squared error for certain parameter, for the $i$-th parameter, the squared error reads
\beq v_{ii}= \frac{1}{L \epsilon} \cdot \sum\limits^{n}_{l=1} \frac{\alpha^2_{li} \sigmast_l}{x_l}~. \label{expnerrorsq}\eeq
For each $v_{ii}$ introduce a weight factor $w_i$ to represent the relative importance of parameter. Notice constraint $\sum\limits^{n}_{l=1}x_l=1$, introduce a Lagrange multiplier
$\lambda$, and construct a Lagrange function as follows
\beq \begin{array}{rcl}
F(\theta,x;\lambda)&=& \sum\limits^{n}_{i=1} w_{i} v_{ii} + \lambda \left(\sum\limits^{n}_{l=1}x_l -1 \right)\\
&=&{\displaystyle \sum\limits^{n}_{i=1} w_{i}
\left(\frac{1}{L \epsilon} \cdot \sum\limits^{n}_{l=1} \frac{\alpha^2_{li} \sigmast_l}{x_l}\right)
+ \lambda \left(\sum\limits^{n}_{l=1}x_l -1\right) }~.
\end{array} \eeq
The first derivative of $F$ leads to
\beq \frac{\partial F}{\partial x_p} =- \frac{1}{L \epsilon} \cdot \frac{\sigmast_p}{x_p^2} \cdot
\left(\sum\limits^{n}_{i=1} w_{i} \alpha^2_{pi} \right)+ \lambda~ ;
\label{expnfirstf}\eeq
the second derivative of $F$ leads to
\beq \frac{\partial^2 F}{\partial x_q \partial x_p} =\frac{1}{L \epsilon} \cdot \frac{2\sigmast_p}{x_p^3} \cdot
\left(\sum\limits^{n}_{i=1} w_{i} \alpha^2_{pi} \right) \delta_{pq}~.
\label{expnsecondf}\eeq
Since the Hessian of $F$ only has diagonal elements and all of them are positive, the extremum determined by the first derivative are all minimizers. By setting Eq.\eref{expnfirstf} to zero, it is readily to get
\beq x_p^2 =\frac{\sigmast_p}{L \epsilon \lambda}  \cdot
\left(\sum\limits^{n}_{i=1} w_{i} \alpha^2_{pi} \right),
\label{expnxpsq}\eeq
on the strength of which the luminosity allocation among different optimal points can be obtained.
\end{proof}

\noindent {\bf Remark} The weight factor $w_i$ is determined usually according to antecedent experience. Take $\taums$ scan as an example, as to parameter 1 ($\taums$), 2 ($\epsilon$), and 3 ($\sigbg$), the corresponding weight is set to be $w_1=1$, $w_2=w_3=0$, which indicates that only the uncertainty of $\taums$ is cared about. Under such a condition, \thref{theorem5} yields the following
relation
\begin{equation}
\label{eq:taumsptratio}
x_1:x_2:x_3 = (\alpha_{11}\sqrt{\sigmast_1}) :
(\alpha_{21}\sqrt{\sigmast_2}) : (\alpha_{31}\sqrt{\sigmast_3})~.
\end{equation}
The cross sections and their derivatives are calculated respectively at energy points~\cite{wangyk2009} $E_1$ = 3.5538 GeV, $E_2$ = 3.595 GeV, and $E_3$ = 3.50 GeV. The optimal fraction of luminosity at these points are $x_1$ = 70.0\%, $x_2$ = 21.8\%, and x$_3$ = 8.2\%. Comparing with the results by the sampling technique (refer to Ref.~\cite{wangyk2009} or Sect.~\ref{sxn:samplingmed}) $x_1$ = 67.5\%, $x_2$ = 22.5\%, and $x_3$ = 10.0\%, two sets of results are consistent with each other fairly well. In the calculation, some relevant values are kept the same as those used in Ref.~\cite{wangyk2009}: $m_{\tau}$ = 1.77699 GeV, $B_{e\mu}$ = 0.06194, $\varepsilon$ = 14.2\%, and $\sigma_{BG}$ =0.024 pb.

\section{Discussion}\label{sxn:discussion}
\subsection{Equivalence between likelihood and chisquare fits}\label{sxteqlkandsq}
We start from likelihood estimator (refer to Eq.~\eref{lklihd}),
\begin{equation}
\mathrm{LF} = \prod_{l=1}^{n} \frac{\mu_l^{N_l} e^{-\mu_l}} {N_l!}~,
\label{lklihd2}
\end{equation}
where $N_l$ is the number of observed events at $l$-th scan point,
and $N_i$ is assumed following a Poisson distribution with expectation $\mu_l$.
To find the maximum of likelihood function equals to find the minimum of
function $f$ defined as
\begin{equation}
f = -\ln\mathrm{LF} = - \sum_{l=1}^{n} \ln\left(\frac{\mu_l^{N_l}
    e^{-\mu_l}} {N_l!} \right).
\label{eq:fdefasmlf}
\end{equation}
With definition $$f_l \equiv \frac{\mu_l^{N_l}  e^{-\mu_l}} {N_l!}~, $$
the second order derivative of function $f$ reads
\begin{equation}
\frac{\partial^2 f} {\partial \theta_i \partial \theta_j} =
\sum_{l=1}^{n} \left[ \left( \frac{1}{f_l} \frac{\partial f_l} {\partial
    \theta_i} \right) \left( \frac{1}{f_l} \frac{\partial f_l}
  {\partial \theta_j} \right) - \frac{1}{f_l} \frac{\partial^2
  f_l}{\partial \theta_i \partial \theta_j} \right]~.
\label{eq:skddoff}\end{equation}
After a little algebra, this equation is reduced to
\beq
\frac{\partial^2 f} {\partial \theta_i \partial \theta_j} = \sum_{l=1}^{n} \left\{ \frac{N_l}{\mu_l^2} \frac{\partial\mu_l}{\partial\theta_i} \frac{\partial \mu_l}{\partial \theta_j} + \left(\frac{N_l}{\mu_l}-1\right) \frac{\partial^2 \mu_l}{\partial\theta_i \partial\theta_j}   \right\}~.
\label{eq:sdfexpa}\eeq
Notice that for Poisson distribution the expectation of $(N_l-\mu_l)^2$ is $\mu_l$, for large $N_l$ we take approximation $(N_l-\mu_l)^2 \approx \mu_l$, so it is easy to see
$$\left(\frac{N_l}{\mu_l}-1\right) \approx \frac{1}{\sqrt{\mu_l}}, \mbox{~~ for large~~} N_l~. $$
In addition, utilizing relation $\mu_l = L_l \sigma_l$ ($\sigma$ denotes the theoretical cross section), $N_l = L_l \bar{\sigma}_l$ ($\bar{\sigma}$ denotes the observed cross section), and after the first optimization, taking approximation $\sigma_l \approx \bar{\sigma}_l$, Eq.~\eref{eq:sdfexpa} can be recast as
\beq
\frac{\partial^2 f} {\partial \theta_i \partial \theta_j} = \sum_{l=1}^{n} \left\{ \frac{N_l}{\bar{\sigma}_l^2} \frac{\partial\mu_l}{\partial\theta_i} \frac{\partial \mu_l}{\partial \theta_j} - \sqrt{\frac{N_l}{\bar{\sigma}_l^2}} \frac{\partial^2 \mu_l}{\partial\theta_i \partial\theta_j}   \right\}.
\label{eq:sdfexpb}\eeq
Since $\sigma$ is the physics quantity which keeps invariant for a definite process, therefore when $N_l$ is large enough, it always satisfies
$$\frac{N_l}{\bar{\sigma}_l^2} \gg \sqrt{\frac{N_l}{\bar{\sigma}_l^2}}. $$
This indicates that comparing with the first term of Eq.~\eref{eq:sdfexpb}, the second term can be neglected, which leads to
\begin{equation}
\frac{\partial^2 f} {\partial \theta_i \partial \theta_j} \approx L \epsilon
\sum_{l=1}^{n} \frac{x_l} {\bar{\sigma}_l} \left( \frac{\partial \sigma_l}
  {\partial \theta_i} \right) \left( \frac{\partial \sigma_l}
  {\partial \theta_j} \right).
\label{eq:fijexprn}\end{equation}
Here relation $N_l=\epsilon L x_l \bar{\sigma}_l$ is adopted. Comparing with Eq.~\eref{defhij}, ${\displaystyle \frac{\partial^2 f} {\partial \theta_i \partial \theta_j}}$ is just the element of Hesse matrix ($H_{ij}$). Moreover, both likelihood and chisquare estimators have the same form of gradient relevant to $\sigma_l$, therefore Gauss-Newton Algorithm can be executed for likelihood estimator exactly the same way as that for chisquare. On this extent, it is reasonable to claim that the first optimization processes for both likelihood and chisquare fits are equivalent.

As a matter of fact, we could view the equivalence between likelihood and chisquare fit from another viewpoint. Notice for large $N$, Poisson distribution approximates Gauss distribution, $i.e.$
\beq
\frac{\mu^{N} e^{-\mu}} {N!} \xrightarrow{N\to \infty} \frac{1}{\sqrt{2\pi E_N}}
e^{-\frac{(N-\mu)^2}{2 E_N}},
\label{psnandgauss}\eeq
where $E_N$ is expectation of $N$, if we take $E_N \approx N$, then
\beq
\mathrm{LF} = \prod_{l=1}^{n} \frac{\mu_l^{N_l} e^{-\mu_l}} {N_l!}
 \xrightarrow{N\to \infty}
\mathrm{LF} = \prod_{l=1}^{n} \frac{1}{\sqrt{2\pi N_l}}
e^{-\frac{(N_l-\mu_l)^2}{2 N_l}}.
\label{lklihdforpandg}\eeq
Whence
\beq
f = -\ln\mathrm{LF} = \frac{1}{2} \sum_{l=1}^{n} \frac{(N_l-\mu_l)^2}{N_l}+
    \frac{1}{2}  \sum_{l=1}^{n} \ln(2\pi N_l).
\label{fforpandga}
\eeq
In optimization process, the second term as a constant can be neglected, so
$f$ becomes
\beq
f = -\ln\mathrm{LF} = \frac{1}{2} \sum_{l=1}^{n} \frac{(N_l-\mu_l)^2}{N_l},
\label{fforpandgb}
\eeq
so except for a factor $1/2$, this is the chisquare form in Eq.~\eref{eq:chisq}, which is adopted from the very beginning of study.

\subsection{Effect due to systematic uncertainty}\label{sxtsyserror}
In the light of study of $\taums$ scan, the uncertainty due to energy calibration dominates over the others~\cite{Mo:2007npb,Fu2008}. Some special techniques have been adopted to decrease such an uncertainty. For example, Compton backscattering technique is utilizing to establish beam energy measurement system at KEDR~\cite{bemskder} and BES~\cite{bemsepc,Mo2008}, to increase the accuracy of beam energy at the level of $10^{-4}$ or better.

There is a concise way to taking into account of such a kind of uncertainty. We begin with chisquare formula~\eref{eq:chisq}
\begin{equation}
\chi^2 = \sum_{i=1}^{n} \left( \frac{\bar{N}_i - N_i} {\bar{\Delta}_i} \right)^2~,
\label{eq:chisqb}\end{equation}
where $\bar{N}$ denotes the number of observed events, ${N}$ the number of theoretical estimated events, $\bar{\Delta}$ the error of $\bar{N}$, then according to Refs.~\cite{Barker1974,Orear1982,Roe2001}, the effect of uncertainty due to energy $E$ will taken into account by a new chi-square form
\begin{equation}
\chi^2_E = \sum_{i=1}^{n} \left( \frac{\bar{N}_i - N_i} {\tilde{\Delta}_i} \right)^2~,
\label{eq:chisqeg}\end{equation}
where
\beq
\tilde{\Delta}_i^2 = \bar{\Delta}_i^2 +\left[ \left.\frac{d N}{d E}\right|_{E=E_i} \cdot \Delta_{E_i} \right]^2~,
\label{eq:tlddelta}\eeq

Notice $\bar{\Delta}_i^2=\bar{N}_i$, and $N_i(\bar{N}_i)=L_i \sigma_i (\bar{\sigma}_i)$, with following definitions
$$\hat{\sigma}_{E_i} \equiv E_i \cdot \left.\frac{d \sigma}{d E}\right|_{E=E_i}~,~~
\delta_{E_i} \equiv \frac{\Delta_{E_i}}{E_i}~,  $$
Eq.~\eref{eq:tlddelta} is recast as
\beq
\begin{array}{rcl}
\tilde{\Delta}_i^2 &=&\bar{N}_i + L_i^2 (\hat{\sigma}_{E_i})^2 \delta_{E_i}^2 \\
                   &=&L_i \tilde{\sigma}_i~,
\end{array}
\label{eq:tlddeltab}\eeq
with definition
\beq
\tilde{\sigma}_i \equiv \bar{\sigma}_i+ L_i (\hat{\sigma}_{E_i})^2 \delta_{E_i}^2~.
\label{eq:tldsigma}\eeq

From the proofs of theorems in Section~\ref{sxn:theory}, the change from $\bar{\sigma}$ to $\tilde{\sigma}$ will affect \thref{theorem3} and \thref{theorem5}.
As far as \thref{theorem3} is concerned, if $\tilde{\sigma}_i$ weekly depends on $E_i$, or $\tilde{\sigma}_i$ is a smooth function of $E_i$, the extremum of $g(E)$ determined by condition $dg(E)/dE=0$ remains almost the same as before.
As far as \thref{theorem5} is concerned, if $\sigmast_p$ is replaced by $\tilde{\sigma}^{\ast}_p$, the corresponding luminosity allocation can be obtained.

\subsection{Correlation issue}\label{sxtcorrelation}
Correlation due to systematic uncertainty in scan experiment is always an annoying problem. A so-called scale factor method was used to deal with correlated data~\cite{D'Agostini:1993uj}. The application of such a method in scan experiment is explored in details in Refs.~\cite{moxh2003a,moxh2003b,moxh2007}. The general idea is to introduce a factor corresponding to the correlating uncertainty, so that the factor can be treated as an independent measurement variable. Therefore, the method depicted in the preceding section for independent systematic uncertainty can be adopted. However, more special study is needed for such a case.

\subsection{Optimization issue}
In developing theory of second optimization, many fine analytical properties for the objective function have been assumed in order to make the first optimization feasible, as it is stated in Sect.~\ref{sxn:theory}. Especially in the proof of \thref{theorem3}, the auxiliary function is required to have only one maximum in scan region. In fact, if the auxiliary function have several same maxima, any one of them is equivalently good for parameter determination. So for this case, one point is enough as well.

However, if we come across the case where the parameters contained in the objective function have multiple solution~\cite{Yuan:2009gd,Mo:2010bw,Yuan2012,Zhu:2011ha}, optimization procedure can only be applied for one set of parameters. For the general case involving all sets of parameters, it is a topic for the further investigation.

In Sect.~\ref{sxn:theory}, Gauss-Newton algorithm is adopted for the first optimization, which is a universally utilized method and has very good properties. Especially, affine invariance of step makes the proof of \thref{theorem2} (the theorem of independence of optimal parameters) feasible and easy. As far as many other algorithms are concerned, theory of second optimization should be considered and studied separately.

\subsection{Sampling method and analytical theory}\label{sxtcmpsampandana}
Last but no least, we will say few words about the sampling method and the analytical theory. It is obviously the former provides the important clue and implication before the latter, and accommodate confirmation after the latter. Moreover, the sampling method can perform study on rather more complex and unknown cases which could not be settled by the present analytical theory.

As far as analytical theory is concerned, its merit is prominent. As long as the conclusion is proved mathematically, the relevant issue can be fixed finally. Just as it have been shown, the analytical theory can provide robust and optimal scheme design for scan experiment. Moreover, even if some more generalized and more complex conclusions could not be proved temporarily, proved conclusions can provide us much clues for further exploration. It is evident that two approaches are complementary and of paramount important in developing theory of second optimization for scan experiment.

\section{Summary}\label{sct:sumary}

In this paper, the sampling technique and the analytic analysis are adopted for multi-parameter optimization fitting involving scan data. The conclusions drawn from two approaches are consistent with each other just as expected, that is

\begin{enumerate}
\item For $n$ parameters scan experiment, $n$ energy points are necessary and sufficient for optimal determination of these $n$ parameters;
\item Each optimal position can be acquired by single parameter scan (sampling method), or resort to the analysis of auxiliary function (analytic theory);
\item The luminosity allocation among points can be determined analytically, which is relevant to relative importance between parameters, cross section and its derivative to parameter at optimal point.
\end{enumerate}

Theory of second optimization for scan experiment established in this paper can provide the state of art scheme for scan experiments that aim at accurate measurements of interesting parameters.

\section*{Acknowledgement}
This work is supported in part by National Natural Science Foundation of China (NSFC) under contracts Nos.: 11375206, 10775142, 10825524, 11125525, 11235011; the Ministry of Science and Technology
of China under Contract Nos.: 2015CB856700, 2015CB856706, and the CAS Center for Excellence in Particle Physics (CCEPP).

\setcounter{equation}{0}
\setcounter{section}{0}
\setcounter{theorem}{0}
\renewcommand{\theequation}{A.\arabic{equation}}

\section*{Appendix A}
As to the mathematical details involved in this paper, it can be referred to Refs.~\cite{bk:Qiu2013}, ~\cite{bk:Golan2007}, ~\cite{bk:Wolf1978}, ~\cite{bk:Yuan1997}, and \cite{bk:Boyd2004}. Complied here are some materials from various areas of mathematics that are used or supposed to be satisfied for the proofs in Sect.~\ref{sxn:theory}.

\begin{proposition} If an $n\times n$ matrix $E$ satisfies $\|E\|<1$, then ${\displaystyle \sum\limits^{\infty}_{k=0} E^k = (I-E)^{-1} }$.
\label{apdxprpsn1}\end{proposition}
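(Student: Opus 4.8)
The plan is to establish the geometric series identity for matrices, which is the standard Neumann series argument. First I would verify that the series $\sum_{k=0}^{\infty} E^k$ actually converges. Since $\|E\| < 1$ and the operator norm is submultiplicative, we have $\|E^k\| \leq \|E\|^k$, so the partial sums form a Cauchy sequence in the (complete) space of $n\times n$ matrices; absolute convergence of $\sum_k \|E\|^k$ as a geometric series of real numbers then forces convergence of the matrix series to some limit, call it $S$.

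Next I would identify $S$ with $(I-E)^{-1}$ by the telescoping trick. Consider the finite partial sum $S_N = \sum_{k=0}^{N} E^k$ and multiply by $(I-E)$ on either side:
\[
(I-E)S_N = S_N(I-E) = \sum_{k=0}^{N} E^k - \sum_{k=0}^{N} E^{k+1} = I - E^{N+1}.
\]
Because $\|E^{N+1}\| \leq \|E\|^{N+1} \to 0$ as $N\to\infty$, the term $E^{N+1}$ vanishes in the limit, yielding $(I-E)S = S(I-E) = I$. This shows $I-E$ is invertible with inverse $S$, which is exactly the claim.

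The only genuine obstacle is the convergence step, and it is mild: one must be careful to invoke completeness of the matrix space (equivalently, that $R^{n\times n}$ with any norm is a Banach space) to pass from the Cauchy property to an actual limit, and to justify that matrix multiplication is continuous so that the limit commutes with multiplying by $(I-E)$. Everything else is algebraic telescoping. I would present the argument in two short movements—first convergence via the norm bound $\|E^k\|\le\|E\|^k$, then identification of the limit via $(I-E)S_N = I - E^{N+1}$—keeping the exposition brief since the result is classical and serves only as a lemma supporting \lmref{lemma1}.
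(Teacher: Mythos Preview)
Your argument is correct and is the standard Neumann series proof. The paper itself does not supply a proof of this proposition: it is listed in Appendix~A as background material, stated without demonstration and used only to justify the step $\sum_{k=0}^{\infty}(I-A^{-1}B)^k=[I-(I-A^{-1}B)]^{-1}$ in the proof of \lmref{lemma1}. There is therefore nothing to compare; your two-movement presentation (convergence via $\|E^k\|\le\|E\|^k$, then telescoping $(I-E)S_N=I-E^{N+1}$) is exactly the classical route and is appropriate here.
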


\begin{theorem} If $A \in R^{n\times n}$ is a symmetric matrix, there exists an orthogonal matrix $U$ such that $\Lambda= U^T A U $ is a diagonal matrix.
\label{apdxprpsn2}\end{theorem}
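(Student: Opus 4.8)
The plan is to prove this spectral theorem for real symmetric matrices by induction on the dimension $n$, where the decisive input is that $A$ possesses a genuine real eigenvalue together with a real eigenvector. First I would settle the reality question, since this is the only place where symmetry is truly exploited. By the fundamental theorem of algebra the characteristic polynomial $\det(A-\lambda I)$ has a root $\lambda_0 \in \mathbb{C}$, with a possibly complex eigenvector $v \neq 0$ satisfying $A v = \lambda_0 v$. Denoting by $\bar{v}^T$ the conjugate transpose, I would compute $\bar{v}^T A v = \lambda_0\, \bar{v}^T v$; since $A$ is real and symmetric, the scalar $\bar{v}^T A v$ equals its own complex conjugate and is therefore real, while $\bar{v}^T v = \|v\|^2 > 0$ is real and positive. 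Hence $\lambda_0$ is real. Because $\lambda_0$ is real, the real matrix $A-\lambda_0 I$ is singular and thus has a nonzero real null vector, which I normalise to obtain a real unit eigenvector $u_1$ with $A u_1 = \lambda_0 u_1$.

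For the induction itself, the base case $n=1$ is immediate, as every $1\times 1$ matrix is already diagonal. For the inductive step I would extend $u_1$ to an orthonormal basis $\{u_1, u_2, \ldots, u_n\}$ of $R^n$ (for instance by Gram--Schmidt) and assemble the orthogonal matrix $Q=(u_1 \mid Q_2)$ whose first column is $u_1$. Computing $Q^T A Q$ and using $A u_1 = \lambda_0 u_1$ shows that its first column is $(\lambda_0, 0, \ldots, 0)^T$; crucially, $Q^T A Q$ is again symmetric, so its first row must likewise vanish off the diagonal. Thus $Q^T A Q$ is block diagonal, $Q^T A Q = \mathrm{diag}(\lambda_0, A_1)$, with $A_1$ a symmetric $(n-1)\times(n-1)$ matrix.

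Finally I would invoke the induction hypothesis on $A_1$ to obtain an orthogonal $(n-1)\times(n-1)$ matrix $U_1$ with $U_1^T A_1 U_1 = D_1$ diagonal, and set $U = Q\,\mathrm{diag}(1, U_1)$. Being a product of orthogonal matrices, $U$ is orthogonal, and a direct block computation gives $U^T A U = \mathrm{diag}(\lambda_0, D_1)$, which is the desired diagonal $\Lambda$. The main obstacle is the reality argument of the first paragraph: it is the sole step where symmetry of $A$ is essential (a general real matrix need not have any real eigenvalue at all), and the passage through complex vectors must be handled carefully so that, once a real eigenpair is secured, the entire induction proceeds within the real orthogonal group.
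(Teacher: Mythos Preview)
Your proof is correct and is the standard inductive proof of the spectral theorem for real symmetric matrices. However, the paper itself does not prove this statement at all: it appears in Appendix~A under the heading ``Complied here are some materials from various areas of mathematics that are used or supposed to be satisfied for the proofs,'' and is simply quoted as a known result from linear algebra with references to standard textbooks. So there is no ``paper's own proof'' to compare against; you have supplied a valid proof for a fact the paper takes for granted.
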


\begin{proposition} The orthogonal transformation keeps the trace of matrix invariant.
\label{apdxprpsn3}\end{proposition}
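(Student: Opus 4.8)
The plan is to reduce the statement to the cyclic invariance of the trace, which is the only genuinely computational ingredient. Recall that an orthogonal matrix $U$ satisfies $U^T U = U U^T = I$, and that the "orthogonal transformation" of a matrix $A$ means the similarity $B = U^T A U$ (the form actually used in the proof of \thref{theorem4}, and consistent with \thref{apdxprpsn2}). The goal is therefore to show that $\operatorname{tr}(U^T A U) = \operatorname{tr}(A)$.

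First I would establish the cyclic property $\operatorname{tr}(XY) = \operatorname{tr}(YX)$ for any two $n \times n$ matrices $X$ and $Y$. This follows immediately from the definitions of the trace and of matrix multiplication: writing $\operatorname{tr}(XY) = \sum_{i} (XY)_{ii} = \sum_i \sum_j X_{ij} Y_{ji}$ and interchanging the order of the two finite summations yields $\sum_j \sum_i Y_{ji} X_{ij} = \sum_j (YX)_{jj} = \operatorname{tr}(YX)$.

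Then I would apply this identity to the product $U^T A U$. Treating $U^T$ as the left factor and $AU$ as the right factor, cyclicity gives $\operatorname{tr}\bigl(U^T (AU)\bigr) = \operatorname{tr}\bigl((AU) U^T\bigr) = \operatorname{tr}\bigl(A (U U^T)\bigr)$. Invoking orthogonality, $U U^T = I$, so the right-hand side collapses to $\operatorname{tr}(A I) = \operatorname{tr}(A)$, which is exactly the claim.

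The proof presents essentially no obstacle; the only step requiring any care is the index manipulation behind cyclic invariance, and even that is routine because all the sums are finite and may be reordered freely. An alternative route would be to observe that $A$ and $U^T A U$ are similar and hence share the same characteristic polynomial, whose next-to-leading coefficient equals the trace, so that the trace is a similarity invariant a fortiori; but this invokes heavier machinery than the direct computation, and I would not pursue it here.
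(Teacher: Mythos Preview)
Your proof is correct and entirely standard: you establish the cyclic identity $\operatorname{tr}(XY)=\operatorname{tr}(YX)$ directly from the double sum and then apply it together with $UU^T=I$ to obtain $\operatorname{tr}(U^TAU)=\operatorname{tr}(A)$.

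As for comparison, the paper does not actually prove this proposition. It appears in Appendix~A among ``some materials from various areas of mathematics that are used or supposed to be satisfied for the proofs in Sect.~\ref{sxn:theory},'' i.e.\ it is merely quoted as a standard background fact without any argument. Your short computation therefore supplies exactly what the paper omits, and nothing further is needed.
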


\begin{proposition} The inverse of unit upper (lower) triangle matrix is the unit upper (lower) triangle matrix as well.
\label{apdxprpsn4}\end{proposition}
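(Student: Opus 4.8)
The plan is to reduce the statement to the nilpotency of the strictly triangular part and then exhibit the inverse explicitly as a terminating Neumann-type series, in the same spirit as Proposition~\ref{apdxprpsn1}. I would prove the unit upper triangular case first and recover the lower case by transposition.

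First I would write a unit upper triangular matrix $U$ as $U = I + N$, where $N \equiv U - I$ is strictly upper triangular, i.e. $N_{pq}=0$ whenever $p \ge q$. The key structural fact to establish is that such an $N$ is nilpotent with $N^n = 0$. An induction on $k$ shows that $(N^k)_{pq}=0$ unless $q-p \ge k$: indeed $(N^{k+1})_{pq}=\sum_r (N^k)_{pr} N_{rq}$ is nonzero only if $r-p\ge k$ and $q-r\ge 1$, which forces $q-p\ge k+1$. Since the index gap $q-p$ is at most $n-1$, this yields $N^n=0$, and in particular every power $N^k$ with $k\ge 1$ is strictly upper triangular.

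Next I would verify directly that
\[
U^{-1} = \sum_{k=0}^{n-1} (-1)^k N^k = I - N + N^2 - \cdots + (-1)^{n-1} N^{n-1}.
\]
Multiplying $(I+N)$ by the right-hand side telescopes, $(I+N)\sum_{k=0}^{n-1}(-1)^k N^k = I + (-1)^{n-1} N^n = I$, using $N^n = 0$. Because the $k=0$ term contributes the identity while every remaining term $(-1)^k N^k$ with $k\ge 1$ is strictly upper triangular by the previous step, the sum is of the form $U^{-1}=I+(\text{strictly upper triangular})$, hence again unit upper triangular.

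Finally, for the unit lower triangular case I would pass to the transpose: if $L$ is unit lower triangular then $L^T$ is unit upper triangular, so $(L^T)^{-1}=(L^{-1})^T$ is unit upper triangular by the first part, whence $L^{-1}$ is unit lower triangular. The main obstacle is the bookkeeping in the nilpotency claim, namely tracking precisely how each multiplication by $N$ raises the index gap $q-p$; once that grading property is in hand, both the telescoping identity and the triangular structure of the inverse follow immediately. An alternative route would be a block induction on $n$, partitioning $U$ into a leading $(n-1)\times(n-1)$ unit upper triangular block and solving for the inverse by back-substitution, but this merely trades the nilpotency lemma for index bookkeeping of comparable effort.
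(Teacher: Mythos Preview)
Your argument is correct: the decomposition $U=I+N$ with $N$ strictly upper triangular, the nilpotency $N^n=0$ established via the index-gap induction, the telescoping verification of $U^{-1}=\sum_{k=0}^{n-1}(-1)^kN^k$, and the transposition reduction for the lower triangular case are all sound and cleanly executed.

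As for comparison with the paper: the paper does not actually supply a proof of this proposition. It appears in Appendix~A among a list of auxiliary linear-algebra facts that are merely stated (``compiled here are some materials \ldots\ that are used or supposed to be satisfied'') and cited to standard references. So there is no argument in the paper to compare yours against; your Neumann-series approach is a perfectly standard way to fill in this gap, and it has the pleasant feature of echoing the spirit of Proposition~\ref{apdxprpsn1} already present in the appendix.
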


\begin{proposition} An $n\times n$ matrix $E$ is invertible (also nonsingular or non-degenerate) if and only if its determinant is not equal to zero.
\label{apdxprpsn4a}\end{proposition}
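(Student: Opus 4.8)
The plan is to prove the two implications separately, using the multiplicative property of the determinant in one direction and the explicit adjugate (cofactor) construction of the inverse in the other.

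First, for the forward implication, I would assume $E$ is invertible, so there is a matrix $E^{-1}$ with $E E^{-1} = I$. Applying the determinant and using $\det(AB) = \det(A)\det(B)$ together with $\det(I) = 1$ gives $\det(E)\,\det(E^{-1}) = 1$, whence $\det(E) \neq 0$. This direction is immediate once multiplicativity of the determinant is granted.

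For the converse, I would assume $\det(E) \neq 0$ and exhibit the inverse explicitly. Let $C$ be the cofactor matrix, whose $(i,j)$ entry is $(-1)^{i+j}$ times the $(i,j)$ minor of $E$, and set $\mathrm{adj}(E) = C^T$. The central identity to establish is $E\,\mathrm{adj}(E) = \det(E)\,I$. Its diagonal entries are exactly the Laplace (cofactor) expansions of $\det(E)$ along each row, while its off-diagonal entries are \emph{alien cofactor} expansions — the expansion of the determinant of a matrix with a repeated row — and therefore vanish. Dividing by the nonzero scalar $\det(E)$ then yields $E \cdot \bigl(\det(E)^{-1}\mathrm{adj}(E)\bigr) = I$, so $E^{-1} = \det(E)^{-1}\,\mathrm{adj}(E)$ and $E$ is invertible.

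The main obstacle is the converse, and specifically the adjugate identity $E\,\mathrm{adj}(E) = \det(E)\,I$: establishing the vanishing of the off-diagonal entries requires the fact that expanding a determinant along one row using the cofactors of a \emph{different} row returns zero (equivalently, a determinant with two equal rows is zero). Once this Laplace-expansion machinery is in place, the result is routine. An alternative route would bypass the adjugate entirely by treating invertibility as equivalent to linear independence of the columns and relating both conditions to whether Gaussian elimination encounters a zero pivot, since the signed product of the pivots equals $\det(E)$ and is nonzero precisely when no zero pivot occurs.
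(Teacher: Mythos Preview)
Your argument is correct and entirely standard: multiplicativity of the determinant handles the forward direction, and the adjugate identity $E\,\mathrm{adj}(E)=\det(E)\,I$ (established via Laplace expansion and the vanishing of alien cofactor sums) handles the converse. There is nothing to compare against, however: the paper does not prove this proposition at all. It appears in Appendix~A as one of several background facts from linear algebra that are merely stated (alongside Cramer's rule, Cholesky decomposition, and the like) and cited to standard textbooks, so any correct proof you supply goes beyond what the paper itself provides.
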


\begin{theorem}[Cramer theorem]  If $A \in R^{n\times n}$ is an $n\times n$ nonsingular matrix and if $Y= (y_{1}, y_{2}, \cdots, y_{n})^T \in R^{n}$, then the system of linear equations $AX=Y$ has the unique solution or $X= (x_{1}, x_{2}, \cdots, x_{n})^T \in R^{n}$ in which, for each $1 \leq j \leq n$, we have $x_i=|A|^{-1} |A_{(i)}|$, where $A_{(i)}$ is the matrix formed from $A$ by replacing the $i$-th column of $A$ by $Y$.
\label{apdxprpsn4b}\end{theorem}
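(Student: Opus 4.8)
The plan is to split the statement into two parts: the qualitative claim (existence and uniqueness of the solution) and the quantitative claim (the explicit determinant formula for each component). For the qualitative part I would invoke Proposition~\ref{apdxprpsn4a}: since $A$ is nonsingular, $|A|\neq 0$ and the inverse $A^{-1}$ exists, so $X=A^{-1}Y$ solves $AX=Y$; uniqueness follows because $AX=AX'$ forces $X=X'$ upon left-multiplication by $A^{-1}$. This disposes of the existence/uniqueness assertion and reduces the whole problem to verifying the closed-form expression for a single component $x_i$.

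For the explicit formula I would work directly with the column structure of the determinant rather than with the adjugate, since the former is shorter and self-contained. Write $A=(a_1,a_2,\ldots,a_n)$ in terms of its columns $a_k$, so that $AX=Y$ reads $Y=\sum_{k=1}^{n} x_k a_k$. Then I would examine $|A_{(i)}|$, the determinant of the matrix whose $i$-th column has been replaced by $Y$. Substituting the expansion of $Y$ and using the multilinearity of the determinant in its columns gives
\[
|A_{(i)}| = \det\!\left(a_1,\ldots,\sum_{k=1}^{n} x_k a_k,\ldots,a_n\right) = \sum_{k=1}^{n} x_k \,\det(a_1,\ldots,a_{i-1},a_k,a_{i+1},\ldots,a_n),
\]
where in each summand the vector $a_k$ occupies the $i$-th slot.

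The key step is then the alternating property of the determinant: whenever $k\neq i$, the matrix $\det(a_1,\ldots,a_{i-1},a_k,a_{i+1},\ldots,a_n)$ has two identical columns (the original $k$-th column $a_k$ together with the inserted $i$-th column $a_k$), so its determinant vanishes. Only the term $k=i$ survives, and it equals $|A|$ exactly; hence $|A_{(i)}|=x_i\,|A|$. Dividing by $|A|\neq 0$ yields $x_i=|A|^{-1}|A_{(i)}|$, the claimed formula, for each component $i$.

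I do not anticipate a genuine obstacle, as the result is classical and the argument is a direct consequence of the defining multilinear and alternating properties of the determinant, combined with the nonsingularity furnished by Proposition~\ref{apdxprpsn4a}. The only point requiring care is bookkeeping: keeping track of which column is being replaced, so that the ``two equal columns'' cancellation is invoked in the correct slot. An equivalent route through the adjugate identity $A^{-1}=|A|^{-1}\,\mathrm{adj}(A)$ would also succeed, recognizing $\sum_j C_{ji}\,y_j$ as the cofactor expansion of $|A_{(i)}|$ along its $i$-th column; but that introduces cofactor notation without shortening the argument, so I would prefer the column-multilinearity proof sketched above.
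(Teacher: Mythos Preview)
Your proof is correct and is a standard, clean derivation of Cramer's rule via multilinearity and the alternating property of the determinant. However, the paper itself does not prove this statement: it appears in Appendix~A as one of the ``subsidiary mathematical materials'' compiled from standard references, and is stated without proof. So there is no paper proof to compare against; your argument would simply supply what the paper cites as background.
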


\begin{theorem}[Cholesky decomposition theorem]  If $A \in R^{n\times n}$ is a symmetric positive definite matrix, there exists a real lower triangle matrix $L$ such that $A=LL^T$ or $A =\hat{L} D \hat{L}^T$, where $D$ is the diagonal matrix, and $\hat{L}$ is the unit lower triangle matrix.
\label{apdxprpsn5}\end{theorem}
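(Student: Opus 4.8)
The plan is to prove the result by induction on the dimension $n$, peeling off one row and column at each step and reducing the problem to a smaller Schur complement. For the base case $n=1$, positive definiteness forces the single entry $a_{11}=e_1^T A e_1>0$, so $L=(\sqrt{a_{11}})$ does the job. For the inductive step I would partition the symmetric positive definite matrix as
$$A = \begin{pmatrix} a_{11} & b^T \\ b & C \end{pmatrix},$$
where $a_{11}>0$ again by positive definiteness, $b\in R^{n-1}$, and $C\in R^{(n-1)\times(n-1)}$ is symmetric. I would then look for $L$ in block lower-triangular form, with first column $(\ell_{11},\,v^T)^T$ and lower-right block $L_1$, and match $LL^T$ against $A$ entrywise. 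This forces $\ell_{11}=\sqrt{a_{11}}$ and $v=b/\ell_{11}$, and leaves the single remaining requirement $L_1 L_1^T = C - bb^T/a_{11}$ on the $(n-1)\times(n-1)$ block.

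The crux of the argument is to show that the Schur complement $S \equiv C - bb^T/a_{11}$ is itself symmetric positive definite, so that the induction hypothesis applies and supplies a real lower-triangular $L_1$ with $L_1 L_1^T = S$. Symmetry is immediate. For positive definiteness I would, given any nonzero $y\in R^{n-1}$, test $A$ against the vector $x=(-b^T y/a_{11},\,y^T)^T$; a direct computation (completing the square in the first coordinate) gives $x^T A x = y^T S y$. Since $x\neq 0$ whenever $y\neq 0$, positive definiteness of $A$ yields $y^T S y>0$, so $S$ is positive definite. This is the step I expect to be the main obstacle, since it is the only place a nonobvious choice is required — namely the test vector, equivalently the completion of the square — whereas everything else is routine block bookkeeping.

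Assembling the blocks gives $A=LL^T$ with $L$ real lower-triangular and, by construction, with strictly positive diagonal entries $\ell_{ii}$. To recover the second form $A=\hat{L} D \hat{L}^T$, I would factor out the diagonal: set $D^{1/2}=\mathrm{diag}(\ell_{11},\dots,\ell_{nn})$, $D=(D^{1/2})^2$, and $\hat{L}=L\,(D^{1/2})^{-1}$. Then $\hat{L}$ is lower-triangular with every diagonal entry equal to $1$, i.e. a unit lower-triangular matrix, and $A = LL^T = \hat{L}\,D^{1/2} D^{1/2}\,\hat{L}^T = \hat{L} D \hat{L}^T$, as required. Both factorizations thus emerge from the single inductive construction, which is why I would organize the proof around the $A=LL^T$ form first and then read off the $\hat{L} D \hat{L}^T$ form by normalizing the columns.
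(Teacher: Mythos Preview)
Your inductive proof via the Schur complement is correct and is the standard textbook argument for Cholesky decomposition. However, the paper does not actually prove this theorem: it appears in Appendix~A as one of several background results from linear algebra that are simply stated, with the reader directed to standard references for details. There is therefore no ``paper's own proof'' to compare against; your argument supplies a complete proof where the paper only records the statement.
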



\begin{theorem} Let $f:R^{n}\to R^{1}$ has continuous second partial derivatives in an open convex set $S \subseteq R^{n}$. Then
\begin{enumerate}
\item $f$ is convex in $S$ if and only if the Hesse matrix $G$ of $f$ is positive semi-definite in $S$;
\item $f$ is strictly convex in $S$ if $G$ is positive definite in $S$, but the converse is not in general true.
\end{enumerate}
\label{apdxprpsn6a}\end{theorem}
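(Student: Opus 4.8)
The plan is to reduce this multivariate statement to the familiar one-dimensional characterization of convexity by restricting $f$ to line segments. Given any two points $\theta,\phi\in S$, define the auxiliary single-variable function $\varphi(t)=f(\theta+t(\phi-\theta))$ for $t\in[0,1]$; since $S$ is convex the whole segment lies in $S$, and since $f$ has continuous second partials $\varphi$ is $C^2$ with, by the chain rule, $\varphi''(t)=(\phi-\theta)^T G(\theta+t(\phi-\theta))(\phi-\theta)$. The elementary fact I would invoke is that $f$ is convex on the convex set $S$ if and only if its restriction $\varphi$ to every such segment is convex on $[0,1]$, and likewise for strict convexity; this converts every assertion into a statement about the sign of $\varphi''$, for which the single-variable theory (a $C^2$ function of one variable is convex iff its second derivative is nonnegative, and is strictly convex if the second derivative is positive) is standard.

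For part (1), the direction $G\succeq 0 \Rightarrow f$ convex follows because $G\succeq 0$ forces $\varphi''(t)=(\phi-\theta)^T G(\cdot)(\phi-\theta)\geq 0$ along every segment, so each $\varphi$ is convex and hence $f$ is convex. Equivalently, Taylor's theorem gives $f(\phi)=f(\theta)+\nabla f(\theta)^T(\phi-\theta)+\tfrac{1}{2}(\phi-\theta)^T G(\xi)(\phi-\theta)\geq f(\theta)+\nabla f(\theta)^T(\phi-\theta)$ for some $\xi$ on the segment, which is the first-order supporting-hyperplane condition for convexity. For the converse $f$ convex $\Rightarrow G\succeq 0$, I would fix any $\theta\in S$ and any direction $d$; openness of $S$ guarantees $\theta+td\in S$ for small $|t|$, so $\varphi(t)=f(\theta+td)$ is convex and $C^2$, whence $\varphi''(0)=d^T G(\theta)d\geq 0$. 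As $d$ is arbitrary this yields $G(\theta)\succeq 0$ at every $\theta\in S$.

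For part (2), the implication $G\succ 0 \Rightarrow f$ strictly convex runs through the same Taylor expansion: for $\phi\neq\theta$ the remainder $\tfrac{1}{2}(\phi-\theta)^T G(\xi)(\phi-\theta)$ is \emph{strictly} positive, giving the strict supporting-hyperplane inequality and hence strict convexity (or, via the segment reduction, $\varphi''>0$ forces each $\varphi$ to be strictly convex). The failure of the converse is exhibited by the single counterexample $f(x)=x^4$ on $R$, which is strictly convex yet has $f''(0)=0$, so its Hessian is not positive definite everywhere.

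No single step here is genuinely hard; the points demanding care are the rigorous justification of the equivalence ``$f$ convex $\iff$ every segment-restriction $\varphi$ is convex'' (which relies essentially on the convexity of $S$) and the clean use of the openness of $S$ in the converse of part (1), so that $\varphi''(0)=d^T G(\theta)d$ is a genuine two-sided second derivative and the inequality holds for all directions $d$. The only separate ingredient is producing the counterexample for the converse in part (2), which is immediate.
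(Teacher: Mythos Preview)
Your argument is correct and is the standard textbook proof: reduce to line segments, use the one-variable second-derivative test (equivalently Taylor with Lagrange remainder), exploit openness of $S$ for the converse in part~(1), and cite $f(x)=x^4$ for the failure of the converse in part~(2). There is nothing to compare against, however, because the paper does not supply its own proof of this theorem; it appears in Appendix~A as one of several background results quoted without argument from the optimization references \cite{bk:Wolf1978,bk:Yuan1997,bk:Boyd2004}. Your write-up would serve perfectly well as the omitted justification.
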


\begin{theorem}If
\begin{enumerate}
\item $f:R^{n}\to R^{1}$ is strictly convex in the convex set $S$;
\item $f$ has continuous first partial derivatives in $S$;
\item $\thetast$ is a critical point of $f$ in $S$,
\end{enumerate}
then $\thetast$ is strong global minimizer of $f$ over $S$.
\label{apdxprpsn6b}\end{theorem}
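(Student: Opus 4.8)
The plan is to exploit first-order information only, since the hypotheses supply nothing beyond continuity of the first partial derivatives (in particular I must not appeal to the Hessian characterization of \thref{apdxprpsn6a}). The guiding idea is that strict convexity upgrades the usual supporting-hyperplane inequality to a strict one, and at a critical point that supporting hyperplane is horizontal, so $f$ must lie strictly above $f(\thetast)$ at every other point. Concretely, I would first reduce the $n$-dimensional assertion to a one-dimensional one by restricting $f$ to line segments emanating from $\thetast$: strict convexity and differentiability are both inherited by such restrictions, and the critical-point condition turns into a vanishing directional derivative, which is exactly what makes the reduction clean.

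Fix an arbitrary $y \in S$ with $y \neq \thetast$ and set $\phi(t) = f(\thetast + t(y-\thetast))$ for $t \in [0,1]$; this is well defined because $S$ is convex. Then $\phi$ is strictly convex and $C^1$ on $[0,1]$, with $\phi'(t) = \nabla f(\thetast + t(y-\thetast))^T (y-\thetast)$, so hypothesis (3) gives $\phi'(0) = \nabla f(\thetast)^T (y-\thetast) = 0$. For a differentiable strictly convex function of one variable the derivative $\phi'$ is strictly increasing: convexity already makes $\phi'$ nondecreasing, and were $\phi'$ constant on any subinterval then $\phi$ would be affine there, contradicting strict convexity. Consequently $\phi'(t) > \phi'(0) = 0$ for every $t \in (0,1]$, so $\phi$ is strictly increasing on $[0,1]$, and in particular $f(y) = \phi(1) > \phi(0) = f(\thetast)$.

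Because $y \neq \thetast$ was arbitrary, this establishes $f(y) > f(\thetast)$ for all $y \in S \setminus \{\thetast\}$, which is precisely the statement that $\thetast$ is a strong (strict) global minimizer of $f$ over $S$. The delicate point I expect to be the main obstacle is preserving strictness: the tempting shortcut of dividing the strict-convexity inequality by the step length and letting the step tend to zero collapses the strict inequality into a non-strict one in the limit, which would only yield $f(y) \geq f(\thetast)$. Working along the segment and invoking the strict monotonicity of $\phi'$ — rather than passing to a limit of difference quotients — is what rescues strictness, so the real heart of the argument is the lemma that $\phi'$ is strictly increasing for a differentiable strictly convex $\phi$.
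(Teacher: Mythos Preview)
Your argument is correct. The reduction to the one-dimensional restriction $\phi(t)=f(\thetast+t(y-\thetast))$ is the standard and cleanest route: strict convexity and $C^1$-smoothness pass to $\phi$, the critical-point hypothesis becomes $\phi'(0)=0$, and the strict monotonicity of $\phi'$ (which you justify properly, avoiding the limit-of-difference-quotients trap that would lose strictness) delivers $\phi(1)>\phi(0)$.

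As for comparison with the paper: there is nothing to compare. This theorem appears in Appendix~A, which the paper describes as a compilation of background results ``from various areas of mathematics that are used or supposed to be satisfied for the proofs in Sect.~\ref{sxn:theory}''; it is stated there without proof and attributed to standard references. Your proof is a perfectly acceptable textbook argument for the result.
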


\begin{proposition}[Lipschitz continuity] Given two metric spaces ($X, d_X$) and ($Y, d_Y$), where $d_X$ denotes the metric on the set $X$ and $d_Y$ is the metric on set $Y$ (for example, the metric $d_X(x_1, x_2) = \|x_1-x_2\|$), a function $f: X \to Y$ is called Lipschitz continuous if there exists a real constant $\gamma \geq 0 $ such that, for all $x_1$ and $x_2$ in $X$,
$d_Y(f(x_1), f(x_2)) \le \gamma  d_X(x_1, x_2)$. Any such $\gamma $ is referred to as a Lipschitz constant for the function $f$.
\label{apdxprpsn7}\end{proposition}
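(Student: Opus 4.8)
The statement is a definition rather than an assertion carrying hypothesis-to-conclusion content: it introduces the notion of Lipschitz continuity and the terminology ``Lipschitz constant,'' so in the strict sense there is nothing to deduce. What can genuinely be \emph{proved}, and what gives the terminology its force, are the two structural facts that make the definition well posed. My plan is therefore to establish those two facts, treating them as the substantive content hiding behind the definitional wrapper.

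First I would show that the set of admissible constants is an upward-closed half-line. If $\gamma \ge 0$ satisfies $d_Y(f(x_1),f(x_2)) \le \gamma\, d_X(x_1,x_2)$ for all $x_1,x_2 \in X$, then trivially every $\gamma' \ge \gamma$ works as well, so the set $\Lambda = \{\gamma \ge 0 : d_Y(f(x_1),f(x_2)) \le \gamma\, d_X(x_1,x_2)\ \forall\, x_1,x_2\}$ is either empty or of the form $[\gamma_0,\infty)$. I would then identify the endpoint explicitly as the supremum of the difference quotients,
\[
\gamma_0 = \sup_{x_1 \neq x_2} \frac{d_Y(f(x_1),f(x_2))}{d_X(x_1,x_2)},
\]
by a two-sided estimate: every $\gamma \in \Lambda$ dominates each quotient and hence their supremum, while $\gamma_0$ itself lies in $\Lambda$ since $d_Y(f(x_1),f(x_2)) \le \gamma_0\, d_X(x_1,x_2)$ holds by the definition of the supremum (and vacuously when $x_1=x_2$). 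This confirms that speaking of ``a Lipschitz constant'' is meaningful and that a smallest one exists whenever $f$ is Lipschitz.

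Second, I would record the basic consequence that justifies classifying the property as a mode of continuity: Lipschitz continuity implies uniform continuity, and a fortiori continuity. Given $\varepsilon > 0$, the choice $\delta = \varepsilon/(\gamma+1)$ forces $d_Y(f(x_1),f(x_2)) \le \gamma\, d_X(x_1,x_2) < \varepsilon$ whenever $d_X(x_1,x_2) < \delta$, uniformly in $x_1,x_2$, which is exactly what the definition is meant to encode and what the paper relies on when it assumes $J(\theta)$ to be Lipschitz in \thref{theorem1} and \lmref{lemma2}.

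The main obstacle here is interpretive rather than technical: because the statement is purely a definition, the real decision is \emph{what proving it should mean}, and I would resolve this by committing to the well-posedness of the minimal constant together with the continuity consequence. Once that choice is made, each step reduces to a one-line metric-space estimate. The only genuine subtlety is the degenerate case $x_1 = x_2$, where both sides vanish and the defining inequality holds vacuously; this case must be excluded from the difference-quotient supremum but reinstated when verifying $\gamma_0 \in \Lambda$, and keeping that bookkeeping consistent is the sole point at which carelessness could creep in.
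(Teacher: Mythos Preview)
Your reading is exactly right: the statement is a definition, and the paper treats it as such. In the appendix the paper simply \emph{states} this proposition as background material with no accompanying proof or commentary; there is nothing deduced, because there is nothing to deduce. So in the sense of ``matching the paper,'' the correct response is simply to recognize it as a definition and stop.

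What you add---the upward-closed structure of the admissible constants, the existence and explicit formula for the least Lipschitz constant, and the implication Lipschitz $\Rightarrow$ uniformly continuous---is all mathematically sound and cleanly argued, but it is genuinely extra relative to the paper. The paper never isolates a minimal $\gamma$ nor invokes uniform continuity; its uses of Lipschitz continuity (in \lmref{lemma2} and \thref{theorem1}) only require the raw inequality $\|J(\phi)-J(\theta)\|\le\gamma\|\phi-\theta\|$ for \emph{some} fixed $\gamma$, which is read straight off the definition. Your supplementary facts are correct and standard, and your handling of the $x_1=x_2$ degeneracy is careful, but they constitute commentary rather than a proof the paper omitted.
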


\begin{theorem}[Intermediate value theorem] Consider an interval $I = [a, b] \subset R$ and a continuous function $f:I\to R$. Then if $u$ is a number between $f(a)$ and $f(b)$, and
$f(a) \leq u \leq f(b)$ or $f(a) \geq u \geq f(b)$, then there is a $c \in [a, b]$
such that $f(c) = u$.
\label{apdxprpsn7a}\end{theorem}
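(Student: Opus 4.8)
The plan is to prove the statement by the least-upper-bound (completeness) method, which is the most economical route and isolates exactly where continuity and the completeness of $R$ each enter. First I would dispose of the two hypotheses symmetrically: assuming $f(a) \le u \le f(b)$, the other case $f(a) \ge u \ge f(b)$ follows at once by applying the result to $-f$ and $-u$, so I only treat the first inequality.

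Next I would introduce the set $S = \{ x \in [a,b] : f(x) \le u \}$. This set is nonempty, since $a \in S$ because $f(a) \le u$, and it is bounded above by $b$. Hence by the completeness of the real numbers the supremum $c = \sup S$ exists and lies in $[a,b]$. The claim is that $f(c) = u$, which I would establish by excluding the two strict inequalities.

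To rule out $f(c) > u$: continuity of $f$ at $c$ gives a $\delta > 0$ such that $f(x) > u$ for every $x \in (c-\delta, c+\delta) \cap [a,b]$; since $f(a) \le u < f(c)$ forces $c > a$, there are points immediately to the left of $c$ on which $f > u$, so none of them lie in $S$, and $c-\delta/2$ is then a strictly smaller upper bound for $S$, contradicting $c = \sup S$. To rule out $f(c) < u$: here $f(b) \ge u > f(c)$ forces $c < b$, and continuity supplies points immediately to the right of $c$, still inside $[a,b]$, on which $f < u$; such points belong to $S$ and exceed $c$, contradicting that $c$ is an upper bound. Both strict cases being impossible, $f(c) = u$.

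I expect the only delicate step to be the careful treatment of the endpoints, namely verifying that $c$ is interior whenever the relevant strict inequality is assumed, so that the neighborhood supplied by continuity actually reaches the correct side of $c$ within $[a,b]$; everything else is routine once completeness is invoked. An equivalent alternative I could use instead is repeated bisection, producing nested intervals $[a_n,b_n]$ with $f(a_n) \le u \le f(b_n)$ and length $(b-a)/2^n$: their common point $c$ again satisfies $f(c) = u$ by continuity, with the nested-interval property playing the same completeness role as the supremum above.
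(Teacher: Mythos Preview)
Your proof is correct and follows the standard completeness (supremum) argument for the Intermediate Value Theorem; the endpoint care you flag is handled properly. Note, however, that the paper does not actually prove this statement: it appears in Appendix~A merely as one of the background results ``compiled here'' from standard analysis, with no accompanying argument. So there is no paper proof to compare against; you have simply supplied a valid textbook proof where the paper chose to cite the result without demonstration.
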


\begin{theorem}[Sufficient condition for existence of extremum] Suppose that $x^0 = (x^0_1, x^0_2, \cdots, x^0_n)$ is the stable point of function $y=f(x) = f(x_1, x_2, \cdots, x_n)$, moreover, in the neighborhood of the stable point $x^0$, function $f(x)$ has definition, continuous, and has the continuous first and second partial derivatives. Introduce a symbol
$$y^{0}_{x^{p_1}_1, x^{p_2}_2, \cdots, x^{p_n}_n} \equiv
\left( \frac{\partial^k y}{\partial x^{p_1}_1, \partial x^{p_2}_2, \cdots, \partial x^{p_n}_n} \right)_{x^0}, ~~k=\sum\limits^{n}_{i=1} p_{i}, $$
the superscript $0$ indicates that the partial derivatives are calculated at point $x^0$. Define the determinant $D_i$ as
\[
D_i=\left|
\begin{array}{cccc}
y^0_{x_1^2}  &y^0_{x_1x_2} &\cdots & y^0_{x_1x_i} \\
y^0_{x_2x_1} &y^0_{x_2^2}  &\cdots & y^0_{x_2x_i} \\
\vdots&\vdots&\ddots&\vdots \\
y^0_{x_ix_1} &y^0_{x_ix_2} &\cdots & y^0_{x_i^2}
\end{array}
\right|,
\]
For $n$ variables, the $n$ determinants $D_1, D_2, \cdots, D_n$ are calculated in turn, then
\begin{enumerate}
\item The sufficient condition for the stable point $x^0$ to be the minimizer is that all determinants are positive, that is
    $$D_i> 0, i=1,2,\cdots, n;$$
\item The sufficient condition for the stable point $x^0$ to be the maximizer is that all even determinants are positive and all odd determinants are negative, that is
\[\begin{array}{ll}
D_i > 0, & i=1,3,5, \cdots , \\
D_i < 0, & i=2,4,6, \cdots . \\
\end{array} \]
\end{enumerate}
If above two conditions are not satisfied, then the stable point may be not the extremum point. If all $D_i$ are zero, the higher order of derivative has to be considered.

\label{apdxprpsn8}\end{theorem}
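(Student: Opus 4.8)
The plan is to reduce this statement to the classical second-order sufficient condition by marrying a second-order Taylor expansion at the stable point with Sylvester's criterion, which is precisely the bridge that converts the sign pattern of the leading principal minors $D_i$ into a statement about the definiteness of the Hessian. Throughout, write $G$ for the Hessian matrix whose $(i,j)$ entry is $y^0_{x_ix_j}$, so that $D_i$ is the $i$-th leading principal minor of $G$.

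First I would expand $f$ about $x^0$ to second order. Since a stable point means $\nabla f(x^0)=0$, the first-order term drops out and the increment reads
$$ f(x^0+h)-f(x^0)=\tfrac12\, h^T G\, h + o(\|h\|^2), $$
where the remainder is genuinely $o(\|h\|^2)$ because $f$ has continuous second partial derivatives in a neighborhood of $x^0$, which is exactly the hypothesis assumed. Thus, for small $h$, the sign of the increment is governed by the quadratic form $Q(h)=h^T G\,h$, and the entire theorem reduces to deciding when $Q$ is positive definite (giving a minimizer) or negative definite (giving a maximizer).

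Next I would connect definiteness to the determinants $D_i$ via Sylvester's criterion: $G$ is positive definite if and only if $D_i>0$ for every $i$. I would derive this from the Cholesky factorization \thref{apdxprpsn5}: for a symmetric positive definite $G=\hat L D \hat L^T$ with $\hat L$ unit lower triangular and $D=\mathrm{diag}(d_1,\dots,d_n)$, each $k$-th leading principal submatrix factors in the same way, whence $D_k=\prod_{i=1}^k d_i$; so $D_i>0$ for all $i$ is equivalent to $d_i>0$ for all $i$, i.e. to positive definiteness. The reverse implication I would carry out by induction on $k$, using $d_k=D_k/D_{k-1}>0$ to promote positive definiteness of the $(k-1)$-th leading submatrix to the $k$-th. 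For the maximizer case I would apply the very same criterion to $-G$, whose $i$-th leading principal minor is $(-1)^iD_i$; hence $-G$ is positive definite (equivalently $G$ is negative definite) exactly when $(-1)^iD_i>0$ for all $i$, which is the stated alternating pattern: odd $D_i$ negative and even $D_i$ positive.

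Finally I would close the analytic loop. If $G$ is positive definite, then by the orthogonal diagonalization \thref{apdxprpsn2} its smallest eigenvalue $\lambda>0$ and $Q(h)\ge \lambda\|h\|^2$ for all $h$; combined with the Taylor remainder being $o(\|h\|^2)$, for $\|h\|$ small and nonzero we obtain $f(x^0+h)-f(x^0)\ge \tfrac{\lambda}{4}\|h\|^2>0$, so $x^0$ is a strict local minimizer, and the negative definite case is symmetric. The hard part will be establishing Sylvester's criterion cleanly — in particular the reverse implication and the sign bookkeeping $(-1)^iD_i$ for the negative-definite branch — together with securing the uniform lower bound on $Q$ that is needed to dominate the remainder; once these are in hand, the closing remarks follow immediately, since when some $D_i=0$ the quadratic form $Q$ degenerates and can no longer fix the sign of the increment, so the test is inconclusive and higher-order terms must be examined.
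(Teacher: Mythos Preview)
Your proposal is correct and follows the standard classical route: second-order Taylor expansion at the critical point, then Sylvester's criterion (derived via the $\hat L D \hat L^T$ factorization of \thref{apdxprpsn5}) to translate the sign pattern of the leading principal minors into definiteness of the Hessian, and finally the eigenvalue lower bound from \thref{apdxprpsn2} to dominate the $o(\|h\|^2)$ remainder. There is nothing to compare it against in the paper: this theorem appears only in Appendix~A, which the author explicitly describes as a compilation of subsidiary mathematical materials quoted without proof, and no argument is supplied there.

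Two minor remarks. First, your inductive step for the reverse implication of Sylvester (promoting definiteness from the $(k{-}1)$-th to the $k$-th leading block via $d_k=D_k/D_{k-1}$) tacitly assumes the $k$-th block already admits an $\hat L D \hat L^T$ factorization; strictly, one should note that $D_{k-1}\neq 0$ guarantees the elimination step that produces $d_k$ goes through, and then positivity of $d_k$ yields definiteness of the $k$-th block. Second, you correctly obtain the alternating pattern ``odd $D_i<0$, even $D_i>0$'' for a maximizer; note that the displayed inequalities in the paper's statement have the parity reversed relative to the accompanying prose, and your derivation agrees with the prose (and with the standard result).
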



\end{document}